\documentclass[11pt]{article}

\usepackage[margin=1in]{geometry}
\usepackage[hidelinks]{hyperref}
\usepackage{url}
\usepackage{booktabs}
\usepackage{amsfonts,amssymb}
\usepackage{amsmath,amsthm}
\usepackage{mathtools}
\usepackage{graphicx}
\usepackage{microtype}
\usepackage[round]{natbib}

\newcommand{\Pbb}{\mathbb{P}}
\newcommand{\E}{\mathbb{E}}

\DeclareMathOperator{\ord}{ord}
\newcommand{\leanref}[1]{}
\newcommand{\arxivversion}{}

\newtheorem{theorem}{Theorem}
\newtheorem{proposition}{Proposition}
\newtheorem{lemma}{Lemma}

\theoremstyle{remark}
\newtheorem{remark}{Remark}

\title{Sequential Conformal Calibration\\under Recursive Selection}
\author{Yuning Yang$^{1}$ \qquad Zexuan Liu$^{2}$\\
  Tianci Wang$^{3}$ \qquad Danlin Li$^{1}$\\[0.7em]
  \small $^{1}$Georgia Institute of Technology\\
  \small $^{2}$University of Illinois at Urbana-Champaign\\
  \small $^{3}$Shanghai Jiao Tong University}
\date{}

\begin{document}
\maketitle

\begin{abstract}
Conformal prediction provides finite-sample, distribution-free coverage guarantees that can support decision-making in specific settings. In multistage problems, earlier decisions determine the populations relevant to later calibration. When later scores are refit using earlier calibration results, reusing those cases for ranking need not preserve exchangeability with a surviving test case. We study sequential conformal calibration with a fresh fold at each stage: conditional on earlier folds, the selected fold scores and the surviving test score are exchangeable. This yields finite-sample coverage guarantees for every prefix of the decision sequence. For atomless survivor score distributions, we derive a universal \emph{Binomial-thinned Beta chain} for the realized coverage process. Its law depends only on fold sizes and target levels, even under predictable refitting, and yields exact coverage slack, abstention probabilities, and calibration-conditional guarantees. We further characterize the tradeoff between freshness and data reuse. For fixed scores, strict-survivor reuse is exactly valid. For adaptive scores refit using past calibration data, reusing survivors can incur worst-case rank loss, while blocked designs provide an intermediate option between full reuse and stagewise fresh folds. As a case study, we apply the framework to serial multi-echelon inventory problem and propose a sequential conditional-quantile formulation. Each stock level is learned as a context-dependent quantile of cumulative demand among scenarios satisfying earlier no-backlog constraints. We connect these targets to common-shift stationarity and derive decision-level implications for the resulting calibrated policies. In a three-echelon simulation, fresh-fold refitting achieves 81.98\% final-prefix coverage against an 81.82\% target while reducing mean cost by 5.5\% relative to the Bonferroni baseline.
\end{abstract}

\section{Introduction}
\label{sec:introduction}

Conformal prediction makes it possible to turn contextual predictions into decisions with finite-sample, distribution-free coverage guarantees without requiring a correctly specified predictor; \citet{cao2024conformal} uses it for newsvendor
ordering under misspecification. However, many operational problems require several
nested constraints to hold together. For example, in serial multi-echelon inventory, each stock level affects the decisions made at later echelons. Once an earlier stock level is chosen, only some demand scenarios remain relevant for the next stage. As a result, each later stock level must be learned and calibrated using a set of scenarios determined by the earlier decisions. Later predictors must then be calibrated against
targets that depend on earlier \emph{learned} thresholds.

We study \emph{sequential conformal calibration}: learning a sequence of
thresholds when each stage is calibrated among the \emph{survivors} of the
earlier learned constraints. The stages impose nested constraints on the same statistical unit, rather than observing new data over time. An early calibrated threshold determines which cases remain relevant and may also guide the next score fit. If those earlier calibration cases are reused for later ranking, the fitted score can depend on them in a way that does not treat a surviving test case symmetrically. The resulting survivor-dependent refitting introduces a selection bias that is not addressed by existing fixed-score guarantees, including COINS~\citep{li2026coins}.

Our construction assigns a fresh fold to each stage. Building on this construction, our contributions are fourfold:
\begin{enumerate}
\renewcommand{\labelenumi}{(\roman{enumi})}

\item \emph{Calibration under recursive selection.}
We show that fresh-fold calibration remains valid under recursive, data-dependent selection, even when scores are refit using earlier calibrated decisions. The resulting stagewise guarantees compose into prefix coverage guarantees and extend to any exchangeability-valid stage calibrator, including group-conditional variants (Theorem~\ref{thm:coverage}; Proposition~\ref{prop:group-profile}).

\item \emph{Exact distribution-free coverage law.}
Under atomless survivor score distributions(so ties occur with probability zero), we characterize the realized coverage process as a Binomial-thinned Beta chain 
depending only on the fold sizes $(N_k)$ and target levels $(\beta_k)$.
This yields exact finite-sample laws for rank slack, abstention probabilities, calibration-conditional prefix coverage, and necessary calibration budgets (Theorem~\ref{thm:coverage-law}; Proposition~\ref{prop:calibration-budget}).

\item \emph{The price of freshness.}
We quantify the tradeoff between fresh-fold calibration and survivor reuse. For fixed scores, strict-survivor reuse is exactly valid and requires a calibration budget determined by the most demanding stage. For adaptive scores refit using past calibration data, however, each reused survivor can reduce the effective calibration rank by one; fresh folds restore validity at the cost of a calibration budget that adds across stages. Blocked reuse provides an intermediate design by reusing calibration data between refits and introducing a fresh fold only after each refit (Propositions~\ref{prop:reuse-boundary} and~\ref{prop:blocked-reuse}).

\item \emph{Decision implications for serial multi-echelon inventory.}
We reformulate the classical Clark--Scarf service equations \citep{clark1960optimal,vanhoutum2006multiechelon} for serial multi-echelon inventory as sequential conditional-quantile targets, directly linking the inventory problem to our calibration framework. We show that these targets make each truncated system stationary with respect to a common shift of its base-stock levels (Proposition~\ref{prop:common-shift}), and derive an excess-cost bound that depends on the variation of the learner's bias (Theorem~\ref{thm:unified-cost}). Experiments compare the theoretical predictions with simulation, and a deterministic core of the theory is verified in Lean (Appendix~\ref{app:lean-verification}).

\end{enumerate}

\section{Problem Setup}
\label{sec:formulation}

We consider $K$ decision stages for one statistical unit $Z=(X,Y)$,
where $X$ is the context known before deciding and $Y=(Y_1,\ldots,Y_K)$ contains the outcomes
used to evaluate decisions. The components of $Y$ may be dependent.
An independent training sample fits predictors; calibration cases
$Z_1,\ldots,Z_N$ and a new test case $Z$ are i.i.d. from an unknown law $P$
and independent of training. Write $\mathcal Z=\mathcal X\times\mathcal Y$.
\footnote{Data spaces are standard Borel and scores measurable, ensuring
the conditional laws used below exist.}

A score $R_k(z)$ and learned threshold $\hat t_k$ define stage-$k$
success by $R_k(z)\le\hat t_k$. The decision is made using \(X\); the outcome \(Y\) is used afterward to evaluate whether it succeeds. For demand $Y_k$ and predicted quantile $\hat q_k(x)$,
\[
 R_k(x,y)=y_k-\hat q_k(x),\qquad
 \hat s_k(x)=\hat q_k(x)+\hat t_k,
\]
so success means $Y_k\le\hat s_k(X)$. Section~\ref{sec:general} also allows stage scores to be updated using earlier calibration folds.

Let $E_k$ be the set of cases that satisfy all of the first $k$
requirements:
\begin{equation}
 E_0=\mathcal Z,\qquad
 E_k=\{z\in E_{k-1}:R_k(z)\le\hat t_k\}.
 \label{eq:prefix-events}
\end{equation}
We write $E_k(Z)$ for test success through stage $k$. Cases in \(E_{k-1}\) are the survivors entering stage \(k\); which cases survive depends on the earlier learned thresholds. Given prefix targets $1=a_0>a_1>\cdots>a_K>0$, set
$\beta_k=a_k/a_{k-1}$ and $\varepsilon_k=a_{k-1}-a_k$.
Thus $\beta_k$ is the target success probability among earlier survivors;
for example, $a_1=.90$ and $a_2=.81$ give $\beta_1=\beta_2=.90$.
Our goal is to learn thresholds satisfying
\begin{equation}
 \Pbb\{E_k(Z)\}\ge a_k=\prod_{j=1}^k\beta_j,
 \qquad k=1,\ldots,K.
 \label{eq:target-profile}
\end{equation}

Equation~\eqref{eq:target-profile} gives marginal prefix coverage, where the probability is taken jointly over the training sample, the calibration sample, and an independent test case. Thus, the guarantee is averaged over repeated datasets rather than conditional on a particular realized calibration sample.

A difficulty arises if the same calibration observations are used both to determine an earlier threshold and, after being selected by that threshold, to calibrate a later stage. Those surviving calibration observations have influenced the rule that selected them, whereas a new test case has not. This asymmetry can invalidate the exchangeability underlying the usual conformal rank argument (Section~\ref{ssec:reuse}). We therefore first study a construction that uses a fresh calibration fold at each stage.

\section{Method}
\label{sec:general}

Split the calibration indices, independently of their values, into disjoint
folds $\mathcal I_1,\ldots,\mathcal I_K$ of fixed sizes $N_1,\ldots,N_K$.
Let $\mathcal F_0$ contain the training data, the fold allocation, and
independent algorithmic randomness. After stage $k$, set
$\mathcal F_k=\sigma(\mathcal F_0,Z_i:i\in\bigcup_{j\le k}\mathcal I_j)$.
The score \(R_k(\cdot)\), including any predictor used to construct it, is chosen using only information in \(\mathcal F_{k-1}\). It may therefore use the training data and earlier calibration folds, but not observations in fold \(k\) or later folds. We call this condition \textit{predictability}.

At stage $k$, retain the cases in fold \(\mathcal I_k\) that satisfy the preceding requirements:
\[
 \mathcal K_k=\{i\in\mathcal I_k:E_{k-1}(Z_i)\},\qquad
 M_k=|\mathcal K_k|.
\]
Set the stage-\(k\) threshold to the following augmented order statistic of the survivor scores:
\begin{equation}
 \begin{aligned}
 r_k&=\lceil(M_k+1)\beta_k\rceil,\\
 \hat t_k&=\ord_{r_k}
 \bigl(\{R_k(Z_i):i\in\mathcal K_k\}\cup\{+\infty\}\bigr).
 \end{aligned}
 \label{eq:generic-threshold}
\end{equation}
Here, $\operatorname{ord}_r$ denotes the $r$-th smallest
value, with repeated values counted separately. After computing
$\hat t_k$, we update $E_k$ according to
\eqref{eq:prefix-events} and continue to the next fold.

If $r_k=M_k+1$, the augmented order statistic gives \(\hat t_k=+\infty\). In this case, the survivor count is insufficient for a finite threshold at level \(\beta_k\); this includes \(M_k=0\).
Calibration continues on the remaining folds, which may
still yield finite thresholds. Because \(E_k\subseteq E_{k-1}\), the probability of surviving to later stages cannot increase. This reduction should be considered when allocating fold sizes
(Proposition~\ref{prop:calibration-budget}, Appendix~\ref{app:design}).

For residual scores, the procedure returns the decision rules
\[
\hat s_k(x)=\hat q_k(x)+\hat t_k,
\qquad k=1,\ldots,K.
\]
The coverage guarantees require no assumptions on the
prediction accuracy of $\hat q_k$.

\section{Theory}
\label{sec:properties}
\subsection{Coverage}
\label{ssec:coverage-guarantees}
\begin{lemma}[Exchangeability after predictable selection]
\label{lem:survivor-exchangeability}
Fix a history $\mathcal F_{k-1}$ with
$p_{k-1}:=\Pbb(E_{k-1}(Z)\mid\mathcal F_{k-1})>0$.
Conditional on this history, $M_k=m$, and $E_{k-1}(Z)$, the $m$
selected fold scores and the test score are i.i.d. from the law of
$R_k(Z)$ given $\mathcal F_{k-1}$ and $E_{k-1}(Z)$.
\end{lemma}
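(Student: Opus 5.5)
The plan is to reduce the claim to an elementary fact about i.i.d. samples conditioned on membership in a fixed set. Conditioning on $\mathcal F_{k-1}$ fixes the score $R_k(\cdot)$, the earlier thresholds $\hat t_1,\ldots,\hat t_{k-1}$, and hence the set $E_{k-1}\subseteq\mathcal Z$. By predictability, all of these are $\mathcal F_{k-1}$-measurable. The fold $\mathcal I_k$ is fixed by the allocation in $\mathcal F_0$. The observations $\{Z_i:i\in\mathcal I_k\}$ and the test case $Z$ are functions of data that are independent of $\mathcal F_{k-1}$: the folds are disjoint, the draws are i.i.d., and training and algorithmic randomness are independent of the calibration and test data. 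Therefore, conditional on $\mathcal F_{k-1}$, the $N_k+1$ variables $(Z_i)_{i\in\mathcal I_k}$ and $Z$ remain i.i.d. with law $P$. To make this rigorous, I would work with a regular conditional distribution, which exists because the spaces are standard Borel. Equivalently, I would freeze the realized history $h$ and use the independence lemma (substitution/freezing): for a measurable $g$, $\E[g(W,H)\mid H=h]=\E[g(W,h)]$ when $W$ is independent of $H$.

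With the history frozen, let $A=E_{k-1}$, a fixed measurable set with $P(A)=p_{k-1}>0$, and let $f=R_k$ be a fixed measurable map. The indicators $\mathbf 1\{Z_i\in A\}$ are i.i.d. Bernoulli$(p_{k-1})$, independent of $\mathbf 1\{Z\in A\}$. I would fix a subset $S\subseteq\mathcal I_k$ with $|S|=m$ and condition on the event
\[
B_S=\{\mathcal K_k=S\}\cap\{Z\in A\}.
\]
This event has positive probability $p_{k-1}^{m+1}(1-p_{k-1})^{N_k-m}$. Because $B_S$ is an intersection of events that each involve a single coordinate, and the coordinates are independent, conditioning on $B_S$ preserves independence. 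Under this conditioning, each $Z_i$ with $i\in S$, as well as $Z$, has law $P(\cdot\mid A)$. The non-selected cases have law $P(\cdot\mid A^c)$ and are independent of the selected ones. This is a direct product-measure computation: for measurable $C_i\subseteq A$, $\Pbb(\bigcap_{i\in S}\{Z_i\in C_i\},Z\in C,B_S)$ factorizes into a product of terms. Applying $f$ coordinatewise, the scores $\{R_k(Z_i):i\in S\}\cup\{R_k(Z)\}$ are i.i.d. with law $f_\#P(\cdot\mid A)$. This is exactly the law of $R_k(Z)$ given $\mathcal F_{k-1}$ and $E_{k-1}(Z)$.

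Finally, the event $\{M_k=m\}\cap E_{k-1}(Z)$ is the disjoint union of the $B_S$ over $|S|=m$. The conditional law given each $B_S$ is the same product law, once the selected scores are listed in a canonical order, such as increasing index within $S$. So the mixture over $S$ gives the same i.i.d. law conditional on $M_k=m$ and $E_{k-1}(Z)$. This proves the lemma.

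The main obstacle is the first step. It requires justifying carefully that conditioning on the history leaves fold $k$ and the test case i.i.d. with law $P$, even though $E_{k-1}$ and $R_k$ were built adaptively from earlier folds. This step is exactly where predictability and the disjointness of the folds are used. I would handle it with the freezing lemma rather than with informal conditioning, so that the subsequent computation for a fixed set is fully justified.
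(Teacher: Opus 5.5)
Your proposal is correct and follows essentially the same route as the paper. Both arguments freeze the history so that $E_{k-1}$ and $R_k$ are fixed, condition on the exact survivor set to obtain the product law $Q^{\otimes m}$ from independence of the raw units, mix over all index sets of size $m$, and append the independent test unit. Your explicit freezing and regular-conditional-law argument makes rigorous a step the paper states in one line. One small correction: $B_S$ has positive probability only for counts $m$ of positive conditional probability (for example, not for $m<N_k$ when $p_{k-1}=1$), which is the implicit restriction in the statement.
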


Conditional on \(\mathcal F_{k-1}\), both \(E_{k-1}\) and \(R_k\) are fixed, while fold \(\mathcal I_k\) remains independent (Appendix~\ref{app:proof-survivor-exchangeability}).

\begin{theorem}[Sequential coverage]
\label{thm:coverage}
\leanref{sequential_coverage_prefix_lower,sequential_coverage_stage_rank}
Under the sampling and information restrictions of
Sections~\ref{sec:formulation}--\ref{sec:general}, suppose the score law
given $\mathcal F_{k-1}$ and $E_{k-1}(Z)$ is atomless whenever
$p_{k-1}>0$. For each count $m$ of positive conditional probability,
\begin{equation}
 \begin{aligned}
 &\Pbb\{R_k(Z)\le\hat t_k\mid
 E_{k-1}(Z),\mathcal F_{k-1},M_k=m\}\\
 &\quad=\frac{\lceil(m+1)\beta_k\rceil}{m+1}
 \in\left[\beta_k,\beta_k+\frac1{m+1}\right).
 \end{aligned}
 \label{eq:stage-sandwich}
\end{equation}
Consequently, for every $k$,
\begin{equation}
 \Pbb\{E_k(Z)\}\ge\prod_{j=1}^k\beta_j=a_k.
 \label{eq:joint-lower}
\end{equation}
With ties, the equality and upper bound in \eqref{eq:stage-sandwich}
may fail, but its lower bound and \eqref{eq:joint-lower} still hold.
\end{theorem}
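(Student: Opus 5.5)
The plan is to condition on the history $\mathcal F_{k-1}$ and reduce each stage to a standard split-conformal rank computation. Then I will chain the stagewise bounds by the tower property.

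\textbf{Stage equality.} Fix $\mathcal F_{k-1}$ with $p_{k-1}>0$ and a count $m$ of positive conditional probability. By Lemma~\ref{lem:survivor-exchangeability}, conditional on $\mathcal F_{k-1}$, $M_k=m$ and $E_{k-1}(Z)$, the $m$ selected scores $R_k(Z_i)$, $i\in\mathcal K_k$, and the test score $R_k(Z)$ are i.i.d. Their common law is atomless by hypothesis, so the $m+1$ values are almost surely distinct and the rank of $R_k(Z)$ among them is uniform on $\{1,\dots,m+1\}$.

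With $r=\lceil(m+1)\beta_k\rceil$, there are two cases.
\begin{itemize}
\item If $r\le m$, then $\hat t_k$ is the $r$-th smallest of the $m$ fold scores. The event $R_k(Z)\le\hat t_k$ holds exactly when the test rank among all $m+1$ values is at most $r$, which has probability $r/(m+1)$.
\item If $r=m+1$, then $\hat t_k=+\infty$ and the probability is $1=r/(m+1)$. This case covers $m=0$.
\end{itemize}
The sandwich follows from $(m+1)\beta_k\le r<(m+1)\beta_k+1$.

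\textbf{Ties.} Without atomlessness, I would use the standard argument for i.i.d.\ values that may tie. Call the test value's rank the number of the $m+1$ values strictly below it plus one. This rank is stochastically at most uniform, so $\Pbb\{R_k(Z)\le \ord_r\}\ge r/(m+1)\ge\beta_k$. Concretely, $R_k(Z)>\hat t_k$ forces at least $m+1-r+1$ of the values to be $\ge R_k(Z)$ in a way that is exchangeable. A cleaner route is to break ties with independent uniform auxiliary labels, which only makes the event harder to satisfy. Only the lower bound survives; this is the part of the proof needing care, and the first thing I would try is the tie-breaking route.

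\textbf{Chaining.} Averaging the stage equality over $M_k$ given $\mathcal F_{k-1}$ and $E_{k-1}(Z)$ gives
\[
\Pbb\{R_k(Z)\le\hat t_k\mid E_{k-1}(Z),\mathcal F_{k-1}\}\ge\beta_k .
\]
Multiplying by $p_{k-1}$ yields $\Pbb(E_k(Z)\mid\mathcal F_{k-1})\ge\beta_k\,\Pbb(E_{k-1}(Z)\mid\mathcal F_{k-1})$. This also holds trivially when $p_{k-1}=0$.

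Here I use that $E_k(Z)\subseteq E_{k-1}(Z)$ and that $\hat t_k$ is $\mathcal F_k$-measurable. I also need that $E_{k-1}(Z)$ is determined by $Z$ and $\mathcal F_{k-1}$, with $Z$ independent of all calibration data. Taking expectations gives $\Pbb\{E_k(Z)\}\ge\beta_k\Pbb\{E_{k-1}(Z)\}$.

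Induction from $\Pbb\{E_0(Z)\}=1$ then gives \eqref{eq:joint-lower}. The main subtlety is this conditioning bookkeeping: $E_{k-1}(Z)$ depends on thresholds that are random through earlier folds, which is exactly why we condition on $\mathcal F_{k-1}$ rather than treat those thresholds as fixed.
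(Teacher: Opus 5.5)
Your proof is correct. The stagewise step is the paper's own argument: Lemma~\ref{lem:survivor-exchangeability}, a uniform test rank, and the cases $r\le m$ and $r=m+1$. Where you differ is in how you chain the stages.

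You prove the one-step inequality $\Pbb(E_k(Z)\mid\mathcal F_{k-1})\ge\beta_k p_{k-1}$ and take expectations. Averaging over $M_k$ is harmless because every conditional value is at least $\beta_k$, whatever the mixing weights. The paper instead sets $V_j=\Pbb(R_j(Z)\le\hat t_j\mid E_{j-1}(Z),\mathcal F_j)$. It proves the realized-coverage factorization $\Pbb(E_k(Z)\mid\mathcal F_k)=\prod_{j\le k}V_j$, using that later folds are independent of the test unit. It then needs the conditional independence of $M_j$ and test survival to get $\E[V_j\mid\mathcal F_{j-1}]\ge\beta_j$. Your route is shorter and is all that \eqref{eq:joint-lower} requires. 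The paper's factorization is what Theorem~\ref{thm:coverage-law} builds on: $p_k=p_{k-1}V_k$, the Beta chain, (C1), and the calibration-conditional bounds.

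On ties, delete the sentence claiming that ``at least $m+1-r+1$'' values are $\ge R_k(Z)$. It runs the wrong way: $R_k(Z)>\hat t_k$ means at least $r$ calibration scores lie strictly below $R_k(Z)$. Your other two suggestions both work.

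\begin{itemize}
\item The min-rank claim is exactly the paper's counting argument, though you assert it without proof. Break ties arbitrarily and take the $r$ smallest of the $m+1$ scores. Each has at most $r-1$ scores strictly below it, so at least $r$ of the $m+1$ exchangeable indicators equal one, and each therefore has probability at least $r/(m+1)$.
\item For tie-breaking, draw the auxiliary uniforms independently of everything. The augmented pairs then stay i.i.d.\ under the conditioning, and the lexicographic test rank is uniform. The event that this rank is at most $r$ is contained in the event that at most $r-1$ calibration scores lie strictly below $R_k(Z)$, which is exactly $\{R_k(Z)\le\hat t_k\}$.
\end{itemize}
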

The proof, including ties, is in Appendix~\ref{app:proof-coverage}.

At each stage, the proof averages over the randomness in the
current calibration fold before proceeding to the next stage. For \(K=1\), (4) reduces to the usual single-stage rank guarantee.

We use $v_{1:K}=(v_1,\ldots,v_K)$ for vector prefixes.
For realized calibration data, define
\[
 p_k:=\Pbb(E_k(Z)\mid\mathcal F_k),\qquad p_0=1.
\]
When $p_{k-1}>0$, let the survivor score cdf be
\[
 \mathcal H_k(t):=\Pbb(R_k(Z)\le t\mid\mathcal F_{k-1},E_{k-1}(Z)),
\]
and put $V_k:=\mathcal H_k(\hat t_k)$, the conditional stage success
probability. Set $\mathcal H_k(+\infty)=1$ and, when $p_{k-1}=0$, $V_k=1$.
For a count $m$, define the rank and its rounding slack by
\[
\begin{aligned}
 r_k(m)&=\lceil(m+1)\beta_k\rceil,\\
 \theta_k(m)&=\frac{r_k(m)}{m+1}-\beta_k\in\left[0,\frac1{m+1}\right).
\end{aligned}
\]

\begin{theorem}[Universal coverage law]
\label{thm:coverage-law}
\leanref{universal_coverage_law_slack_identity,universal_coverage_law_abstention_iff,universal_coverage_law_m_star}
Under the assumptions of Theorem~\ref{thm:coverage}, including atomless
survivor score laws, for every $k$:
(i) $p_k=p_{k-1}V_k$ and $\Pbb(E_k(Z)\mid\mathcal F_K)=p_k$ almost surely;
(ii) conditional on $\mathcal F_{k-1}$,
\begin{equation}
 M_k\sim\operatorname{Binomial}(N_k,p_{k-1});
 \label{eq:survivor-budget}
\end{equation}
(iii) conditional on $\mathcal F_{k-1}$ and $M_k=m$,
$V_k\sim\mathrm{Beta}(r_k(m),m+1-r_k(m))$ if $r_k(m)\le m$, and $V_k=1$ if
$r_k(m)=m+1$. Hence $(M_k,p_k)_{k\le K}$ is a Markov chain whose kernel
depends only on $(N_k,\beta_k)$; its law is the same for every $P$, every
predictable (fixed or refitted) score sequence, and every atomless survivor
score law. Consequently:
\begin{itemize}
\item[(C1)] $\Pbb(E_k(Z))-a_k=\sum_{j=1}^k(a_k/a_j)\,
\E[p_{j-1}\theta_j(M_j)]$.
\item[(C2)] Fixed-score and refitted procedures with the same
$(N_k,\beta_k)$ have the same law of $(p_k)_k$, hence the same distribution of realized prefix coverage.
\item[(C3)] $\hat t_k=+\infty$ if and only if $M_k<m_k^*
:=\lceil\beta_k/(1-\beta_k)\rceil=\lceil a_k/\varepsilon_k\rceil$, and the
joint law of the abstention indicators is universal.
\item[(C4)] For all tolerances $\tau_{1:K}$,
$\Pbb(p_k\ge a_k-\tau_k\ \forall k)$ equals its value under the chain,
so it is distribution-free and computable by simulating the chain.
\end{itemize}
\end{theorem}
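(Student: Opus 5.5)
The plan is to work one stage at a time, conditioning on $\mathcal F_{k-1}$. Under that conditioning the set $E_{k-1}$, the score $R_k$, the survivor probability $p_{k-1}$ and the cdf $\mathcal H_k$ are all fixed, and fold $\mathcal I_k$ is still i.i.d.\ from $P$ and independent of everything in $\mathcal F_{k-1}$.

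\textbf{Survivor count and threshold law.}
\begin{itemize}
\item Part (ii) is immediate. Each $i\in\mathcal I_k$ survives independently with probability $\Pbb(E_{k-1}(Z)\mid\mathcal F_{k-1})=p_{k-1}$, so $M_k\sim\operatorname{Binomial}(N_k,p_{k-1})$.
\item For (iii), invoke Lemma~\ref{lem:survivor-exchangeability}. Given $M_k=m$, the $m$ survivor scores are i.i.d.\ with cdf $\mathcal H_k$, which is continuous by the atomless assumption.
\item The probability integral transform then makes $U_i=\mathcal H_k(R_k(Z_i))$ i.i.d.\ uniform. Because $\mathcal H_k$ is monotone, $V_k=\mathcal H_k(\hat t_k)$ equals the $r_k(m)$-th order statistic of the $U_i$ whenever $r_k(m)\le m$.
\item This needs care: $\mathcal H_k(\text{ord}_r(R))=\text{ord}_r(\mathcal H_k(R))$ holds for any nondecreasing map, with no injectivity required. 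So $V_k\sim\mathrm{Beta}(r,m+1-r)$. If $r_k(m)=m+1$, then $\hat t_k=+\infty$ and $V_k=1$ by convention.
\item Since $\beta_k$ and $N_k$ are deterministic, this kernel does not depend on $P$ or on the score sequence.
\end{itemize}

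\textbf{Part (i).}
\begin{itemize}
\item $E_k(Z)=E_{k-1}(Z)\cap\{R_k(Z)\le\hat t_k\}$, and $\hat t_k$ is $\mathcal F_k$-measurable. The test point $Z$ is independent of the calibration data.
\item Hence $\Pbb(E_k(Z)\mid\mathcal F_k)=\Pbb(E_{k-1}(Z)\mid\mathcal F_{k-1})\cdot\mathcal H_k(\hat t_k)=p_{k-1}V_k$.
\item The step that needs justification is that $E_{k-1}$ depends only on $\mathcal F_{k-1}$. Adding fold $k$ does not change the conditional law of the independent $Z$, so the conditional probability given $\mathcal F_{k-1}$ is the same as given $\mathcal F_k$.
\item The identity $\Pbb(E_k(Z)\mid\mathcal F_K)=p_k$ follows the same way, since $E_k$ is $\mathcal F_k$-measurable and later folds are independent of $Z$.
\item If $p_{k-1}=0$, both sides vanish.
\end{itemize}

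\textbf{Markov property and universality.} Given $(M_j,p_j)_{j<k}$, the pair $(M_k,p_k)$ is drawn as follows. First $M_k\sim\operatorname{Bin}(N_k,p_{k-1})$. Then $V_k$ is drawn from the Beta or point-mass law determined by $M_k$, and $p_k=p_{k-1}V_k$. This kernel uses only $p_{k-1}$, $N_k$ and $\beta_k$. To make it rigorous, I would show that the conditional law given $\mathcal F_{k-1}$ is a function of $p_{k-1}$ alone, and then apply the tower property to pass to the coarser filtration generated by the chain. The law is therefore the same across every $P$ and every predictable score sequence.

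\textbf{Consequences.}
\begin{itemize}
\item \textbf{(C2)} and \textbf{(C4)} follow directly. By (i), realized prefix coverage is $p_k$, and events about $(p_k)$ have chain-determined probabilities.
\item \textbf{(C3).} Abstention occurs iff $\lceil(m+1)\beta_k\rceil=m+1$, i.e.\ $(m+1)\beta_k>m$, i.e.\ $m<\beta_k/(1-\beta_k)$, i.e.\ $m<m_k^*$. Using $\beta_k/(1-\beta_k)=a_k/\varepsilon_k$ gives the second form. The joint law of the indicators is universal because they are functions of the $M_k$.
\item \textbf{(C1).} Using $\E[V_k\mid\mathcal F_{k-1},M_k]=r_k(M_k)/(M_k+1)=\beta_k+\theta_k(M_k)$ (this also holds when $V_k=1$), we get $\E p_k=\beta_k\E p_{k-1}+\E[p_{k-1}\theta_k(M_k)]$. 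Unrolling this linear recursion with $\E p_0=1$, and noting $\prod_{i=j+1}^k\beta_i=a_k/a_j$, gives the stated sum. Finally $\Pbb(E_k(Z))=\E p_k$.
\end{itemize}

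I expect the main obstacle to be the measure-theoretic justification of the Markov reduction. The Beta law is stated conditional on the rich $\mathcal F_{k-1}$ but must depend only on $(p_{k-1},M_k)$, and this has to be handled uniformly, including on the boundary cases $p_{k-1}=0$ and $M_k=0$.
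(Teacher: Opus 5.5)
Your proposal is correct and follows essentially the same route as the paper. You freeze the $\mathcal F_{k-1}$-measurable selection set and score, read off the binomial count and the Beta law from Lemma~\ref{lem:survivor-exchangeability}, the probability integral transform and the monotone-image property of order statistics, get $p_k=p_{k-1}V_k$ by freezing $\hat t_k$ against the independent test unit, and assemble the chain through a kernel that depends only on $(p_{k-1},N_k,\beta_k)$ plus the tower property. (C1) then comes from $\E p_k=\beta_k\E p_{k-1}+\E[p_{k-1}\theta_k(M_k)]$. The boundary case you flag is resolved exactly as your kernel description implies: on $\{p_{k-1}=0\}$ one has $M_k=0$ almost surely and $r_k(0)=\lceil\beta_k\rceil=1$, so the point-mass branch $V_k=1$ applies automatically.
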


The theorem characterizes the joint distribution of survivor
counts and coverage across stages using only the fold sizes
and target levels. For fixed $m$, the Beta law in part~(iii)
follows from the distribution of a uniform order statistic. Its
equivalent binomial-tail characterization for single-stage conformal
coverage appears in \citet[Proposition~2b]{vovk2012conditional}.
Our result combines this law with random survivor counts across
stages while allowing predictable score refitting.
The proof, simulation of the resulting chain, and comparisons
between theoretical predictions and simulation results
are provided in Appendix~\ref{app:proof-coverage-law}.

\begin{proposition}[Finite-rank slack]
\label{prop:prefix-slack}
\leanref{finite_rank_slack_prefix}
Under the assumptions of Theorem~\ref{thm:coverage-law},
\[
 0\le\Pbb(E_k(Z))-a_k
 \le\sum_{j=1}^k\frac{a_k/a_j}{N_j+1}.
\]
\end{proposition}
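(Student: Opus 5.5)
The plan is to start from the exact slack identity (C1) of Theorem~\ref{thm:coverage-law},
\[
 \Pbb(E_k(Z))-a_k=\sum_{j=1}^k\frac{a_k}{a_j}\,\E\bigl[p_{j-1}\theta_j(M_j)\bigr],
\]
and bound each summand separately. The lower bound is immediate: $\theta_j(m)\ge 0$ for every $m$ and $p_{j-1}\ge 0$, so every term is nonnegative. This recovers \eqref{eq:joint-lower} of Theorem~\ref{thm:coverage}. For the upper bound it suffices to show
\[
 \E\bigl[p_{j-1}\theta_j(M_j)\bigr]\le\frac{1}{N_j+1}
\]
for each $j$.

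To obtain this, I condition on $\mathcal F_{j-1}$, so that $p_{j-1}=:p$ is fixed. By part~(ii) of Theorem~\ref{thm:coverage-law}, $M_j\sim\operatorname{Binomial}(N_j,p)$, and $\theta_j(m)<1/(m+1)$. Hence
\[
 \E\bigl[p\,\theta_j(M_j)\mid\mathcal F_{j-1}\bigr]\le p\,\E\Bigl[\frac{1}{M_j+1}\Bigm|\mathcal F_{j-1}\Bigr].
\]
The key step is the standard binomial identity
\[
 \E\Bigl[\frac{1}{M+1}\Bigr]=\frac{1-(1-p)^{N+1}}{(N+1)p}
 \quad\text{for } M\sim\operatorname{Binomial}(N,p),\ p>0.
\]
It follows from $\binom{N}{m}\frac{1}{m+1}=\frac{1}{N+1}\binom{N+1}{m+1}$ and summing the Binomial$(N+1,p)$ mass over $m+1\ge1$. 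Multiplying by $p$ gives
\[
 p\,\E\Bigl[\frac{1}{M_j+1}\Bigm|\mathcal F_{j-1}\Bigr]=\frac{1-(1-p)^{N_j+1}}{N_j+1}\le\frac{1}{N_j+1}.
\]
When $p=0$ the term vanishes trivially. Taking the outer expectation over $\mathcal F_{j-1}$ and summing with the weights $a_k/a_j$ yields the claimed bound.

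The only delicate point is that the Binomial law of $M_j$ holds conditionally on $\mathcal F_{j-1}$, while $p_{j-1}$ is $\mathcal F_{j-1}$-measurable. This justifies pulling $p_{j-1}$ out of the inner expectation. The binomial reciprocal-moment identity is the one real computation; it is exactly what cancels the factor $p$ and makes the bound uniform in the unknown survivor probability. The argument even gives the sharper bound $(1-(1-p_{j-1})^{N_j+1})/(N_j+1)$ before the final simplification.
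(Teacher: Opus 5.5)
Your proof is correct and follows the paper's argument: both start from the slack identity (C1), use $0\le\theta_j(m)<1/(m+1)$ together with the conditional law $M_j\mid\mathcal F_{j-1}\sim\operatorname{Binomial}(N_j,p_{j-1})$, and bound each term by $p_{j-1}\E[(M_j+1)^{-1}\mid\mathcal F_{j-1}]=\bigl(1-(1-p_{j-1})^{N_j+1}\bigr)/(N_j+1)\le 1/(N_j+1)$. The one cosmetic difference is that the paper obtains the reciprocal-moment identity by integrating the probability-generating function, $p\int_0^1(1-p+pu)^{N}\,\mathrm du$, whereas you use the identity $\binom{N}{m}\frac{1}{m+1}=\frac{1}{N+1}\binom{N+1}{m+1}$; the two derivations are equivalent.
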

The proof is in Appendix~\ref{app:proof-prefix-slack}.

By (C3), a finite threshold at stage \(k\) requires \(M_k\geq m_k^*\). Proposition~\ref{prop:calibration-budget}(i) turns this condition into a necessary total calibration budget:
\[
 N\ge(1-\eta)\sum_{k=1}^K\frac{m_k^*}{\bar a_{k-1}},
\]
where $\bar a_{k-1}$ is the upper bound from
Proposition~\ref{prop:prefix-slack} ($\bar a_0=1$). Its proof applies
Markov's inequality to each count and sums the necessary fold sizes.
Proposition~\ref{prop:calibration-budget}(ii) gives a separate sufficient condition for finite thresholds and simultaneous lower bounds on realized prefix coverage.

\subsection{Data Reuse}
\label{ssec:reuse}
Reusing the same $N$ calibration units saves data, but validity depends on
how survivors are selected and whether scores remain fixed.
Proposition~\ref{prop:reuse-boundary} in
Appendix~\ref{app:proof-reuse-boundary} gives the precise statements and proofs.

\paragraph{Reuse with weak survivor selection can fail.}
If calibration cases survive when their scores are at or below the learned
threshold, the case setting that threshold remains in the next sample.
The test case played no such role, so exchangeability can fail even for
fixed scores. For $Z\sim U(0,1)$, $R_1(z)=z$ and $R_2(z)=-z$, put
$r_1=\lceil(N+1)\beta_1\rceil$ and
$r_2=\lceil(r_1+1)\beta_2\rceil$. When $r_2\le r_1\le N$,
\[
 \Pbb(E_2(Z))=\frac{r_2-1}{N+1}.
\]
At $N=40$ and $\beta_1=\beta_2=1/2$, this is $10/41<1/4$.
The discrepancy can have either sign and is of order $1/N$.

A strict survivor rule restores validity for fixed scores. Condition on the
training data. If every score is fixed and has an atomless
law, retain only calibration cases with scores strictly below each
threshold; test cases still use a weak inequality. With $S_0=N+1$,
rank $S_k=\lceil S_{k-1}\beta_k\rceil$ at stage $k$ gives
$\Pbb(E_k(Z))=S_k/(N+1)\ge a_k$. Using cumulative ranks
$S_k=\lceil(N+1)a_k\rceil$ gives
\[
 \Pbb(E_k(Z))=\frac{\lceil(N+1)a_k\rceil}{N+1}
 \in[a_k,a_k+1/(N+1)).
\]
With exclusion counts $c_k=S_{k-1}-S_k$, this cumulative-rank rule is a
specialization of the COINS calibration-only threshold construction
\citep[Section~3.3, Proposition~7]{li2026coins}.
Its validity follows from permutation symmetry of the retained cases.

Adaptive reuse can instead lose ranks. A score fitted to previously observed
calibration cases can assign them
unusually small values. Suppose it uses $u$ such reused survivors and
$f$ fresh survivors, and chooses rank $r=r(u,f)$ without inspecting the
score function. Conditional on the score-selection history and test
survival, the worst-case coverage is $(r-u)^+/(f+1)$.
Thus uniform validity at level $\beta_k$ requires and is ensured by
$r\ge u+\lceil(f+1)\beta_k\rceil$. Each reused case can cost one rank;
with $f=0$, only an infinite threshold is uniformly valid in this class.

These designs also have different calibration budgets. Strict reuse and
cumulative-count reuse have finite thresholds whenever
$N+1\ge\max_k1/\varepsilon_k$. Bonferroni has finite thresholds exactly
when $N\ge\max_k(1-\varepsilon_k)/\varepsilon_k$.
Fresh folds instead need the sum budget of
Proposition~\ref{prop:calibration-budget}(i) for a specified probability
of finiteness. The reuse conditions are deterministic; the fresh-fold
condition is necessary for a probabilistic guarantee.

Blocked reuse shares one fresh fold across consecutive stages while fixing
all scores in a block from earlier blocks. It preserves every prefix
guarantee and accumulates slack and sample requirements across blocks
(Proposition~\ref{prop:blocked-reuse}). The same conditional-expectation
recursion also supports calibration within predictable cells refining a
fixed group partition: for each group $G$ of positive probability,
$\Pbb(E_k(Z)\mid X\in G,\mathcal F_0)\ge a_k$
(Proposition~\ref{prop:group-profile}). Smaller groups yield fewer
survivors and larger slack bounds.

\section{Inventory Application}
\label{sec:model}

Consider a serial system with $n$ echelons, numbered from the customer
upstream, holding costs $h_i>0$, backlog penalty $p>0$, and integer lead
times $L_i\ge1$. A planning cycle supplies context $X$, window demands
$D_0,\ldots,D_{n-1}$, and cumulative demands
$C_k=\sum_{m=0}^{k-1}D_m$, $k=1,\ldots,n$. Conditional on $X$, per-period demands are
i.i.d. and nonnegative; the disjoint windows have lengths $L_1+1,L_2,\ldots,L_n$.
A policy chooses echelon base-stock levels $s_k(X)$.
Appendix~\ref{app:model-details} specifies the backlog recursion and cost
objective. The statistical units are independent planning cycles.

Let $B_0^{(k)}$ be customer backlog in the truncated $k$-stage system.
\begin{proposition}[Event characterization]
\label{prop:event-characterization}
\leanref{event_characterization_backlog_max,event_characterization_no_backlog_iff}
For $C_k=\sum_{m=0}^{k-1}D_m$ and every $k$,
\begin{equation}
 B_0^{(k)}
 =\left(\max_{1\le j\le k}\{C_j-s_j\}\right)^+.
 \label{eq:B0-max}
\end{equation}
Consequently,
$\{B_0^{(k)}=0\}=\bigcap_{j=1}^k\{C_j\le s_j\}$.
This identity is deterministic and needs no distributional assumptions.
\end{proposition}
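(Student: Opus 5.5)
We prove \eqref{eq:B0-max} by induction on $k$, with the backlog recursion of Appendix~\ref{app:model-details} as the only input. In the truncated $k$-stage system, echelon $k$ is supplied by an outside source with ample stock. Its echelon inventory position after ordering is therefore always $s_k$. Under the standard Clark--Scarf accounting, over the window of length $L_k$ (or $L_1+1$ for $k=1$) the shortfall that echelon $k$ passes downstream is $B_{k-1}^{(k)}=(C_k-s_k)^+$ in echelon terms. Once cumulative demands are used, the window structure reduces each echelon's contribution to a comparison of $C_j$ with $s_j$.

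For the base case $k=1$, the single-echelon system with base-stock $s_1$ covers cumulative demand $C_1=D_0$ over its lead time plus the review period. The customer backlog is then $(C_1-s_1)^+$, which matches \eqref{eq:B0-max}.

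For the inductive step, we would write the recursion in the form
\[
B_0^{(k)}=\Bigl(\max\bigl\{B_0^{(k-1)},\,C_k-s_k\bigr\}\Bigr)^+ .
\]
The argument runs as follows. Adding echelon $k$ on top of the $(k-1)$-stage system replaces the ample supply to echelon $k-1$ with supply from echelon $k$. That supply is delayed by the upstream shortfall $(C_k-s_k)^+$, which in echelon-stock terms is absorbed unless it exceeds what the downstream system already lacks. Precisely, the effective echelon base-stock seen by the downstream part becomes $s_{k-1}-(C_k-s_k-(C_{k-1}-\ldots))^+$-type shifts. Unwinding these shifts shows that customer backlog is the largest of the individual shortfalls $C_j-s_j$, because echelon quantities are cumulative. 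Combining this with the induction hypothesis, and using $(\max\{a^+,b\})^+=(\max\{a,b\})^+$, gives \eqref{eq:B0-max}.

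\textbf{Main obstacle.} The delicate step is deriving the two-term recursion from the appendix's period-by-period dynamics. We must check that the window lengths $L_1+1,L_2,\dots,L_n$ make the demand relevant to echelon $j$'s shortfall exactly $C_j$, with no double counting across windows. To handle this, we would first write the appendix recursion in echelon (cumulative) variables. In those variables each truncation adds one term $C_k-s_k$ inside a running maximum, and the remaining manipulation is lattice algebra.

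The event statement then follows deterministically. Since $(x)^+=0$ if and only if $x\le 0$, we have $B_0^{(k)}=0$ exactly when $\max_j\{C_j-s_j\}\le0$. That holds exactly when $C_j\le s_j$ for every $j\le k$, which is the intersection $\bigcap_{j=1}^k\{C_j\le s_j\}$. No probabilistic assumptions enter at any point.
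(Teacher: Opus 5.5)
Your final step, passing from the max formula to the event identity, is fine, but the max formula itself is never proved. The only nontrivial step is the recursion $B_0^{(k)}=(\max\{B_0^{(k-1)},C_k-s_k\})^+$. You support it only with the heuristic about ``effective echelon base-stock shifts'' and then defer it as the main obstacle.

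The setup also contains an error. Under the paper's recursion, $B_k^{(k)}=0$ gives
\[
B_{k-1}^{(k)}=(D_{k-1}-\Delta_{k-1})^+=\bigl((C_k-s_k)-(C_{k-1}-s_{k-1})\bigr)^+,
\]
not $(C_k-s_k)^+$. It is zero whenever $D_{k-1}\le s_k-s_{k-1}$, however large $C_k-s_k$ is.

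Your induction on $k$ stalls for a structural reason. Adding echelon $k$ changes the \emph{innermost} term of the nested positive parts: $B_{k-1}$ goes from $0$ to $(D_{k-1}-\Delta_{k-1})^+$. That change must then pass through $k-1$ maps $x\mapsto(x+a)^+$, and the single number $B_0^{(k-1)}$ does not tell you how those maps act on a nonzero input. The obstacle you name (window lengths, double counting) is not actually an issue. The identity is pure algebra in $D_0,\dots,D_{k-1}$ and $s$, and $C_j$ enters only by telescoping $\sum_m D_m$ against $s_1+\sum_m\Delta_m$.

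To close the gap you must strengthen the induction hypothesis. The paper fixes $k$ and proves, by backward induction on $i=k-1,\dots,1$,
\[
B_i^{(k)}=\Bigl(\max_{i\le j\le k-1}\Bigl\{\sum_{m=i}^{j}D_m-(s_{j+1}-s_i)\Bigr\}\Bigr)^+ .
\]
Each step is the identity $(\max\{M,0\}+a)^+=(\max\{M+a,a\})^+$, and the result is then substituted into $B_0^{(k)}=(B_1^{(k)}+D_0-s_1)^+$.

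If you want to keep your induction on $k$, first prove a composition lemma. Put $f_a(x):=(x+a)^+$ and $g:=f_{a_0}\circ\cdots\circ f_{a_r}$. Then $g(x)=\max\{x+\sum_{m\le r}a_m,\,g(0)\}$ for $x\ge0$, by a short induction on $r$. Take $a_0=D_0-s_1$ and $a_m=D_m-\Delta_m$ for $1\le m\le k-2$. Then $g(0)=B_0^{(k-1)}$ and $\sum_m a_m=C_{k-1}-s_{k-1}$, so
\[
B_0^{(k)}=g\bigl((D_{k-1}-\Delta_{k-1})^+\bigr)=\max\{B_0^{(k-1)},\,C_k-s_k,\,C_{k-1}-s_{k-1}\}=\max\{B_0^{(k-1)},\,C_k-s_k\}.
\]
The last equality uses $B_0^{(k-1)}\ge C_{k-1}-s_{k-1}$ from the induction hypothesis.
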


This identity follows by expanding the backlog recursion
(Appendix~\ref{app:proof-event-characterization}). It connects inventory
service directly to the nested events of Section~\ref{sec:formulation}.
Write $H_k=\sum_{i=k}^n h_i$ and $H_{n+1}=0$. For $k=1,\ldots,n$,
let $\alpha_k$ denote the target no-backlog probability for the optimal
truncated $k$-echelon system. In the classical continuous-demand model,
the Clark--Scarf optimality equations
\citep{clark1960optimal,vanhoutum2006multiechelon} give
\[
 \alpha_k=\frac{p+H_{k+1}}{p+H_1}.
\]

The assumptions for this result are stated in Appendix~\ref{app:foc}.

\subsection{Targets}
\label{sec:method}
\label{ssec:target}
The classical equations in Appendix~\ref{app:foc} and
Proposition~\ref{prop:event-characterization} imply, conditional on context,
\begin{equation}
 \Pbb\{C_j\le s_j^*(x),\ j\le k\mid X=x\}=\alpha_k.
 \label{eq:joint-target}
\end{equation}
Set $\alpha_0=1$. By the probability chain rule, these equalities are
equivalent to
\begin{equation}
 \begin{aligned}
 &\Pbb\{C_k\le s_k^*(x)\mid X=x,\ C_j\le s_j^*(x),\ j<k\}
   =\beta_k,\\
 &\beta_k=\frac{\alpha_k}{\alpha_{k-1}}
          =\frac{p+H_{k+1}}{p+H_k}.
 \end{aligned}
 \label{eq:conditional-target}
\end{equation}
Positive holding costs give $0<\beta_k<1$. Thus $s_k^*(x)$ is a
quantile of cumulative demand among scenarios satisfying the earlier
stock constraints, and separate marginal bounds on
$\Pbb(C_k\le s_k(X))$ do not imply \eqref{eq:joint-target}
(Remark~\ref{rem:naive}).

The profile $\alpha$ has a decision meaning. Fix $x$ and put $W_0:=0$ and
$W_j(s)=C_j-s_j$. Expanding the backlog recursion as for
Proposition~\ref{prop:event-characterization} gives
$B_0^{(k)}=\max_{0\le j\le k}W_j$ and $B_i^{(k)}=\max_{i\le j\le k}W_j-W_i$
in the truncated $k$-stage system, whose expected one-cycle cost is
\[
 G_k^x(s_{1:k})=(p+H_1)\E B_0^{(k)}
 +\sum_{i=1}^kh_i\bigl(s_i-\E B_i^{(k)}-\kappa_i(x)\bigr),
\]
with expectations given $X=x$ and constants $\kappa_i(x)=(L_i+1)\mu(x)$,
$\mu(x)$ the mean per-period demand. $G_n^x$ is the cost \eqref{eq:cost} of
Appendix~\ref{app:model-details}, $\mathcal G_n(s)=\E[G_n^X(s(X))]$ is the
one-cycle cost of a policy $s(\cdot)$, and we write $G_k$ when $x$ is fixed. Let
$J_k(s)=\Pbb(C_j\le s_j,\ j\le k\mid X=x)$ and
$\mathbf 1_k=(1,\ldots,1)\in\mathbb R^k$.

\begin{proposition}[Common-shift identity]
\label{prop:common-shift}
\leanref{common_shift_identity_pathwise,common_shift_identity_nested_fractiles_stationary}
Let the window demands have finite means and let $C_l-C_j$ be atomless given
$X=x$ for $0\le j<l\le n$ ($C_0:=0$). Then:
\begin{enumerate}
\renewcommand{\labelenumi}{(\roman{enumi})}
\setlength{\itemsep}{2pt}
\item Each $G_k$ is continuously differentiable, with
$\langle\nabla G_k(s_{1:k}),\mathbf 1_k\rangle
=(p+H_1)[J_k(s)-\alpha_k]$.
\item Under conditional independence of the windows, the nested fractiles
$J_k(s)=\alpha_k$, $k\le n$, imply $\nabla G_n(s)=0$. Conversely, if
$\nabla G_n(s)=0$ and $\Pbb(B_k^{(n)}(s)=0\mid X=x)>0$ for all $k$, then
$J_k(s)=\alpha_k$ for all $k$.
\item For $n=1$, $\E C_1<\infty$ and an integrable learner $q$ with
$C_1-q(X)$ atomless, the population conformal shift $t^c$, defined by
$\Pbb(C_1-q(X)\le t^c)=\alpha_1$, minimizes
$t\mapsto\E[G_1^X(q(X)+t)]$.
\end{enumerate}
\end{proposition}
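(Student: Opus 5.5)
Throughout, fix $x$ and work with the representation $B_0^{(k)}=\max_{0\le j\le k}W_j$ and $B_i^{(k)}=\max_{i\le j\le k}W_j-W_i$, with $W_j=C_j-s_j$ and $W_0=0$. All three parts rest on one pathwise observation about a common shift $s\mapsto s+t\mathbf 1_k$. Under this shift, $W_j\mapsto W_j-t$ for $j\ge1$, while $W_0=0$ is fixed. Hence every $B_i^{(k)}$ with $i\ge1$ is invariant, because the shift cancels in $\max_{i\le j\le k}W_j-W_i$. Only $B_0^{(k)}$ changes, and
\[
 \frac{d}{dt}B_0^{(k)}(s+t\mathbf 1_k)=-\mathbf 1\{\max_{1\le j\le k}W_j>0\}
\]
almost everywhere in $t$. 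The derivative of the linear holding terms $\sum_i h_i s_i$ is $H_1$.

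\emph{Part (i).} I first show that $G_k$ is $C^1$. Each $B_i^{(k)}$ is Lipschitz in $s$, and its integrand is bounded by integrable quantities (finite means). The a.e. gradient is $\partial_{s_j}B_i^{(k)}=-\mathbf 1\{\text{argmax}=j\}+\mathbf 1\{j=i,\ \text{argmax}\neq i\}$. Dominated convergence therefore gives differentiability under the integral. Continuity of the resulting gradient, which is a combination of probabilities of events $\{W_j>W_l\}$, follows because the differences $C_l-C_j$ are atomless, so ties have probability zero. The directional derivative along $\mathbf 1_k$ then equals
\[
 (p+H_1)\bigl(-\Pbb(\max_{j\ge1}W_j>0)\bigr)+H_1=(p+H_1)J_k(s)-p .
\]
Since $\alpha_k(p+H_1)=p+H_{k+1}$, the directional derivative is $(p+H_1)(J_k-\alpha_k)+H_{k+1}$. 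I therefore expect to have to reconcile the constants. The stated formula requires the holding term to contribute $\sum_{i\le k}h_i=H_1-H_{k+1}$, which it does for the truncated system. The derivative is then $(p+H_1)J_k-p-H_{k+1}=(p+H_1)(J_k-\alpha_k)$. I will carefully check this bookkeeping.

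\emph{Part (ii).} I will use the recursive (Clark--Scarf) structure: $G_n$ decomposes through the last echelon. Write $s=(s_{1:k},s_{k+1:n})$. Differentiating in $s_k$ alone yields the difference of the common-shift derivatives of the truncated systems $G_k$ and $G_{k-1}$, each evaluated in the relevant conditional picture. More precisely, I plan to show that
\[
 \partial_{s_k}G_n=\langle\nabla G_k,\mathbf 1_k\rangle-\langle\nabla G_{k-1},\mathbf 1_{k-1}\rangle
\]
plus terms involving upstream echelons. The claim is that conditional independence of the windows makes these upstream terms factor as $\Pbb(B_k^{(n)}=0)$ times a truncated-system quantity. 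Given this, if $J_k=\alpha_k$ for all $k$, then by (i) every common-shift derivative vanishes, and the gradient is zero. For the converse, I would proceed by downward induction on $k$. Each partial derivative contains a factor $\Pbb(B_k^{(n)}(s)=0)>0$ multiplying $(J_k-\alpha_k)$ minus already-zero terms, and this forces $J_k=\alpha_k$. I expect this factorization to be the main obstacle. My first attempt will be to condition on the upstream windows $D_k,\dots,D_{n-1}$, noting that $B_k^{(n)}=0$ is exactly when echelon $k$ delivers its full stock, so that on this event the downstream system behaves as the truncated $k$-system.

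\emph{Part (iii).} For $n=1$, set $g(t)=\E[G_1^X(q(X)+t)]$. By the pathwise derivative and dominated convergence (using $\E C_1<\infty$ and integrable $q$), $g$ is convex and differentiable, with
\[
 g'(t)=(p+H_1)\Pbb(C_1-q(X)\le t)-p=(p+h_1)\bigl[F(t)-\alpha_1\bigr],
\]
where $F$ is the atomless cdf of $C_1-q(X)$ and $\alpha_1=p/(p+h_1)$. Convexity holds because $B_0$ is convex in $t$ and the holding term is linear. Hence $g'(t^c)=0$ characterizes a minimizer.
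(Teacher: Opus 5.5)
Parts (i) and (iii) are sound. In (i), your invariance argument takes a slightly more direct route than the paper's. Under $s\mapsto s+t\mathbf 1_k$ every $B_i^{(k)}$ with $i\ge1$ is unchanged, so only $B_0^{(k)}$ and the holding terms move, and the holding terms contribute $\sum_{i\le k}h_i=H_1-H_{k+1}$. The paper instead sums the coordinate partials $\partial_jG_k=-(p+H_1)\Pbb(A_0^{(k)}=j)+\sum_{i\le j}h_i\Pbb(A_i^{(k)}=j)$, where $A_i^{(k)}$ is the a.s.\ unique maximizer of $W_l$ over $i\le l\le k$. Your (iii) is the paper's convexity argument. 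One slip: at $j=i$ the pathwise partial of $B_i^{(k)}$ is $\mathbf 1\{A_i^{(k)}\ne i\}$, not $\mathbf 1\{A_i^{(k)}\ne i\}-\mathbf 1\{A_i^{(k)}=i\}$. It does not affect your directional derivative.

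The gap is in (ii), and it is the whole content of that part. The step that carries it is a \emph{multiplicative} factorization, which you only conjecture. The additive identity you propose to prove instead, $\partial_{s_k}G_n=\langle\nabla G_k,\mathbf 1_k\rangle-\langle\nabla G_{k-1},\mathbf 1_{k-1}\rangle$ plus upstream terms, has the wrong shape.

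Here is what actually holds. Pathwise, the $s_k$-derivative of the $n$-system cost is $-(p+H_1)\mathbf 1\{A_0^{(n)}=k\}+\sum_{i\le k}h_i\mathbf 1\{A_i^{(n)}=k\}$. Off $\{B_k^{(n)}=0\}$, some $W_l$ with $l>k$ exceeds $W_k$, so every indicator vanishes. On that event, $\{A_i^{(n)}=k\}=\{A_i^{(k)}=k\}$ almost surely. The event $\{B_k^{(n)}=0\}$ depends only on $D_k,\dots,D_{n-1}$, and $\{A_i^{(k)}=k\}$ depends only on $D_0,\dots,D_{k-1}$. Conditional independence then gives $\partial_kG_n(s)=\Pbb(B_k^{(n)}(s)=0)\,\partial_kG_k(s_{1:k})$, with $\partial_kG_k$ the last-coordinate derivative of the truncated cost.

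Applying this inside every truncated system and summing as in (i) yields the unit lower-triangular relation $(p+H_1)(J_k-\alpha_k)=\sum_{j\le k}\Pbb(B_j^{(k)}=0)\,\partial_jG_j$. Consequently $\langle\nabla G_k,\mathbf 1_k\rangle-\langle\nabla G_{k-1},\mathbf 1_{k-1}\rangle=\partial_kG_k+\sum_{j<k}[\Pbb(B_j^{(k)}=0)-\Pbb(B_j^{(k-1)}=0)]\,\partial_jG_j$. The correction is weighted by the downstream defects $\partial_jG_j$, $j<k$, and disappears only once those are known to vanish.

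Without the triangular relation neither direction follows. For the forward direction, (i) only makes the sums $\langle\nabla G_k,\mathbf 1_k\rangle$ vanish; getting $\partial_kG_k=0$ for each $k$, hence $\nabla G_n=0$, requires an \emph{upward} induction through the relation. For the converse no induction is needed. Positivity of $\Pbb(B_k^{(n)}=0)$ gives $\partial_kG_k=0$ for every $k$ at once, and the relation then gives $J_k=\alpha_k$. The terms you call already zero are the $\partial_jG_j$ with $j<k$, so a downward induction cannot supply them.

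Your closing heuristic, that on $\{B_k^{(n)}=0\}$ the downstream system behaves like the truncated $k$-system, points the right way. It still has to be turned into the event identity and independence argument above.
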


Thus $\alpha_k$ is the common-shift stationarity condition of the $k$-th
truncated system, and for a single echelon the conformal shift is the
cost-optimal shift of any learner. Clark--Scarf stationarity itself is
classical \citep{vanhoutum2006multiechelon}; the common-shift reading is
ours. For $n\ge2$ the same identity holds for the averaged cost
$t\mapsto\E[G_k^X(q_{1:k}(X)+t)]$, so the population conformal shift $t^c$,
defined by $\Pbb(C_j-q_j(X)\le t^c_j,\ j\le k)=\alpha_k$ for all $k$, makes
every truncated system stationary along $\mathbf 1_k$; when $t^c$ is unique,
the remaining gradient is a covariance that vanishes when $q-s^*$ is
constant. Proofs are in Appendix~\ref{app:proof-common-shift}.

\subsection{Policies}
\label{ssec:procedure}
Apply the procedure of Section~\ref{sec:general} to
$Z_i=(X_i,D_0^{(i)},\ldots,D_{n-1}^{(i)})$ with $K=n$.
At stage $k$, fit a predictor $\hat q_k(x)$ of the
$\beta_k$-quantile using the training sample, optionally restricted by
earlier calibrated decisions. Keep the current and future calibration folds
unused during fitting. The stagewise mapping is
\begin{equation}
 \begin{aligned}
 R_k(Z_i)&=C_k^{(i)}-\hat q_k(X_i),\\
 \mathcal K_k&=\{i\in\mathcal I_k:
 C_j^{(i)}\le\hat s_j(X_i)\text{ for all }j<k\},\\
 \hat s_k(x)&=\hat q_k(x)+
 \ord_{r_k}\bigl(\{R_k(Z_i):i\in\mathcal K_k\}\cup\{+\infty\}\bigr),
 \end{aligned}
 \label{eq:conformal-level}
\end{equation}
where $r_k=\lceil(M_k+1)\beta_k\rceil$, as in
\eqref{eq:generic-threshold}. Without context, $\hat q_k$ is scalar and
cancels: $\hat s_k$ is an empirical quantile of the selected cumulative
demands. With context, the scalar correction shifts the fitted quantile
function.

For independent planning cycles and predictable predictors,
Proposition~\ref{prop:inventory-coverage} gives
\[
\Pbb\{C_j\le\hat s_j(X),\ j\le k\}\ge\alpha_k,
\qquad k=1,\ldots,n.
\]
As in Theorem~\ref{thm:coverage}, this is a marginal guarantee over training data, calibration data, and a new planning cycle.
Under the atomless-score assumptions of
Theorem~\ref{thm:coverage-law}, (C3) gives the minimum
survivor count for a finite threshold as
$m_k^*=\lceil(p+H_{k+1})/h_k\rceil$.

\paragraph{Cost implications.}
\label{sec:cost}
Service validity alone does not guarantee low inventory cost.
Under the assumptions of Theorem~\ref{thm:unified-cost}
(Appendix~\ref{app:cost-analysis}), the calibrated policy's excess cost is
bounded by stagewise threshold errors propagated through triangular
sensitivity matrices. Because the conformal shift absorbs
any constant offset of the learner, constant error levels cancel out; only the variation of the
learner's bias (its oscillation, or a weighted variance and a mismatch term)
enters.

Bonferroni calibration fits stage $k$ at level $1-\varepsilon_k$,
$\varepsilon_k=\alpha_{k-1}-\alpha_k$, using all calibration units; the union
bound gives $\Pbb\{C_j\le\hat s_j(X),\ j\le k\}\ge1-\sum_{j\le k}\varepsilon_j
=\alpha_k$. Its marginal targets can differ from the cost-optimal ones. Let $s^*(x)$ be a conditional Clark--Scarf optimum satisfying
\eqref{eq:joint-target} and $E^*_{k-1}=\{C_j\le s_j^*(X),\ j<k\}$.

\begin{proposition}[Bonferroni cost inconsistency]
\label{prop:bonferroni-inconsistency}
\leanref{bonferroni_cost_inconsistency_split,bonferroni_cost_inconsistency_shift_lower}
Let $n\ge2$, assume $\E C_n<\infty$ and $\E\|q(X)\|_2<\infty$,
let the windows have conditional densities, and take an
oracle-up-to-level learner $q=s^*+c$ (correct up to a constant shift
$c\in\mathbb R^n$). Let $t^B_k$ be
the $(1-\varepsilon_k)$-quantile of $C_k-q_k(X)$, assume these residual cdfs
are strictly increasing at $t^B_k$, write
$s^B=q+t^B=s^*+u^B$, and let $\rho_k<\infty$ bound the density of
$C_k-s^*_k(X)$. Then $u^B_1=0$ and, for $k\ge2$,
\[
 \begin{aligned}
 \rho_ku^B_k&\ge\Pbb\bigl((E^*_{k-1})^c\cap\{C_k>s^*_k(X)\}\bigr)\\
 &\ge\Pbb\bigl(C_{k-1}>\max\{s^*_{k-1}(X),s^*_k(X)\}\bigr),
 \end{aligned}
\]
which is positive if $C_{k-1}$ has unbounded conditional support. If also
$s^*(x)$ is the unique minimizer of $G_n^x$ for $x$ in a set of positive
probability and some $u^B_k>0$, then
$\mathcal G_n(s^B)>\mathcal G_n(s^*)$, and the excess cost of empirical
Bonferroni calibration has this positive limit as $N\to\infty$.
\end{proposition}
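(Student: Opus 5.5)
We prove the claims in order: $u^B_1=0$, then the lower bound on $u^B_k$, then strict cost excess and the large-$N$ limit.

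\textbf{Step 1: $u^B_1=0$.} Since $\varepsilon_1=1-\alpha_1$, we have $1-\varepsilon_1=\alpha_1$. With $q=s^*+c$, the residual is $C_1-q_1(X)=(C_1-s^*_1(X))-c_1$. By \eqref{eq:joint-target} with $k=1$, $\Pbb(C_1\le s^*_1(X)\mid X)=\alpha_1$. Hence $t=-c_1$ is an $\alpha_1$-quantile of the residual. Strict monotonicity of the residual cdf at $t^B_1$ makes the quantile unique, so $t^B_1=-c_1$ and $u^B_1=c_1+t^B_1=0$.

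\textbf{Step 2: lower bound on $u^B_k$ for $k\ge2$.} In the same way, $u^B_k$ is the $(1-\varepsilon_k)$-quantile of $C_k-s^*_k(X)$, because $c_k$ cancels. Let $F_k$ be the cdf of $C_k-s^*_k(X)$. The Bonferroni level gives $F_k(u^B_k)=1-\varepsilon_k$. For the oracle, the chain rule applied to \eqref{eq:joint-target} gives
\[
\alpha_k=\Pbb(E^*_{k-1}\cap\{C_k\le s^*_k\})=\alpha_{k-1}-\Pbb(E^*_{k-1}\cap\{C_k>s^*_k\}).
\]
Therefore
\[
F_k(0)=\Pbb(C_k\le s^*_k)=1-\Pbb(E^*_{k-1}\cap\{C_k>s^*_k\})-\Pbb((E^*_{k-1})^c\cap\{C_k>s^*_k\}).
\]
Since $\varepsilon_k=\alpha_{k-1}-\alpha_k=\Pbb(E^*_{k-1}\cap\{C_k>s^*_k\})$, subtracting yields
\[
F_k(u^B_k)-F_k(0)=\Pbb((E^*_{k-1})^c\cap\{C_k>s^*_k\})\ge0.
\]
Because $F_k$ is nondecreasing and strictly increasing at $u^B_k$, this difference forces $u^B_k\ge0$. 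The density bound then gives $F_k(u^B_k)-F_k(0)\le\rho_k u^B_k$, which is the first inequality.

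For the second inequality, $C_k\ge C_{k-1}$ because demands are nonnegative. On the event $\{C_{k-1}>\max(s^*_{k-1},s^*_k)\}$ we therefore have both $C_{k-1}>s^*_{k-1}$, which lies in $(E^*_{k-1})^c$, and $C_k>s^*_k$. So this event is contained in $(E^*_{k-1})^c\cap\{C_k>s^*_k\}$.

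Positivity follows from unbounded conditional support of $C_{k-1}$. For each $x$, the event $\{C_{k-1}>\max(s^*_{k-1}(x),s^*_k(x))\}$ has positive conditional probability, since the max is a finite number. Integrating over $x$ gives a positive unconditional probability.

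\textbf{Step 3: strict cost excess.} Write $\mathcal G_n(s^B)-\mathcal G_n(s^*)=\E[G^X_n(s^*(X)+u^B)-G^X_n(s^*(X))]$. Each integrand is nonnegative: $G^x_n$ is convex, and $s^*(x)$ is stationary by Proposition~\ref{prop:common-shift}(ii), hence a minimizer. Convexity of $G^x_n$ is standard, since it is an expectation of maxima of affine functions plus linear terms. On the positive-probability set of $x$ where $s^*(x)$ is the unique minimizer, the integrand is strictly positive because $u^B\ne0$. Hence the expectation is strictly positive. Finiteness of the costs follows from $\E C_n<\infty$ and $\E\|q\|<\infty$.

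\textbf{Step 4: the large-$N$ limit.} Empirical Bonferroni thresholds converge almost surely to $t^B_k$. This uses strict monotonicity of the residual cdfs at $t^B_k$ and the Glivenko--Cantelli theorem. The map $t\mapsto\mathcal G_n(q+t)$ is continuous, by dominated convergence with the integrability bounds. So the expected cost converges, and the excess tends to the positive limit from Step 3, with uniform integrability handled by the same bounds.

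\textbf{Expected main obstacle.} The delicate point is Step 3. It requires that $s^*(x)$ is a global minimizer of $G^x_n$, which we get from convexity together with stationarity. If convexity of the Clark--Scarf truncated cost in echelon coordinates is not directly available, I would instead rely on the stated uniqueness hypothesis, reading $s^*(x)$ as the minimizer. Then $G^x_n(s^*+u^B)>G^x_n(s^*)$ holds pointwise on the given set without any convexity argument.
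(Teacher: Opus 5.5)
Your Steps 1, 2 and 4 follow the paper's proof essentially line by line. The shared ingredients are these:
\begin{itemize}
\item the split of $\Pbb(C_k\le s_k^*(X))$ on $E^*_{k-1}$, giving $F_k(u_k^B)-F_k(0)=\Pbb((E^*_{k-1})^c\cap\{C_k>s_k^*(X)\})$;
\item strict increase at $u_k^B$ to force $u_k^B\ge0$ (and $u_1^B=0$);
\item the density bound;
\item the containment via $C_k\ge C_{k-1}$;
\item Glivenko--Cantelli plus continuity of $t\mapsto\mathcal G_n(q+t)$.
\end{itemize}
In Step 4 the claim is almost-sure convergence of the random excess $\mathcal G_n(q+\hat t^B_N)-\mathcal G_n(s^*)$. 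Continuity gives this directly; the paper uses the Lipschitz bound $|\partial_jG_k^x|\le p+H_1$. Your uniform-integrability remark is therefore unnecessary. If you meant averaging over calibration draws, it would need an actual moment bound on the order statistics.

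The primary justification in Step 3 is false, however: $G_n^x$ is not convex in echelon coordinates. The holding terms contribute $-h_i\E B_i^{(n)}$, which are \emph{negative} multiples of expected maxima of affine functions. So $G_n^x$ is only a difference of convex functions. For $n=2$, write $F_{D_i},f_{D_i}$ for the window cdfs and densities and $\Delta=s_2-s_1$. A direct computation gives
\[
 \partial_{11}G_2(s)=(p+H_1)\bigl[f_{D_0}(s_1)F_{D_1}(\Delta)-f_{D_1}(\Delta)\bigl(F_{D_0}(s_1)-\alpha_1\bigr)\bigr].
\]
This is negative for small $\Delta>0$ whenever $F_{D_0}(s_1)>\alpha_1$ and $D_1$ is exponential. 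For instance, in the two-echelon benchmark of Appendix~\ref{app:small-oracle}, $\partial_a^2G(5,5.01)\approx-0.57$. So ``stationary, hence a global minimizer'' does not follow.

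You do not need it. The proposition takes $s^*(x)$ to be a conditional Clark--Scarf \emph{optimum}, so $G_n^x(s^B(x))\ge G_n^x(s^*(x))$ for a.e.\ $x$ by optimality alone. Uniqueness on a set of positive probability, together with $u^B\ne0$, then gives strictness there. That is your stated fallback, and it is exactly the paper's argument. Make it the main line and drop the convexity claim, and with it the appeal to Proposition~\ref{prop:common-shift}(ii).
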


The proof is in Appendix~\ref{app:proof-bonferroni}. For the same learner,
the fresh-fold excess cost vanishes as the
folds grow, under the conditions of Theorem~\ref{thm:unified-cost}, Part~II.

\section{Experiments}
\label{sec:experiments}

We evaluate how calibration, predictable refitting, and fold allocation affect prefix coverage and inventory cost in synthetic demand models and a Bike Sharing case study.

\subsection{Simulation}
Contexts satisfy $X\sim N(0,I_2)$ and per-period demand conditional on $X$
is Gamma with shape $2$ and scale $\mu(X)/2$, where
$\mu(X)=10\exp(0.35X_1-0.20X_2)$. Independent cycles follow the model of
Section~\ref{sec:model}. The base system has $n=3$,
$h=(1,.5,.5)$, $p=9$, and $L=(1,1,2)$, giving
$\alpha=(.9091,.8636,.8182)$ and $m^*=(10,19,18)$.
The predictor deliberately uses a misspecified linear-Gaussian model for the nonlinear, heteroscedastic Gamma demand process.
(Appendix~\ref{app:mechanisms}).

We cross two choices: marginal versus survivor-conditional targets, and
plug-in versus conformal decisions. Each of $100$ replications uses $1000$
training cycles; conformal methods use $1000$ additional calibration cycles.
Evaluation uses a common independent test set of $400{,}000$ cycles.
Appendix~\ref{app:supp-numerics} gives equal-budget controls and
realized-policy diagnostics.

\begin{table}[!ht]
\centering\footnotesize
\caption{Target/calibration ablation. Mean prefix coverage (\%) and cost $\mathcal G_n$ over $100$ fits. Sequential conformal uses fresh folds and refitting; uncertainty is in Appendix~\ref{app:learner-benchmark}.}
\label{tab:ablation}
\setlength{\tabcolsep}{3pt}
\begin{tabular}{@{}llrrrr@{}}
\toprule
Target rule & Calibration & $k=1$ & $k=2$ & $k=3$ & Cost\\
\midrule
\multicolumn{2}{@{}l}{Required coverage} & $90.91$ & $86.36$ & $81.82$ & ---\\
\midrule
Marginal & Plug-in & $89.34$ & $83.33$ & $76.48$ & $70.36$\\
Marginal & Conformal & $90.94$ & $84.61$ & $77.08$ & $70.49$\\
\addlinespace[2pt]
Sequential & Plug-in & $89.34$ & $84.16$ & $78.88$ & $69.12$\\
Sequential & Conformal & $90.98$ & $86.52$ & $81.98$ & $68.98$\\
\bottomrule
\end{tabular}
\end{table}

In Table~\ref{tab:ablation}, marginal conformal calibration falls 4.7 percentage points below the final prefix target. Under this demand model and predictor, sequential fresh-fold calibration reaches the prefix targets on average and reduces cost by 0.14 relative to sequential plug-in (paired 95\% interval [0.05, 0.23]). This comparison is specific to the
chosen misspecification. The coverage-law overlay is in
Appendix~\ref{app:proof-coverage-law}.

Weak-survivor reuse is nearly nominal here but lacks a general guarantee:
in the opposing-score test with $N=40$, its simulated coverage $.2437$
(SE $.0002$) matches $10/41<.25$ from
Proposition~\ref{prop:reuse-boundary}(a)
(Appendix~\ref{app:freshness-ablation}).

Bonferroni fits on training data at $1-\varepsilon_k$ and uses all $1000$
calibration cycles. Fixed-score fresh folds and prefix-COINS fit at
$\beta_k$ on the full training sample; COINS uses the cumulative ranks of
Section~\ref{ssec:reuse}. Fixed and refitted fresh folds share their folds.

\begin{table}[!ht]
\centering\footnotesize
\caption{Valid calibration methods with $1000$ fit and $1000$ calibration cycles. Mean prefix coverage (\%) and cost over $100$ fits on a common test set. Coverage SEs are at most $0.24$ percentage points; cost SEs at most $0.09$.}
\label{tab:valid-mechanisms}
\setlength{\tabcolsep}{3pt}
\begin{tabular}{@{}lrrrr@{}}
\toprule
Method & $k=1$ & $k=2$ & $k=3$ & Cost\\
\midrule
Required & $90.91$ & $86.36$ & $81.82$ & ---\\
\midrule
Bonferroni & $90.94$ & $89.79$ & $88.36$ & $73.00$\\
COINS (fixed) & $90.94$ & $86.45$ & $81.82$ & $69.50$\\
\addlinespace[2pt]
Fresh fold (fixed) & $90.98$ & $86.55$ & $82.01$ & $69.72$\\
Fresh fold (refit) & $90.98$ & $86.52$ & $81.98$ & $68.98$\\
\bottomrule
\end{tabular}
\end{table}

All four methods meet the targets on average (Table~\ref{tab:valid-mechanisms}).
Refitted fresh folds lower cost by $4.02$ relative to Bonferroni
(paired $95\%$ interval $[3.84,4.20]$), $0.73$ relative to fixed fresh folds
($[0.65,0.82]$), and $0.52$ relative to COINS ($[0.41,0.62]$).
The two fresh-fold arms have the same coverage law by
Theorem~\ref{thm:coverage-law}(C2).
In the noncontextual Gamma analogue, the population Bonferroni gap is
$3.61$. As calibration size grows with an oracle-up-to-level learner,
Bonferroni retains this gap while fresh-fold and COINS excess costs decay
(Table~\ref{tab:bonferroni-plateau}); this does not decompose the contextual gap.

Cost dominance is not uniform. In Table~\ref{tab:reuse-arms}, valid strict
reuse refitted on training survivors of uncalibrated predictors is cheaper
than refitted fresh folds in all six settings. Fresh folds support adaptation
to calibrated thresholds. Appendix~\ref{app:small-oracle} also compares
policies with an analytic two-echelon optimum.

A separate experiment varies depth over $2,3,5,7,10$ while holding the
\emph{total} calibration size at $300$ or $1200$ ($50$ fits, unit lead times,
$h=(1,.5,\ldots,.5)$, $p=9$, so the profile changes with depth). With $1200$
cycles, fresh folds
are less conservative than Bonferroni at every tested depth. At depth $10$
with $300$ cycles, at least one fresh-fold threshold is infinite in
$32$ (fixed scores) and $33$ (refit) of $50$ runs, against the probability
$.574$ predicted by (C3) of
Theorem~\ref{thm:coverage-law}; Bonferroni never abstains, as
Proposition~\ref{prop:reuse-boundary}(d) predicts. Coverage includes these
runs, including automatic stage success at infinite thresholds. Their
infinite decisions have no executable cost. Finite-policy coverage and
costs describe the selected executable runs, not an unconditional
deployment guarantee.
Appendix~\ref{app:mechanisms} gives all depths, uncertainty, and costs of
finite policies.

\subsection{Demand Data}
We use the public UCI Bike Sharing dataset \citep{fanaee2013bike}, also
studied by \citet{cao2024conformal}. From $17{,}379$ hourly rental records we
build $4260$ non-overlapping four-hour cycles; for hourly counts
$(y_0,y_1,y_2,y_3)$ in chronological order, $D=(y_2+y_3,y_1,y_0)$, a hypothetical system with
$L=(1,1,1)$ and the costs above. Context uses calendar features and the
preceding hour's demand and weather. A chronological split uses $2130$
training, $1000$ calibration (the latest of the $1065$ cycles between
training and test), and $1065$ test cycles; policies are projected
to nonnegative cumulative maxima. Table~\ref{tab:bikeshare} averages five
stage-fold allocations (Appendix~\ref{app:real-data}).

\begin{table}[!ht]
\centering\footnotesize
\caption{Chronological Bike Sharing stress test: prefix coverage (\%) and hypothetical cost from observed rentals. Fresh-fold entries average five allocations (allocation SEs in Appendix~\ref{app:real-data}); other policies are fixed.}
\label{tab:bikeshare}
\setlength{\tabcolsep}{3pt}
\begin{tabular}{@{}llrrrr@{}}
\toprule
Learner & Policy & $k=1$ & $k=2$ & $k=3$ & Cost\\
\midrule
\multicolumn{2}{@{}l}{Required coverage} & $90.91$ & $86.36$ & $81.82$ & ---\\
\midrule
LQR & Plug-in & $67.42$ & $54.65$ & $52.96$ & $3267.48$\\
LQR & Fresh fold & $91.27$ & $86.85$ & $81.43$ & $2700.25$\\
LQR & Bonferroni & $91.17$ & $91.17$ & $91.17$ & $2723.36$\\
\addlinespace[2pt]
GB & Plug-in & $61.88$ & $57.84$ & $46.20$ & $3406.94$\\
GB & Fresh fold & $90.10$ & $85.07$ & $80.54$ & $2650.42$\\
GB & Bonferroni & $90.05$ & $89.95$ & $89.58$ & $2619.40$\\
\bottomrule
\end{tabular}
\end{table}

Fresh folds reduce plug-in coverage deficits and costs, but some
undercoverage remains. Bonferroni costs $23.10$ more for LQR and $31.01$
less for GB. We evaluate the Bike Sharing data under a chronological split, providing a temporal out-of-sample test with potential distributional variation. Observed rentals are used as the demand measure, and reported costs correspond to the projected feasible policies described in Appendix~\ref{app:real-data}.

Learner-robustness and shift results are in
Appendices~\ref{app:learner-benchmark} and~\ref{app:sensitivity}.

\section{Related Work}
\label{sec:related-work}

Conformal prediction uses exchangeability to calibrate predictive coverage
without a correctly specified outcome model
\citep{shafer2008tutorial,angelopoulos2023gentle}.
Order-statistic results describe its finite-sample and training-conditional
behavior \citep{vovk2012conditional,bian2023training}. Quantile and
distributional versions adapt calibration to heterogeneous outcomes
\citep{romano2019conformalized,chernozhukov2021distributional}.
For decisions, \citet{cao2024conformal} calibrates a feature-based newsvendor
quantity; related predict-then-calibrate and conformal robust-optimization
methods use calibrated uncertainty in contextual objectives
\citep{sun2023predict,patel2024conformal}. These connections motivate our
one-sided calibration, but a serial service target also requires a valid
sequence of survivor populations. Its downstream quantile is defined among
units meeting earlier service constraints, rather than among all units.

Selection changes the calibration population and can invalidate an ordinary
rank argument if the same outcomes also determine the fitted score or the
rule selecting calibration cases. Sequential conformal prediction,
selection-conditional coverage, and conformal multiple testing provide
related ways to handle selected units
\citep{li2026coins,jin2025focal,wang2024selection}. Our analysis makes the
calibration-data dependence explicit: it quantifies the loss possible under
adaptive reuse and gives a fresh-fold construction when later scores are
refitted from earlier calibrated decisions. The resulting guarantee concerns
the entire prefix of service events for a new inventory cycle. It does not
follow by applying a marginal guarantee independently at each echelon.

The scope of conditioning also matters. Exact distribution-free coverage
at every fixed covariate value is generally unavailable
\citep{barber2021limits}; localized and group-conditional methods pursue
more structured targets \citep{guan2023localized,gibbs2023conformal,jung2023multivalid}.
We retain an unconditional prefix guarantee and derive a group version for
prespecified context cells. Nonexchangeable conformal methods study another
axis of robustness \citep{barber2023beyond}; our chronological demand-data
experiment illustrates why independence and temporal shift must be kept
separate from the recursive-selection problem.

Inventory research supplies the objective behind these service constraints.
The serial cost recursion follows \citet{clark1960optimal} and
\citet{vanhoutum2006multiechelon}. With unknown demand laws, sampling-based
policies, empirical quantiles, and partial-information robustness are studied
through expected cost or regret
\citep{levi2007sampling,perakis2008regret,levi2015data,lin2022saa,zhang2025sampling}.
Contextual work learns quantities directly from features using empirical risk
minimization, forests, deep models, or local data
\citep{ban2019big,kallus2023forests,oroojlooyjadid2020deep,besbes2026contextual}.
Those learners can supply our stagewise scores. Our contribution is to
calibrate their recursively selected service events and relate the resulting
thresholds to serial inventory cost.

\section{Conclusion}
\label{sec:discussion}

Fresh folds fix the population created by earlier decisions before its
calibration sample is used, so rank guarantees compose across the decision
path and the coverage process has a universal law determined by fold sizes
and target levels. Freshness is not free: for fixed scores, strict reuse or
COINS is exactly valid with smaller slack bounds and a maximum-type budget,
and in our matched-budget experiments valid reuse and blocked designs were
cheaper. Fresh folds, or one fresh fold per refit, suit scores genuinely
adapted to earlier calibrated thresholds.
In serial inventory, the Clark--Scarf profile is the common-shift
stationarity condition of the truncated systems. Under the support and
uniqueness conditions of Proposition~\ref{prop:bonferroni-inconsistency},
Bonferroni retains a positive cost gap.

The limits are concrete. With scarce data at greater depth, split folds
often abstain, as the chain predicts; stronger learners can pay a small
cost for conservative coverage. Dependent demand, stateful control, and
test-centred localized pools need different sampling models or corrections.

\section*{AI Use Statement}
Generative AI tools assisted with conceptual framing, literature discovery,
proof checking, experiment design and implementation, and editing of the
manuscript. AI-assisted mathematical and empirical claims were checked against
the stated assumptions, executable verification scripts, independent
simulations, and the cited sources; deterministic parts are checked in
Lean~4 (Appendix~\ref{app:lean-verification}). The authors reviewed all
material and are responsible for all text, claims, code, and references.%
\label{sec:main-end}%

\bibliographystyle{apalike}
\bibliography{references}

@misc{li2026coins,
  author = {Li, Wangcheng and Qiao, Nan and Guo, Xu and Sun, Wenguang},
  title = {{COINS}: Any-Stage-Valid and Utility-Oriented Sequential Conformal Prediction},
  year = {2026},
  eprint = {2609.07112},

  note = {arXiv:2609.07112. \href{https://arxiv.org/pdf/2609.07112v1}{Full text}},
  archivePrefix = {arXiv},
  primaryClass = {stat.ME},
  doi = {10.48550/arXiv.2609.07112}
}

@inproceedings{vovk2012conditional,
  author    = {Vovk, Vladimir},
  title     = {Conditional Validity of Inductive Conformal Predictors},
  booktitle = {Proceedings of the Asian Conference on Machine Learning},
  series    = {Proceedings of Machine Learning Research},
  volume    = {25},
  pages     = {475--490},
  year      = {2012},
  note = {\href{https://proceedings.mlr.press/v25/vovk12/vovk12.pdf}{Full text}}
}

@article{bian2023training,
  author  = {Bian, Michael and Barber, Rina Foygel},
  title   = {Training-Conditional Coverage for Distribution-Free Predictive Inference},
  journal = {Electronic Journal of Statistics},
  volume  = {17},
  number  = {2},
  pages   = {2044--2066},
  year    = {2023},
  doi     = {10.1214/23-EJS2145},
  note = {\href{https://arxiv.org/pdf/2205.03647}{Full text}}
}

@article{barber2021limits,
  author  = {Barber, Rina Foygel and Cand{\`e}s, Emmanuel J. and Ramdas, Aaditya and Tibshirani, Ryan J.},
  title   = {The Limits of Distribution-Free Conditional Predictive Inference},
  journal = {Information and Inference: A Journal of the IMA},
  volume  = {10},
  number  = {2},
  pages   = {455--482},
  year    = {2021},
  doi     = {10.1093/imaiai/iaaa017},
  note = {\href{https://arxiv.org/pdf/1903.04684}{Full text}}
}

@article{gibbs2023conformal,
  author  = {Gibbs, Isaac and Cherian, John J. and Cand{\`e}s, Emmanuel J.},
  title   = {Conformal Prediction with Conditional Guarantees},
  journal = {Journal of the Royal Statistical Society Series B: Statistical Methodology},
  volume  = {87},
  number  = {4},
  pages   = {1100--1126},
  year    = {2025},
  doi     = {10.1093/jrsssb/qkaf008},
  note = {\href{https://arxiv.org/pdf/2305.12616v4}{Full text}}
}

@incollection{vanhoutum2006multiechelon,
  author    = {van Houtum, Geert-Jan},
  title     = {Multiechelon Production/Inventory Systems: Optimal Policies, Heuristics, and Algorithms},
  booktitle = {Tutorials in Operations Research: Models, Methods, and Applications for Innovative Decision Making},
  editor    = {Johnson, M. P. and Norman, B. and Secomandi, N.},
  publisher = {INFORMS},
  address   = {Hanover, MD},
  pages     = {163--199},
  year      = {2006},
  doi       = {10.1287/educ.1063.0026},
  note = {\href{https://pubsonline.informs.org/doi/pdf/10.1287/educ.1063.0026}{Full text}}
}

@misc{fanaee2013bike,
  author = {Fanaee-T, Hadi},
  title = {Bike Sharing},
  year = {2013},
  publisher = {UCI Machine Learning Repository},
  doi = {10.24432/C5W894},

  note = {Dataset, CC BY 4.0. \href{https://archive.ics.uci.edu/dataset/275/bike+sharing+dataset}{Source}}
}

@misc{cao2024conformal,
  author        = {Cao, Junyu},
  title         = {A Conformal Approach to Feature-based Newsvendor under Model Misspecification},
  year          = {2024},
  eprint        = {2412.13159},

  note = {arXiv:2412.13159v1. \href{https://arxiv.org/pdf/2412.13159v1}{Full text}},
  archivePrefix = {arXiv},
  primaryClass  = {cs.LG}
}

@article{clark1960optimal,
  author  = {Clark, Andrew J. and Scarf, Herbert},
  title   = {Optimal Policies for a Multi-Echelon Inventory Problem},
  journal = {Management Science},
  volume  = {6},
  number  = {4},
  pages   = {475--490},
  year    = {1960},
  doi     = {10.1287/mnsc.6.4.475},
  note = {\href{https://doi.org/10.1287/mnsc.6.4.475}{Source}}
}

@article{levi2015data,
  author  = {Levi, Retsef and Perakis, Georgia and Uichanco, Joline},
  title   = {The Data-Driven Newsvendor Problem: New Bounds and Insights},
  journal = {Operations Research},
  volume  = {63},
  number  = {6},
  pages   = {1294--1306},
  year    = {2015},
  doi     = {10.1287/opre.2015.1422},
  note = {\href{https://dspace.mit.edu/server/api/core/bitstreams/54327766-c503-4ce4-ae05-4191a168fb0f/content}{Full text}}
}

@article{ban2019big,
  author  = {Ban, Gah-Yi and Rudin, Cynthia},
  title   = {The Big Data Newsvendor: Practical Insights from Machine Learning},
  journal = {Operations Research},
  volume  = {67},
  number  = {1},
  pages   = {90--108},
  year    = {2019},
  doi     = {10.1287/opre.2018.1757},
  note = {\href{https://spiral.imperial.ac.uk/bitstreams/972bd7f8-c18f-4cab-971b-c8bf0e7799df/download}{Full text}}
}

@article{zhang2025sampling,
  author  = {Zhang, Kairen and Gao, Xiangyu and Wang, Zhanyue and Zhou, Sean X.},
  title   = {Sampling-Based Approximation for Series Inventory Systems},
  journal = {Management Science},
  volume  = {71},
  number  = {10},
  pages   = {8200--8217},
  year    = {2025},
  doi     = {10.1287/mnsc.2022.01876},
  note = {\href{https://doi.org/10.1287/mnsc.2022.01876}{Source}}
}

@inproceedings{romano2019conformalized,
  author    = {Romano, Yaniv and Patterson, Evan and Cand{\`e}s, Emmanuel},
  title     = {Conformalized Quantile Regression},
  booktitle = {Advances in Neural Information Processing Systems},
  volume    = {32},
  year      = {2019},
  note = {\href{https://papers.nips.cc/paper/8613-conformalized-quantile-regression.pdf}{Full text}}
}

@inproceedings{wang2024selection,
  author    = {Wang, Xiaoning and Huo, Yuyang and Peng, Liuhua and Zou, Changliang},
  title     = {Conformalized Multiple Testing after Data-dependent Selection},
  booktitle = {Advances in Neural Information Processing Systems},
  volume    = {37},
  year      = {2024},
  doi       = {10.52202/079017-1867},
  note = {\href{https://proceedings.neurips.cc/paper_files/paper/2024/file/6b99cfe8f27a30d013f49a970aacd6e8-Paper-Conference.pdf}{Full text}}
}

@article{jin2025focal,
  author  = {Jin, Ying and Ren, Zhimei},
  title   = {Confidence on the Focal: Conformal Prediction with Selection-Conditional Coverage},
  journal = {Journal of the Royal Statistical Society Series B: Statistical Methodology},
  volume  = {87},
  number  = {4},
  pages   = {1239--1259},
  year    = {2025},
  doi     = {10.1093/jrsssb/qkaf016},
  note = {\href{https://arxiv.org/pdf/2403.03868v3}{Full text}}
}

@article{shafer2008tutorial,
  author  = {Shafer, Glenn and Vovk, Vladimir},
  title   = {A Tutorial on Conformal Prediction},
  journal = {Journal of Machine Learning Research},
  volume  = {9},
  pages   = {371--421},
  year    = {2008},
  note    = {\href{https://jmlr.org/papers/v9/shafer08a.html}{Full text}}
}

@article{guan2023localized,
  author  = {Guan, Leying},
  title   = {Localized Conformal Prediction: A Generalized Inference Framework for Conformal Prediction},
  journal = {Biometrika},
  volume  = {110},
  number  = {1},
  pages   = {33--50},
  year    = {2023},
  doi     = {10.1093/biomet/asac040},
  note    = {\href{https://doi.org/10.1093/biomet/asac040}{Source}}
}

@inproceedings{jung2023multivalid,
  author    = {Jung, Christopher and Noarov, Georgy and Ramalingam, Ramya and Roth, Aaron},
  title     = {Batch Multivalid Conformal Prediction},
  booktitle = {International Conference on Learning Representations},
  year      = {2023},
  note      = {\href{https://arxiv.org/abs/2209.15145}{Full text}}
}

@inproceedings{sun2023predict,
  author    = {Sun, Chunlin and Liu, Linyu and Li, Xiaocheng},
  title     = {Predict-then-Calibrate: A New Perspective of Robust Contextual {LP}},
  booktitle = {Advances in Neural Information Processing Systems},
  volume    = {36},
  year      = {2023},
  note      = {\href{https://papers.nips.cc/paper_files/paper/2023/hash/397271e11322fae8ba7f827c50ca8d9b-Abstract-Conference.html}{Source}}
}

@inproceedings{patel2024conformal,
  author    = {Patel, Yash P. and Rayan, Sahana and Tewari, Ambuj},
  title     = {Conformal Contextual Robust Optimization},
  booktitle = {Proceedings of the International Conference on Artificial Intelligence and Statistics},
  series    = {Proceedings of Machine Learning Research},
  volume    = {238},
  pages     = {2485--2493},
  year      = {2024},
  note      = {\href{https://proceedings.mlr.press/v238/patel24a.html}{Source}}
}

@article{levi2007sampling,
  author  = {Levi, Retsef and Roundy, Robin O. and Shmoys, David B.},
  title   = {Provably Near-Optimal Sampling-Based Policies for Stochastic Inventory Control Models},
  journal = {Mathematics of Operations Research},
  volume  = {32},
  number  = {4},
  pages   = {821--839},
  year    = {2007},
  doi     = {10.1287/moor.1070.0272},
  note    = {\href{https://doi.org/10.1287/moor.1070.0272}{Source}}
}

@article{lin2022saa,
  author  = {Lin, Meichun and Huh, Woonghee Tim and Krishnan, Harish and Uichanco, Joline},
  title   = {Technical Note---Data-Driven Newsvendor Problem: Performance of the Sample Average Approximation},
  journal = {Operations Research},
  volume  = {70},
  number  = {4},
  pages   = {1996--2012},
  year    = {2022},
  doi     = {10.1287/opre.2022.2307},
  note    = {\href{https://doi.org/10.1287/opre.2022.2307}{Source}}
}

@article{besbes2026contextual,
  author  = {Besbes, Omar and Ma, Will and Mouchtaki, Omar},
  title   = {From Contextual Data to Newsvendor Decisions: On the Actual Performance of Data-Driven Algorithms},
  journal = {Management Science},
  year    = {2026},
  doi     = {10.1287/mnsc.2023.02068},
  note    = {Published online May 2026. \href{https://doi.org/10.1287/mnsc.2023.02068}{Source}}
}

@article{angelopoulos2023gentle,
  author  = {Angelopoulos, Anastasios N. and Bates, Stephen},
  title   = {Conformal Prediction: A Gentle Introduction},
  journal = {Foundations and Trends in Machine Learning},
  volume  = {16},
  number  = {4},
  pages   = {494--591},
  year    = {2023},
  doi     = {10.1561/2200000101},
  note    = {\href{https://doi.org/10.1561/2200000101}{Source}}
}

@article{chernozhukov2021distributional,
  author  = {Chernozhukov, Victor and W{"u}thrich, Kaspar and Zhu, Yinchu},
  title   = {Distributional Conformal Prediction},
  journal = {Proceedings of the National Academy of Sciences},
  volume  = {118},
  number  = {48},
  pages   = {e2107794118},
  year    = {2021},
  doi     = {10.1073/pnas.2107794118},
  note    = {\href{https://doi.org/10.1073/pnas.2107794118}{Source}}
}

@article{barber2023beyond,
  author  = {Barber, Rina Foygel and Cand{\`e}s, Emmanuel J. and Ramdas, Aaditya and Tibshirani, Ryan J.},
  title   = {Conformal Prediction Beyond Exchangeability},
  journal = {The Annals of Statistics},
  volume  = {51},
  number  = {2},
  pages   = {816--845},
  year    = {2023},
  doi     = {10.1214/23-AOS2276},
  note    = {\href{https://doi.org/10.1214/23-AOS2276}{Source}}
}

@article{perakis2008regret,
  author  = {Perakis, Georgia and Roels, Guillaume},
  title   = {Regret in the Newsvendor Model with Partial Information},
  journal = {Operations Research},
  volume  = {56},
  number  = {1},
  pages   = {188--203},
  year    = {2008},
  doi     = {10.1287/opre.1070.0486},
  note    = {\href{https://doi.org/10.1287/opre.1070.0486}{Source}}
}

@article{kallus2023forests,
  author  = {Kallus, Nathan and Mao, Xiaojie},
  title   = {Stochastic Optimization Forests},
  journal = {Management Science},
  volume  = {69},
  number  = {4},
  pages   = {1975--1994},
  year    = {2023},
  doi     = {10.1287/mnsc.2022.4458},
  note    = {\href{https://doi.org/10.1287/mnsc.2022.4458}{Source}}
}

@article{oroojlooyjadid2020deep,
  author  = {Oroojlooyjadid, Afshin and Snyder, Lawrence V. and Tak{\'a}{\v c}, Martin},
  title   = {Applying Deep Learning to the Newsvendor Problem},
  journal = {{IISE} Transactions},
  volume  = {52},
  number  = {4},
  pages   = {444--463},
  year    = {2020},
  doi     = {10.1080/24725854.2019.1632502},
  note    = {\href{https://doi.org/10.1080/24725854.2019.1632502}{Source}}
}

\appendix
\newenvironment{appendixtabular}[1]{%
  \small
  \renewcommand{\arraystretch}{1.12}%
  \setlength{\tabcolsep}{4pt}%
  \begin{tabular*}{\linewidth}{@{\extracolsep{\fill}}#1@{}}%
}{\end{tabular*}}

\section{Model Details}
\label{app:model-details}

This appendix gives the full inventory model behind the case study in
Section~\ref{sec:model}, following \citet{vanhoutum2006multiechelon}.

\subsection{Model}
Consider a serial system with $n$ echelons, numbered from the customer-facing
echelon to the upstream supplier. Echelon $i$ has integer lead time
$L_i\ge1$, holding cost $h_i>0$, and echelon base-stock level $s_i$.
Unmet customer demand incurs penalty $p>0$. Define
$H_k=\sum_{i=k}^n h_i$, $H_{n+1}=0$, and
$\Delta_i=s_{i+1}-s_i$. We use the classical serial model of
\citet{clark1960optimal,vanhoutum2006multiechelon}; the lead-time windows are specified below.

A planning cycle begins with observed context $X=x$. Conditional on this
context, per-period demands are i.i.d., nonnegative, and have finite mean
$\mu(x)$. Let $D_0,\ldots,D_{n-1}$ be demands in disjoint lead-time
windows of lengths
\[
 w_0=L_1+1,\qquad w_i=L_{i+1},\quad i=1,\ldots,n-1.
\]
These window demands are conditionally independent given $X$. A statistical
unit is
\[
 Z=(X,D_0,\ldots,D_{n-1}).
\]
Independent planning cycles supply training, calibration, and evaluation
units. Conditional continuity is
used for the classical quantile equalities and local cost analysis;
the conformal lower bound also allows discrete demand.

For a fixed stock vector $s$, the backlog recursion is
\[
 \begin{aligned}
 B_n^{(n)}&=0,\\
 B_i^{(n)}&=(B_{i+1}^{(n)}+D_i-\Delta_i)^+,\\
 B_0^{(n)}&=(B_1^{(n)}+D_0-s_1)^+,
 \end{aligned}
\]
where the middle relation runs from $i=n-1$ down to $1$ and
$v^+=\max(v,0)$. Given $X=x$, expected cost is
\begin{equation}
 \begin{aligned}
 G_n^x(s)
 &=\sum_{i=1}^n h_i\bigl(s_i-\E[B_i^{(n)}\mid X=x]-\kappa_i(x)\bigr)\\
 &\quad +(p+H_1)\E[B_0^{(n)}\mid X=x],\\
 \kappa_i(x)&=(L_i+1)\mu(x).
 \end{aligned}
 \label{eq:cost}
\end{equation}
A contextual policy maps $x$ to $s(x)$. Its objective is the one-cycle
expected cost $\mathcal G_n(s)=\E[G_n^X(s(X))]$.
When context is constant, we write $G_n$ and $s$.
This model evaluates independent planning cycles, rather than a rolling
controller whose inventory state carries over between changing contexts.

\subsection{Timing}

Consider an \(n\)-echelon inventory system with decision vector
\(s=(s_1,\dots,s_n)\) and \(\Delta_i:=s_{i+1}-s_i\), \(i=1,\dots,n-1\).
Let $L_i\ge1$ be integers and let \(X_t\) denote demand in period \(t\),
where \(\{X_t\}_{t\ge1}\) are i.i.d., nonnegative, have mean \(\mu\),
and have a cdf \(F\) continuous on $(0,\infty)$ with $F(0)=0$ (and a
density \(f\) whenever needed). For integers
\(a\le b\), define the cumulative demand over periods \(a,\dots,b\) by
\(D_{a,b}:=\sum_{t=a}^{b} X_t\). Fix a reference period \(t_0\), and let
\(L_k\) denote the lead time of echelon \(k\). The demand aggregated over
the \(i\)-th lead-time window is
\[
D_i
:=
D_{\,t_0+L_{i+2}+\cdots+L_{n},\;\,t_0+L_{i+1}+\cdots+L_n - 1}
\quad (i=1,\dots,n-1),
\]
\[
D_0
:=
D_{\,t_0+L_2+\cdots+L_{n},\;\,t_0+L_1+\cdots+L_{n}},
\]
so \(D_i\) spans \(L_{i+1}\) periods and \(D_0\) spans \(L_1+1\); the
windows are adjacent and non-overlapping, so \(D_0,\dots,D_{n-1}\) are
independent under i.i.d.\ demand. Let \(F_i(x):=\Pbb(D_i\le x)\).

For this noncontextual specialization, use the backlog recursion above
and \eqref{eq:cost} with $\kappa_i=(L_i+1)\mu$. The objective combines
holding cost, net of downstream backlogs, with the customer backlog
penalty. The decision problem is to choose the base-stock vector $s$
minimizing this expected cost.

\section{Classical Results}
\label{app:foc}

\begin{theorem}[Classical Newsvendor equations]
\label{thm:newsvendor}
\leanref{newsvendor_fractile_telescoping}
Suppose the demand cdf is continuous on $(0,\infty)$ with no mass at zero.
For the serial system in Section~\ref{sec:model}, the stagewise optimal
solution $s^*$ in the noncontextual model satisfies, for every $k$,
\[
 \Pbb\{B_0^{(k)}(s_{1:k}^*)=0\}
 =\alpha_k:=\frac{p+H_{k+1}}{p+H_1}.
\]
If each stage objective is strictly convex, the stagewise solution is
unique. Applying the same result conditional on $X=x$ gives the equations
for $s^*(x)$.
\end{theorem}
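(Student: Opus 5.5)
The plan is to argue by induction on the truncation depth $k$, using the classical Clark--Scarf decomposition of the serial cost into nested one-dimensional newsvendor problems. First I would rewrite $G_n$ through the echelon recursion: define $\tilde G_1(y)=h_1(y-\kappa_1)+(p+H_1)\E(D_0-y)^+$ and, for $k\ge2$, the induced penalty $\tilde G_k(y)=h_k(y-\kappa_k)+\E\,\Gamma_{k-1}(y-D_{k-1})$. Here $\Gamma_{k-1}(x)=\tilde G_{k-1}(\min\{x,s^*_{k-1}\})-\tilde G_{k-1}(s^*_{k-1})$ is the shortage cost passed upward when echelon $k$ cannot deliver the level $s^*_{k-1}$, so the stagewise solution is $s^*_k=\arg\min\tilde G_k$. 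These are the standard van Houtum equations, and I would take the algebraic equivalence with \eqref{eq:cost} from \citet{vanhoutum2006multiechelon}.

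The key step is to differentiate. By continuity of the demand cdf, dominated convergence justifies differentiating under the expectation, giving $\tilde G_1'(y)=h_1-(p+H_1)\Pbb(D_0>y)$. The first-order condition at $s_1^*$ then yields $\Pbb(C_1\le s_1^*)=1-h_1/(p+H_1)=(p+H_2)/(p+H_1)=\alpha_1$.

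I would then carry the induction hypothesis in the stronger form
\[
 \tilde G_k'(y)=H_1+p-(p+H_1)\Pbb\{B_0^{(k)}(s^*_{1:k-1},y)=0\}-\sum_{i>k}h_i .
\]
Equivalently, $\tilde G_k'(y)=-(p+H_{k+1})+(p+H_1)\Pbb\{C_j\le s^*_j,\ j<k,\ C_k\le y\}$, a form that can be rewritten via Proposition~\ref{prop:event-characterization}. For the inductive step, differentiate $\E\,\Gamma_{k-1}(y-D_{k-1})$. The derivative of $\tilde G_{k-1}(\min\{x,s^*_{k-1}\})$ equals $\tilde G_{k-1}'(x)\mathbf 1\{x<s^*_{k-1}\}$. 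Substituting the hypothesis for $\tilde G_{k-1}'$ evaluated at $x=y-D_{k-1}$, and using independence of $D_{k-1}$ from $D_0,\ldots,D_{k-2}$, the shift by $D_{k-1}$ turns $C_{k-1}\le x$ into $C_k\le y$. The event $\{x<s^*_{k-1}\}$ supplies the constraint $C_{k-1}-D_{k-2}\ldots$. The main obstacle is exactly this bookkeeping: showing that $\{y-D_{k-1}<s^*_{k-1}\}$ combined with the indicator for $B_0^{(k-1)}$ reproduces the intersection $\bigcap_{j\le k}\{C_j\le s_j\}$, with the echelon cumulative structure $C_k=C_{k-1}+D_{k-1}$. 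I would handle it by conditioning on $D_{k-1}$ and splitting into the cases $x\ge s^*_{k-1}$ (where the derivative is $0$) and $x<s^*_{k-1}$. Adding $h_k$ then gives the claimed form with $H_{k+1}$.

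Setting $\tilde G_k'(s^*_k)=0$ gives $J_k(s^*)=(p+H_{k+1})/(p+H_1)=\alpha_k$, and Proposition~\ref{prop:event-characterization} converts this to $\Pbb\{B_0^{(k)}=0\}=\alpha_k$. Existence of a root follows from continuity and the limits $-(p+H_{k+1})<0$ and $h_k>0$ at $\mp\infty$. Strict convexity gives uniqueness of each minimizer, hence of the stagewise solution. Finally, the contextual statement follows by replacing every probability by its conditional version given $X=x$, since all hypotheses (conditional i.i.d.\ demands and independent windows) hold conditionally.
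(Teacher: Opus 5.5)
Your argument is correct in substance, but it takes a different route from the paper. The paper does not derive the fractile equalities at all. It maps its notation ($n$ echelons, $h_i$, truncated costs $G_k$, shortfalls $B_i^{(k)}$) onto \citet{vanhoutum2006multiechelon}. It then cites his Theorem~4 for the decomposition and his Theorem~5 for $\Pbb\{B_0^{(k)}(s^*_{1:k})=0\}=\alpha_k$, and notes that the contextual version holds pointwise in $x$.

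You import only the decomposition and reprove the fractile statement by induction on $\tilde G_k'$, which reconstructs the classical argument behind the cited theorem. Your route shows exactly where atomlessness of the windows enters: differentiability of each stage objective and continuity of $\tilde G_k'$. Since $\tilde G_k'(y)=(p+H_1)\bigl(\Pbb\{C_j\le s^*_j,\ j<k,\ C_k\le y\}-\alpha_k\bigr)$ is nondecreasing in $y$, each stage objective is automatically convex and has a minimizer. So strict convexity is needed only for uniqueness, as the statement says. This derivative is the same quantity $(p+H_1)(J_k-\alpha_k)$ that appears in Proposition~\ref{prop:common-shift}(i). The paper's route is shorter but rests entirely on the notation dictionary being right.

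There are two slips to repair.

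\emph{The ``stronger form'' is wrong.} The expression $H_1+p-(p+H_1)\Pbb\{B_0^{(k)}=0\}-\sum_{i>k}h_i$ is not equivalent to your second display. For $k=1$ and $y\to\infty$ it tends to $-H_2$ instead of $h_1$. The correct statement is $\tilde G_k'(y)=(p+H_1)\Pbb\{B_0^{(k)}(s^*_{1:k-1},y)=0\}-(p+H_{k+1})$, which is your second display.

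\emph{The case split needs the previous first-order condition, which you do not state.} Put $\phi_{k-1}(x)=\Pbb\{C_j\le s^*_j,\ j<k-1,\ C_{k-1}\le x\}$. Independence of $D_{k-1}$ from the earlier windows gives $\Pbb\{C_j\le s^*_j,\ j<k,\ C_k\le y\}=\E\,\phi_{k-1}(\min\{s^*_{k-1},y-D_{k-1}\})$.
\begin{itemize}
\item On $\{y-D_{k-1}<s^*_{k-1}\}$ the two integrands match directly. There $C_k\le y$ already forces $C_{k-1}<s^*_{k-1}$, so the constraint is implied rather than ``supplied''.
\item On $\{y-D_{k-1}\ge s^*_{k-1}\}$ the left integrand is $h_k$, while the right one is $(p+H_1)J_{k-1}(s^*)-(p+H_{k+1})$. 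These agree only because $(p+H_1)J_{k-1}(s^*)=p+H_k$, that is, $\tilde G_{k-1}'(s^*_{k-1})=0$.
\end{itemize}
So the induction must carry the fractile equality at stage $k-1$ together with the derivative formula. The same fact is what gives your limit $h_k$ at $+\infty$.
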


No additional stagewise-structure assumption is needed beyond the classical
serial model stated in Appendix~\ref{app:model-details}. To make the use of
the cited result auditable, the notation maps as follows. The $N$ stages of
\citet[Section 3.1]{vanhoutum2006multiechelon} are our $n$ echelons;
their additional holding costs $h_i=H_i-H_{i+1}$ are our $h_i$; their
subsystem cost $G_k(y_1,\ldots,y_k)$ corresponds to our truncated cost $G_k$
of Section~\ref{sec:method}, up to an additive constant in $s$; and their
shortfalls in equations (16)--(18) are exactly our $B_i^{(k)}$.
Consequently, their Theorem 4 gives \eqref{eq:cost} and the backlog recursion,
while their Theorem 5 states directly that the recursively computed optimal
base-stock levels satisfy
\[
 \Pbb\{B_0^{(k)}(s_{1:k}^*)=0\}
 =\frac{p+H_{k+1}}{p+H_1},\qquad k=1,\ldots,n.
\]
This is Theorem~\ref{thm:newsvendor}. The primitive conditions used there are
deterministic lead times, i.i.d. nonnegative per-period demand, linear echelon
holding and backlog-penalty costs, and a demand cdf continuous on
$(0,\infty)$ with no mass at zero. Our strict $h_i>0$ and $p>0$ put every
fractile strictly between zero and one. Conditional on $X=x$, the same result
applies pointwise because Appendix~\ref{app:model-details} imposes these primitive
conditions on the conditional demand law. Strict convexity, when imposed,
gives uniqueness but is not needed for the displayed service equations.

The cited theorem optimizes over real-valued echelon base-stock levels, as
does our analysis. If an application additionally imposes nonnegativity or
monotone echelon levels, the unconstrained Newsvendor equalities may be
replaced by KKT inequalities; the projection used in some experiments is
therefore reported separately and is not covered by the raw-policy cost bound.

\section{Cost Analysis}
\label{app:cost-analysis}

Service validity and proximity to the cost optimum are different questions.
A conservative stock policy can meet the target profile at a high holding
cost. This appendix states one excess-cost theorem with two parts. Part~I
controls the global procedure, including predictable refitting, through the
\emph{oscillation} of the learner's error; Part~II controls the fixed-score
procedure through a \emph{weighted variance} of the learner's bias and needs
no uniform-in-context condition. In both parts a constant offset of the
learner is absorbed by the conformal shift: constant error levels cancel out, and only
oscillations or weighted variances propagate across echelons.

\subsection{Assumptions}
\label{app:cost-setting}
For preceding stock functions $u=(u_1,\ldots,u_{k-1})$ of the context, define
\[
 \begin{aligned}
 F_k(t;u,x)
 &=\Pbb(C_k\le t\mid X=x,\ C_j\le u_j(x),\ j<k),\\
 T_k(u;x)&=\inf\{t:F_k(t;u,x)\ge\beta_k\}.
 \end{aligned}
\]
The population optimum satisfies $s_k^*(x)=T_k(s_{<k}^*;x)$, which is the
conditional form of \eqref{eq:conditional-target}. For a scalar function
$f$ of the context, $\|f\|_\infty$ is its essential supremum norm under
$P_X$, and
\[
 \operatorname{osc}(f):=\operatorname*{ess\,sup}_{x\sim P_X}f(x)
 -\operatorname*{ess\,inf}_{x\sim P_X}f(x)\in[0,\infty].
\]
A strictly lower-triangular matrix $A$ with entries $a_{kj}\ge0$ measures
the sensitivity of later quantiles to earlier stock functions:
\begin{equation}
 \|T_k(u;\cdot)-T_k(v;\cdot)\|_\infty
 \le\sum_{j<k}a_{kj}\|u_j-v_j\|_\infty .
 \label{eq:propagation-assumption}
\end{equation}
Since $A^n=0$, $(I-A)^{-1}=I+A+\cdots+A^{n-1}$ is a finite, entrywise
nonnegative sum. Throughout, $\eta\in(0,1)$ is a DKW failure level, and
$\underline m_k\ge1$ are deterministic integers with
$\lceil(m+1)\beta_k\rceil\le m$ for every $m\ge\underline m_k$; by
Theorem~\ref{thm:coverage-law}(C3) this means $\underline m_k\ge m_k^*$, so
every threshold computed from at least $\underline m_k$ survivors is finite.
Put
\[
 \bar\gamma_k:=\sqrt{\frac{\log(2n/\eta)}{2\underline m_k}}
 +\frac{\beta_k}{\underline m_k},
 \qquad
 \gamma_k:=\sqrt{\frac{\log(2n/\eta)}{2M_k}}+\frac{\beta_k}{M_k}
 \quad\text{on }\{M_k\ge\underline m_k\}.
\]
The term $\beta_k/m$ accounts for rounding the augmented order-statistic
rank; the proof below derives it from the empirical cdf at the threshold.

\paragraph{Assumptions of Part I (global procedure).}
Let $\omega_k\ge0$, $\phi_k,\delta_k>0$ and $\zeta\in[0,1)$ be constants, and
set $\bar d_k:=\omega_k+\bar\gamma_k/\phi_k$ and $\bar e:=(I-A)^{-1}\bar d$.
Let $\mathcal U_k$ be the set of stock functions $u$ with
$\|u_j-s_j^*\|_\infty\le\bar e_j$ for $j<k$, and put
$\pi_k:=\inf_{u\in\mathcal U_k}\Pbb\{C_j\le u_j(X),\ j<k\}$.
\begin{itemize}
\item[(I1)] \emph{Local regularity.} For every $u\in\mathcal U_k$ and almost
every $x$, the conditioning event in $F_k$ has positive probability,
$F_k(\cdot;u,x)$ is continuous with a unique $\beta_k$-quantile, and for
$0\le t\le\delta_k$,
\[
 F_k(T_k(u;x)+t;u,x)-\beta_k\ge\phi_k t,\qquad
 \beta_k-F_k(T_k(u;x)-t;u,x)\ge\phi_k t.
\]
Moreover \eqref{eq:propagation-assumption} holds on $\mathcal U_k$, and
$\pi_k>0$.
\item[(I2)] \emph{Oscillation of the learner's error.} Let
$e_k:=\hat q_k-T_k(\hat s_{<k};\cdot)$. The event
$\mathcal Q_k:=\{\operatorname{osc}(e_k)\le\omega_k\}$ is
$\mathcal F_{k-1}$-measurable, and
\begin{equation}
 \Pbb\Bigl(\bigcap_{k=1}^n\mathcal Q_k\Bigr)\ge1-\zeta.
 \label{eq:learner-radius}
\end{equation}
\item[(I3)] \emph{Local curvature.} For almost every $x$, $G_n^x$ is twice
continuously differentiable on a neighborhood of the box
$\mathcal S_x:=\{v:|v_k-s_k^*(x)|\le\bar e_k,\ k\le n\}$,
$\nabla G_n^x(s^*(x))=0$, $\nabla^2G_n^x\preceq\Lambda I$ on $\mathcal S_x$
for a constant $\Lambda$, and the conditional costs are integrable.
\item[(I4)] $\bar\gamma_k<\phi_k\delta_k$ for every $k$.
\end{itemize}
Choose $\underline m_k/N_k<\pi_k$ and set
\begin{equation}
 \chi_k:=\exp\{-2N_k(\pi_k-\underline m_k/N_k)^2\},
 \label{eq:count-failure}
\end{equation}
with $\chi_1:=0$, since $M_1=N_1\ge\underline m_1$. Stationarity in (I3) is
not an extra assumption under conditional independence of the windows,
finite conditional means and atomless differences $C_l-C_j$ given $X=x$ for
$0\le j<l\le n$ (for example under the densities of (L1)): it is then
Proposition~\ref{prop:common-shift}(ii) applied to the conditional nested
fractiles \eqref{eq:joint-target}.

\paragraph{Assumptions of Part II (fixed scores).}
The constant $\Lambda$ in Part~I is separate from the context-dependent
curvature bound $\Lambda(x)$ and its mean $\bar\Lambda$ below.
Here the learner $q=(q_1,\ldots,q_n)$ is \emph{fixed}, that is,
$\mathcal F_0$-measurable, and the fresh-fold procedure is run with the
residual scores $R_k=C_k-q_k(X)$; thus $\hat s=q+\hat t$. All probabilities
below are conditional on $\mathcal F_0$. For $u\in\mathbb R^k$ put
\[
 \Psi_k(u):=\Pbb\{C_j-q_j(X)\le u_j,\ j\le k\},\qquad \Psi_0:=1.
\]
\begin{itemize}
\item[(L1)] \emph{Densities.} Given $X=x$, the windows are independent with
densities. With $C_0:=0$, define
\[
 \bar f(x):=\max_{0\le j<l\le n}\|f_{C_l-C_j\mid x}\|_\infty<\infty.
\]
The conditional nested fractiles \eqref{eq:joint-target}
hold at $s^*(x)$. Put $\Lambda(x):=2n(n+1)(p+2H_1)\bar f(x)$ and assume
$0<\bar\Lambda:=\E\Lambda(X)<\infty$,
$\E[\Lambda(X)\|q(X)-s^*(X)\|_2^2]<\infty$, $\E C_n<\infty$ and
$\E\|q(X)\|_2<\infty$.
\item[(L2)] \emph{Averaged residual margin.} There is a
$t^c\in\mathbb R^n$ with $\Psi_k(t^c_{1:k})=\alpha_k$ for every $k$, and radii
$r_k>0$ such that on the box $B:=\prod_{k=1}^n[t_k^c-r_k,t_k^c+r_k]$: for
$u\in B$ and $\tau\ge0$ with $u_k+\tau\le t^c_k+r_k$,
$\Psi_k(u_{<k},u_k+\tau)-\Psi_k(u_{1:k})\ge\psi_k\tau$ with $\psi_k>0$; and for
$u,u'\in B$ with $u_k=u'_k$,
$|\Psi_k(u_{1:k})-\Psi_k(u'_{1:k})|\le\sum_{j<k}\rho_{kj}|u_j-u'_j|$.
\item[(L3)] \emph{Counts.} $\pi_k:=\inf_{u\in B}\Psi_{k-1}(u_{<k})$, the
integers $\underline m_k$ satisfy $\underline m_k/N_k<\pi_k$, and $\chi_k$ is
given by \eqref{eq:count-failure} with this $\pi_k$ ($\chi_1:=0$).
\item[(L4)] \emph{Localization.} With $\tilde A_{kj}:=\rho_{kj}/\psi_k$ for
$j<k$ (strictly lower triangular),
$\bar\epsilon_k:=\sum_{j\le k}(\alpha_k/\alpha_j)\bar\gamma_j$ and
$\bar w:=(I-\tilde A)^{-1}(\bar\epsilon/\psi)$, where division by $\psi$ is
componentwise, we have $\bar w_k<r_k$ for every $k$.
\end{itemize}
Write $e:=q-s^*$, $m_\Lambda:=\E[\Lambda(X)e(X)]/\bar\Lambda\in\mathbb R^n$,
\[
 V_\Lambda(q):=\frac{1}{\bar\Lambda}\sum_{k=1}^n
 \E\bigl[\Lambda(X)\bigl(e_k(X)-m_{\Lambda,k}\bigr)^2\bigr],
 \qquad
 \mathsf M(q):=\|t^c+m_\Lambda\|_2 .
\]
$V_\Lambda$ is a $\Lambda$-weighted variance of the learner's bias and
$\mathsf M$ is the mismatch between the conformal shift and the
$\Lambda$-weighted mean bias.

\subsection{Result}
\begin{theorem}[Unified excess cost]
\label{thm:unified-cost}
\leanref{unified_excess_cost_stage_error,unified_excess_cost_forward_substitution}
\emph{Part I (oscillation form; global procedure, refitting allowed).}
Under (I1)--(I4), on $\mathcal M:=\bigcap_k\{M_k\ge\underline m_k\}$ put
$d_k:=\omega_k+\gamma_k/\phi_k$. With probability at least
$1-\zeta-\eta-\sum_{k=1}^n\chi_k$, $\mathcal M$ holds, every threshold is finite,
and
\begin{align}
 \|\hat s_k-s_k^*\|_\infty
 &\le[(I-A)^{-1}d]_k,\qquad k=1,\ldots,n,
 \label{eq:context-main-stock}\\
 \mathcal G_n(\hat s)-\mathcal G_n(s^*)
 &\le\frac{\Lambda}{2}\|(I-A)^{-1}d\|_2^2.
 \label{eq:context-main-cost}
\end{align}

\emph{Part II (level-free weighted-$L^2$ form; fixed scores).} Under
(L1)--(L4), on $\mathcal M$ put
$\epsilon_k:=\sum_{j\le k}(\alpha_k/\alpha_j)\gamma_j$ and
$w:=(I-\tilde A)^{-1}(\epsilon/\psi)$. With probability at least
$1-\eta-\sum_{k=1}^n\chi_k$ conditional on $\mathcal F_0$, $\mathcal M$ holds,
every threshold is finite, and
\begin{itemize}
\item[(i)] $|\hat t_k-t^c_k|\le w_k\le\bar w_k$ for every $k$;
\item[(ii)] $\displaystyle
 \mathcal G_n(q+\hat t)-\mathcal G_n(s^*)
 \le\frac{\bar\Lambda}{2}\Bigl[V_\Lambda(q)+\bigl(\mathsf M(q)+\|w\|_2\bigr)^2\Bigr];$
\item[(iii)] $V_\Lambda$ and $\mathsf M$ are invariant under $q\mapsto q+c$
for constant $c\in\mathbb R^n$, and both vanish when $q-s^*$ is almost surely
constant (taking $t^c=-c$, the unique solution when each $\Psi_k$ is strictly
increasing in its last coordinate). The bound in (ii) is then
$\frac{\bar\Lambda}{2}\|w\|_2^2$.
\end{itemize}
\end{theorem}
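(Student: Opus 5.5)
Both parts follow one template. First place the event $\mathcal M$ and the finite thresholds on a high-probability event. Next, on that event, get a stagewise threshold error from a DKW bound plus a local margin. Then propagate these errors forward through the strictly lower-triangular sensitivity matrix. Finally, convert stock errors into cost through a second-order Taylor bound at a stationary point.

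\emph{Counts and DKW.} For $k\ge2$, Theorem~\ref{thm:coverage-law}(ii) gives $M_k\sim\operatorname{Binomial}(N_k,p_{k-1})$ given $\mathcal F_{k-1}$. On the event that the earlier stocks lie in $\mathcal U_k$ (Part~I) or the earlier shifts lie in $B$ (Part~II), we have $p_{k-1}\ge\pi_k$. Hoeffding's lower tail then gives $\Pbb(M_k<\underline m_k\mid\mathcal F_{k-1})\le\chi_k$. Conditional on $\mathcal F_{k-1}$ and $M_k=m$, the survivor scores are i.i.d.\ by Lemma~\ref{lem:survivor-exchangeability}. DKW at level $\eta/n$ (the factor $2n/\eta$ in $\gamma_k$) therefore bounds the sup-distance between the empirical and true survivor cdf by $\sqrt{\log(2n/\eta)/(2m)}$. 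At the augmented order statistic, the empirical cdf equals $r_k(m)/m$. This value lies within $\beta_k/m$ of $\beta_k$ on the relevant side, so $|\mathcal H_k(\hat t_k)-\beta_k|\le\gamma_k$. A union bound over stages, together with $\zeta$ for $\bigcap_k\mathcal Q_k$ in Part~I, produces the stated failure probabilities.

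\emph{Part I.} The survivor cdf of the residual $C_k-\hat q_k(X)$ is a mixture over $x$ of $F_k(\cdot+\hat q_k(x);\hat s_{<k},x)$. Because $\operatorname{osc}(e_k)\le\omega_k$, write $\hat q_k=T_k(\hat s_{<k};\cdot)+c+\epsilon(x)$ with $\epsilon\in[0,\omega_k]$. The margin (I1), combined with (I4) so that the deviation stays inside $\delta_k$, shows that $\hat t_k$ is pinned near $-c$ to within $\omega_k$ plus $\gamma_k/\phi_k$. Hence $\|\hat s_k-T_k(\hat s_{<k};\cdot)\|_\infty\le d_k$. Combining this with \eqref{eq:propagation-assumption} gives $\|\hat s_k-s_k^*\|_\infty\le d_k+\sum_{j<k}a_{kj}\|\hat s_j-s_j^*\|_\infty$. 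Forward substitution then yields \eqref{eq:context-main-stock}. It also inductively keeps $\hat s_{<k}\in\mathcal U_k$, since $d\le\bar d$ on $\mathcal M$; this justifies the count bound above, which uses $\pi_k$. Finally, a pointwise Taylor expansion of $G_n^x$ at $s^*(x)$ with $\nabla G_n^x(s^*(x))=0$ and Hessian $\preceq\Lambda I$ on $\mathcal S_x$, integrated over $X$, gives \eqref{eq:context-main-cost}.

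\emph{Part II.} The scores are now fixed, so by Theorem~\ref{thm:coverage-law}(i) the realized prefix coverage is $\Psi_k(\hat t_{1:k})=\prod_{j\le k}V_j$. Each factor satisfies $V_j\in[\beta_j-\gamma_j,\beta_j+\gamma_j]$. Expanding the product and telescoping, $|\Psi_k(\hat t_{1:k})-\alpha_k|\le\epsilon_k$. For the dropped second-order terms I would either absorb them or use the bound $\alpha_k/\alpha_j$ weights directly. By (L2) we have
\[
\psi_k|\hat t_k-t^c_k|\le\epsilon_k+\textstyle\sum_{j<k}\rho_{kj}|\hat t_j-t^c_j|,
\]
so forward substitution gives (i), with induction via (L4) keeping $\hat t\in B$. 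For (ii), (L1) bounds the Hessian of $G_n^x$ by $\Lambda(x)$: the second derivatives are combinations of densities of $C_l-C_j$ with coefficients at most $p+2H_1$. Taylor expansion at $s^*(x)$ gives $G_n^x(q+\hat t)-G_n^x(s^*)\le\frac{\Lambda(x)}2\|e(x)+\hat t\|_2^2$. Taking expectations and performing the weighted bias--variance split about $m_\Lambda$ gives $\bar\Lambda[V_\Lambda+\|m_\Lambda+\hat t\|^2]/2$. The triangle inequality $\|m_\Lambda+\hat t\|\le\mathsf M+\|w\|_2$ completes (ii). Part (iii) follows because shifting $q$ by $c$ shifts $m_\Lambda$ by $c$ and $t^c$ by $-c$.

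\emph{Main obstacle.} The hardest step is the Part~I localization of $\hat t_k$. The survivor residual cdf is an average of shifted conditional cdfs, and turning the pointwise margin into a margin for this mixture while the conditioning event itself moves with $\hat s_{<k}$ needs care. I would first bound the mixture cdf between $F_k$ evaluated at the extreme shifts, and then apply (I1) pointwise.
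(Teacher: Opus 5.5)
Your proposal is correct and follows the paper's proof essentially step for step: the order-statistic/DKW bound $|\mathcal H_k(\hat t_k)-\beta_k|\le\gamma_k$, margin inversion after centring the learner's error, forward substitution through $(I-A)^{-1}$ or $(I-\tilde A)^{-1}$ with an inductive localization that also controls the survivor counts via $\pi_k$, a second-order bound at the stationary point $s^*(x)$, and the $\Lambda$-weighted bias--variance split. When you write it out, tighten four details: the upper half of the order-statistic bound uses the left limit $(r-1)/m$ of the empirical cdf, not $r/m$; with your one-sided centring ($c=\operatorname{ess\,inf}e_k$) you must keep the asymmetric pinning $\hat t_k+c\in[-\omega_k-\gamma_k/\phi_k,\ \gamma_k/\phi_k]$, since a symmetric pinning only gives $2\omega_k+\gamma_k/\phi_k$; the exact recursion $x_k-\alpha_k=\beta_k(x_{k-1}-\alpha_{k-1})+x_{k-1}(V_k-\beta_k)$ with $x_{k-1}\le1$ yields $\epsilon_k$ with no dropped terms; and in Part~II the paper bounds the Lipschitz constant of $\nabla G_n^x$ by $\Lambda(x)$ through maximizer changes, with $\nabla G_n^x(s^*(x))=0$ supplied by Proposition~\ref{prop:common-shift}(ii), rather than a Hessian, which need not exist everywhere when the densities are merely bounded.
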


\begin{remark}[Scope of the two parts]
\label{rem:cost-scope}
\emph{Specializations of Part I.} Since
$\operatorname{osc}(e_k)\le2\|e_k\|_\infty$, Part~I is never worse than the
sup-norm stage error $2\|e_k\|_\infty+\gamma_k/\phi_k$. If
$e_k=\hat q_k-T_k(\hat s_{<k};\cdot)$ is an $\mathcal F_{k-1}$-measurable
\emph{constant} in $x$, of any size, then $\operatorname{osc}(e_k)=0$, so
$\omega_k=0$ is admissible and the error is the calibration term alone. The
offset is measured from the moving target $T_k(\hat s_{<k};\cdot)$: a learner
$q=s^*+c$ has $e_k=c_k+T_k(s^*_{<k};\cdot)-T_k(\hat s_{<k};\cdot)$, whose
oscillation is at most $2\sum_{j<k}a_{kj}\|\hat s_j-s^*_j\|_\infty$ by
\eqref{eq:propagation-assumption} on $\mathcal U_k$ and is generally positive
for $k\ge2$; that
learner with fixed scores is covered by Part~II(iii). In the noncontextual
model $\hat q_k$ and $T_k(\hat s_{<k})$ are scalars, $e_k$ is constant, and
Part~I holds with $\omega=0$ and $\zeta=0$.
\emph{Scope.} Part~I requires uniform-in-$x$ margins and a finite
oscillation. It is vacuous for the simulation design of
Section~\ref{sec:experiments}: the learner is linear, the true quantiles are
exponential in an unbounded Gaussian context, so
$\operatorname{osc}(e_k)=\infty$ and the infimum of the conditional margins
is zero. No finite partition of the context space repairs this, because an
unbounded tail cell remains. Part~II is the version that applies to that
design, and only to its fixed-score arm. For the refitted procedure the
map $u\mapsto\Psi_k(u)$ becomes random and Part~II would need an additional
Lipschitz-in-thresholds condition; that extension is open.
\emph{Applicability of Part II to the simulation design.} Per-period demand
is $\mathrm{Gamma}(2,\mu(x)/2)$, so every $C_l-C_j$ given $X=x$ is Gamma with
shape at least $2$ and scale $\mu(x)/2$, and its density is bounded by
$2/(e\mu(x))$. Indeed, the per-period density reaches that maximum,
and each window sum is a convolution of per-period densities;
$\|f*g\|_\infty\le\|f\|_\infty\int g=\|f\|_\infty$ preserves the bound.
Hence $\bar f(x)\le2/(e\mu(x))$ and, with $n=3$, $p=9$,
$H_1=2$ and $\E[1/\mu(X)]=0.1e^{0.08125}$ for
$\mu(X)=10\exp(0.35X_1-0.20X_2)$, $X\sim N(0,I_2)$,
$\bar\Lambda\le312\cdot(2/e)\cdot0.1e^{0.08125}\approx24.9$. The linear-normal
learner has lognormal-type moments, so $\E[\Lambda\|e\|^2]<\infty$.
The local margins hold on a sufficiently small box around $t^c$.
To see this, condition on $X,D_0,\ldots,D_{k-2}$ and set $C_0=0$ for
$k=1$. The last-coordinate derivative is
\[
 \partial_{u_k}\Psi_k(u)
 =\E\!\left[
 \mathbf1_{\{C_j\le q_j(X)+u_j,\ j<k\}}
 f_{D_{k-1}\mid X}\bigl(q_k(X)+u_k-C_{k-1}\bigr)
 \right].
\]
The Gamma density, extended by zero below zero, is continuous and
bounded by the integrable function $2/(e\mu(X))$. Earlier boundary
events have probability zero. Dominated convergence therefore makes
this derivative continuous in $u$. At $u=t^c_{1:k}$ it is positive:
$\Psi_k(t^c_{1:k})=\alpha_k>0$ implies a positive-probability set on
which the preceding constraints hold and
$q_k(X)+t_k^c-C_{k-1}>0$, where the Gamma density is positive.
Continuity gives a positive lower bound $\psi_k$ on a sufficiently
small box. The preceding-prefix probabilities are continuous and
equal $\alpha_{k-1}>0$ at its center, giving $\pi_k>0$ after shrinking
the box if needed. Changing an earlier threshold changes the event
only across an interval whose probability is bounded by its width
times $\E\bar f(X)$, so one may take finite
$\rho_{kj}=\E\bar f(X)$. With the box fixed, $\bar w\to0$ as the
$\underline m_k$ grow, which gives (L4).
We do not compute numerical lower bounds for $\psi_k$ or $\pi_k$. For a
misspecified learner, (ii) is a \emph{plateau}
$\frac{\bar\Lambda}{2}(V_\Lambda+\mathsf M^2)$ plus a vanishing noise term,
not a vanishing rate.
\end{remark}

\begin{proposition}[Fixed-depth excess-cost rates]
\label{prop:cost-rates}
Fix $n$ and let the fold sizes $N_{k,N}\to\infty$.
(a) Suppose the Part~I assumptions hold for every $N$ with constants
$\phi_k,\delta_k,A,\Lambda$ and a lower bound $\underline\pi_k>0$ on $\pi_k$
that do not depend on $N$, with oscillation radii $\omega_{k,N}\le C\sigma_{k,N}$
for nonnegative sequences $\sigma_{k,N}$, and failure probabilities
$\zeta_N\to0$. Then
\[
 \mathcal G_n(\hat s_N)-\mathcal G_n(s^*)
 =O_{\Pbb}\Bigl(\sum_k\sigma_{k,N}^2+\sum_kN_{k,N}^{-1}\Bigr).
\]
In particular the rate is $O_{\Pbb}(\sum_kN_{k,N}^{-1})$ whenever each
$\hat q_k-T_k(\hat s_{<k};\cdot)$ is an $\mathcal F_{k-1}$-measurable constant
in $x$ (so $\omega=0$), including the noncontextual model.
(b) For a fixed learner satisfying (L1)--(L2) with constants that do not
depend on $N$, and (L3) with
$\underline m_{k,N}=\lfloor\pi_kN_{k,N}/2\rfloor$, where
$\pi_k=\inf_{u\in B}\Psi_{k-1}(u_{<k})$ does not depend on $N$ because $B$ is
fixed,
\[
 \mathcal G_n(q+\hat t_N)-\mathcal G_n(s^*)
 \le\frac{\bar\Lambda}{2}\bigl(V_\Lambda+\mathsf M^2\bigr)
 +O_{\Pbb}\Bigl(\mathsf M\Bigl(\sum_kN_{k,N}^{-1}\Bigr)^{1/2}
 +\sum_kN_{k,N}^{-1}\Bigr),
\]
and the excess cost is $O_{\Pbb}(\sum_kN_{k,N}^{-1})$ when $q-s^*$ is almost
surely constant.
\end{proposition}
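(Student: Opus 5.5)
The plan is to derive both rates directly from Theorem~\ref{thm:unified-cost}. Since that theorem holds with high probability, the only work is to turn its deterministic bounds on good events into $O_{\Pbb}$ statements. The key quantity is the calibration radius $\gamma_k$. On $\{M_k\ge\underline m_k\}$ we have $\gamma_k\le\bar\gamma_k$, and if $\underline m_k\asymp N_k$ then $\bar\gamma_k=O(N_k^{-1/2})$ for any fixed DKW level $\eta$. Hence $\gamma_k^2=O(N_k^{-1})$, up to the factor $\log(2n/\eta)$, which is constant in $N$ once $\eta$ is fixed.

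For (a), fix $\epsilon>0$ and choose $\eta=\epsilon/3$. Take $\underline m_{k,N}=\lfloor\underline\pi_kN_{k,N}/2\rfloor$, so that $\pi_k-\underline m_k/N_k\ge\underline\pi_k/2$. Then the bound \eqref{eq:count-failure} gives $\chi_{k,N}\le\exp\{-N_{k,N}\underline\pi_k^2/2\}\to0$. Condition (I4), $\bar\gamma_k<\phi_k\delta_k$, holds for all large $N$ because $\bar\gamma_k\to0$ and $\phi_k,\delta_k$ are fixed. We also need $\pi_k$ to satisfy (I1) along the sequence: the set $\mathcal U_k$ shrinks as $\bar e$ shrinks, so $\pi_k\ge\underline\pi_k$ is exactly the assumed uniform lower bound. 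Set $\omega_k=C\sigma_{k,N}$. Since $\zeta_N\to0$, the failure probability $\zeta_N+\eta+\sum_k\chi_{k,N}$ is below $\epsilon$ for large $N$. On the complementary event, \eqref{eq:context-main-cost} bounds the excess cost by $\frac\Lambda2\|(I-A)^{-1}d\|_2^2$. The matrix $(I-A)^{-1}$ is fixed and finite, so this is at most $c\,\|d\|_2^2\le 2c\sum_k(\omega_k^2+\gamma_k^2/\phi_k^2)$, which is $O(\sum_k\sigma_{k,N}^2+\sum_kN_{k,N}^{-1})$ with a constant depending on $\epsilon$ only through $\log(6n/\epsilon)$. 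This is precisely the $O_{\Pbb}$ statement. The special case uses $\omega=0$, as in Remark~\ref{rem:cost-scope}, and the noncontextual model also has $\zeta=0$.

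For (b), use the same choices of $\eta$ and $\underline m_{k,N}$. Now $\chi_k\le\exp\{-N_k\pi_k^2/2\}\to0$ because the box $B$, and hence $\pi_k$, is fixed. Condition (L4) holds for large $N$: $\bar\epsilon_k=\sum_{j\le k}(\alpha_k/\alpha_j)\bar\gamma_j\to0$, so $\bar w\to0<r_k$, while $\tilde A$ and $\psi$ are fixed. Part~II(ii) then gives, on an event of probability at least $1-\epsilon$, the bound
\[
\tfrac{\bar\Lambda}{2}\bigl[V_\Lambda+\mathsf M^2+2\mathsf M\|w\|_2+\|w\|_2^2\bigr].
\]
Here $\|w\|_2\le\|(I-\tilde A)^{-1}\|\,\|\epsilon/\psi\|_2=O\bigl((\sum_kN_k^{-1})^{1/2}\bigr)$, because $\epsilon_k$ is a fixed combination of the $\gamma_j\le\bar\gamma_j=O(N_j^{-1/2})$. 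Expanding the square gives the cross term $O(\mathsf M(\sum_kN_k^{-1})^{1/2})$ and the quadratic term $O(\sum_kN_k^{-1})$. When $q-s^*$ is almost surely constant, Part~II(iii) gives $V_\Lambda=\mathsf M=0$, which leaves only the last term.

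The main obstacle is bookkeeping rather than analysis. We must make sure that every constant in Theorem~\ref{thm:unified-cost} ($\pi_k$, the margins, and $\Lambda$ or $\bar\Lambda$) stays fixed as $N$ grows, while the side conditions (I4) and (L4) are checked for all sufficiently large $N$. In (a) this is the subtle point, because $\mathcal U_k$ depends on $N$ through $\bar e$. I would handle it by assuming the hypotheses along the sequence, as the statement allows, and using the uniform bound $\underline\pi_k$ in place of $\pi_k$ when choosing $\underline m_k$.
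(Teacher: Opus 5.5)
Your proposal is correct and follows essentially the same route as the paper. You fix a confidence level, take $\underline m_{k,N}=\lfloor\underline\pi_kN_{k,N}/2\rfloor$ so that $\chi_{k,N}\le\exp\{-N_{k,N}\underline\pi_k^2/2\}\to0$, check (I4) and (L4) for large $N$, and read the rates off Theorem~\ref{thm:unified-cost} using $\bar\gamma_{k,N}=O(N_{k,N}^{-1/2})$ and the expansion of $(\mathsf M+\|w\|_2)^2$.

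Two small bookkeeping points. First, the paper also records that $\underline m_{k,N}\ge m_k^*$ for large $N$, which makes $\underline m_k$ admissible and the thresholds finite; you skip this check. Second, in (a) you should keep the given radii $\omega_{k,N}$ and bound $\omega_{k,N}^2\le C^2\sigma_{k,N}^2$, rather than redefining $\omega_k=C\sigma_{k,N}$. That redefinition would enlarge $\bar e$, and hence $\mathcal U_k$ and the box in (I3), where the hypotheses are not assumed.
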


\subsection{Proof of Theorem~\ref{thm:unified-cost}}
\label{app:proof-unified-cost}

We begin with a cdf error bound shared by both parts. For Part~I, we
control stock errors through the learner's oscillation and then bound
cost. For Part~II, we instead control fixed-score shifts and use a weighted
bias--variance decomposition. In each part, localization and survivor
counts are established together by induction.

\paragraph{Bound the cdf error of the selected order statistic.}
Let $Y_1,\ldots,Y_m$ be i.i.d.\ with a continuous cdf $H$, let
$r=\lceil(m+1)\beta\rceil\le m$, and let $\hat t$ be the $r$-th order
statistic. If $\sup_y|\hat H_m(y)-H(y)|\le\epsilon$ for the empirical cdf
$\hat H_m$, then $|H(\hat t)-\beta|\le\epsilon+\beta/m$. Indeed, the $Y_i$ are
almost surely distinct, so $\hat H_m(\hat t)=r/m$ and
$\hat H_m(\hat t-)=(r-1)/m$. The lower bound uses $r\ge(m+1)\beta\ge m\beta$:
\[
 H(\hat t)\ge\frac rm-\epsilon\ge\beta-\epsilon.
\]
For the upper bound, continuity of $H$ and $r<(m+1)\beta+1$ give
\[
 H(\hat t)=\lim_{y\uparrow\hat t}H(y)
 \le\frac{r-1}{m}+\epsilon<\beta+\frac\beta m+\epsilon.
\]
Together these imply the claimed absolute-error bound. By the
Dvoretzky--Kiefer--Wolfowitz inequality with Massart's constant,
$\epsilon=\sqrt{\log(2n/\eta)/(2m)}$ fails with probability at most
$2e^{-2m\epsilon^2}=\eta/n$. Thus $|H(\hat t)-\beta|\le\gamma$ with
$\gamma=\epsilon+\beta/m$, except with probability $\eta/n$.

\paragraph{Stage error for Part I.}
Fix $k$ and condition on an $\mathcal F_{k-1}$-history on which
$\hat s_{<k}\in\mathcal U_k$ and $\mathcal Q_k$ holds, and on $M_k=m\ge\underline m_k$.
Write $q=\hat q_k$, $T=T_k(\hat s_{<k};\cdot)$ and $e=e_k=q-T$. By
Lemma~\ref{lem:survivor-exchangeability}, the selected residuals are i.i.d.\
with cdf
\[
 \begin{aligned}
 H(z)&=\Pbb(C_k-q(X)\le z\mid E_{k-1}(Z),\mathcal F_{k-1})\\
 &=\E[F_k(q(X)+z;\hat s_{<k},X)\mid E_{k-1}(Z),\mathcal F_{k-1}].
 \end{aligned}
\]
Assumption (I1) and dominated convergence make $H$ continuous. Since $r=\lceil(m+1)\beta_k\rceil\le m$,
the correction $\hat t_k$ is finite, and the order-statistic cdf bound gives
$|H(\hat t_k)-\beta_k|\le\gamma_k$ outside a conditional event of probability
at most $\eta/n$.

On $\mathcal Q_k$, $\operatorname{osc}(e)\le\omega_k<\infty$, so
$c:=(\operatorname{ess\,sup}e+\operatorname{ess\,inf}e)/2$ is finite and
$|e(x)-c|\le\omega_k/2$ for almost every $x$. For $0\le t\le\delta_k$,
$e(x)+(-c+\omega_k/2+t)\ge t$, so
$\{C_k\le T(X)+t\}\subseteq\{C_k-q(X)\le-c+\omega_k/2+t\}$; conditioning on
$X$ and using the margin in (I1) at $u=\hat s_{<k}$,
\[
 H(-c+\omega_k/2+t)\ge
 \E[F_k(T(X)+t;\hat s_{<k},X)\mid E_{k-1}(Z),\mathcal F_{k-1}]
 \ge\beta_k+\phi_kt .
\]
For the lower side, $e(x)-c-\omega_k/2-t\le-t$ gives
\[
 H(-c-\omega_k/2-t)
 \le\E[F_k(T(X)-t;\hat s_{<k},X)\mid E_{k-1}(Z),\mathcal F_{k-1}]
 \le\beta_k-\phi_kt.
\]
We now invert these two cdf bounds. Suppose
$\hat t_k+c>\omega_k/2+\gamma_k/\phi_k$ and put
$t=\hat t_k+c-\omega_k/2>\gamma_k/\phi_k$. If $t\le\delta_k$ then
$H(\hat t_k)\ge\beta_k+\phi_kt>\beta_k+\gamma_k$; if $t>\delta_k$ then, by
monotonicity, $H(\hat t_k)\ge\beta_k+\phi_k\delta_k>\beta_k+\gamma_k$ because
$\gamma_k\le\bar\gamma_k<\phi_k\delta_k$ by (I4). Both contradict the DKW
event. If instead $\hat t_k+c<-\omega_k/2-\gamma_k/\phi_k$, set
$t=-\hat t_k-c-\omega_k/2>\gamma_k/\phi_k$. The lower-side cdf bound at
$\min(t,\delta_k)$ gives $H(\hat t_k)<\beta_k-\gamma_k$, again a
contradiction. Thus
\[
 |\hat t_k+c|\le\frac{\omega_k}{2}+\frac{\gamma_k}{\phi_k}.
\]
Combining this shift bound with $|e(x)-c|\le\omega_k/2$ gives
\[
 |\hat s_k(x)-T(x)|=|(e(x)-c)+(\hat t_k+c)|
 \le\omega_k+\gamma_k/\phi_k=d_k
 \quad\text{for almost every }x.
\]

\paragraph{Forward substitution and localization for Part I.}
Let $\epsilon'_k:=\|\hat s_k-s_k^*\|_\infty$. Since
$s_k^*=T_k(s^*_{<k};\cdot)$ and both $\hat s_{<k}$ and $s^*_{<k}$ lie in
$\mathcal U_k$, the triangle inequality, the stage-error bound and \eqref{eq:propagation-assumption} give
\[
 \begin{aligned}
 \epsilon'_k
 &\le\|\hat s_k-T_k(\hat s_{<k};\cdot)\|_\infty
   +\|T_k(\hat s_{<k};\cdot)-T_k(s^*_{<k};\cdot)\|_\infty\\
 &\le d_k+\sum_{j<k}a_{kj}\epsilon'_j.
 \end{aligned}
\]
Define $v:=(I-A)^{-1}d$, so $v_k=d_k+\sum_{j<k}a_{kj}v_j$.
If $\epsilon'_j\le v_j$ for $j<k$, nonnegativity of $a_{kj}$ gives
\[
 \epsilon'_k\le d_k+\sum_{j<k}a_{kj}v_j=v_k.
\]
This proves
\eqref{eq:context-main-stock} by induction. On $\mathcal M$,
$\gamma\le\bar\gamma$, so $d\le\bar d$ and $\epsilon'\le v\le\bar e$
componentwise; in particular $\hat s_{<k+1}\in\mathcal U_{k+1}$, which is
the localization needed at the next stage.

\paragraph{Probability bookkeeping for Part I.}
Let $\mathsf D_k$ be the DKW event of stage $k$ and
$\Omega_k:=\bigcap_{j\le k}(\mathcal Q_j\cap\{M_j\ge\underline m_j\}\cap\mathsf D_j)$,
$\Omega_0$ the sure event. The stage-error and propagation bounds show inductively that on $\Omega_k$
the bounds hold through stage $k$ and $\hat s_{<k+1}\in\mathcal U_{k+1}$.
By the first-failure decomposition,
\[
 \Pbb(\Omega_n^c)\le\sum_{k=1}^n\Bigl[\Pbb(\Omega_{k-1}\cap\mathcal Q_k^c)
 +\Pbb(\Omega_{k-1}\cap\mathcal Q_k\cap\{M_k<\underline m_k\})
 +\Pbb(\Omega_{k-1}\cap\{M_k\ge\underline m_k\}\cap\mathsf D_k^c)\Bigr].
\]
The first events are disjoint: if $k<l$, then $\Omega_{l-1}$ requires
$\mathcal Q_k$, whereas $\Omega_{k-1}\cap\mathcal Q_k^c$ excludes it.
Their union is contained in $(\bigcap_j\mathcal Q_j)^c$, so their
probabilities sum to at most $\zeta$ by (I2).

On $\Omega_{k-1}\in\mathcal F_{k-1}$, the earlier stocks lie in
$\mathcal U_k$, hence
$p_{k-1}=\Pbb(E_{k-1}(Z)\mid\mathcal F_{k-1})\ge\pi_k$.
For $k\ge2$, write $B_k\sim\operatorname{Binomial}(N_k,\pi_k)$.
The conditional binomial law of $M_k$ is stochastically increasing in
its success probability, so on this event
\[
 \begin{aligned}
 \Pbb(M_k<\underline m_k\mid\mathcal F_{k-1})
 &\le\Pbb(B_k\le\underline m_k)\\
 &=\Pbb\bigl(B_k-N_k\pi_k\le
       -N_k(\pi_k-\underline m_k/N_k)\bigr)
 \le\chi_k.
 \end{aligned}
\]
The last inequality is Hoeffding's lower-tail bound, with positive
deviation $\pi_k-\underline m_k/N_k$. For $k=1$ the count cannot fail.
Multiplying this conditional bound by
$\mathbf1_{\Omega_{k-1}\cap\mathcal Q_k}$ and taking expectations bounds
the second term by $\chi_k$.

For the third term, condition further on $M_k=m\ge\underline m_k$.
The preceding localization makes (I1) applicable, so the selected-score
cdf is continuous and the DKW failure probability is at most $\eta/n$.
Averaging over these counts and then over
$\Omega_{k-1}$ bounds that term by $\eta/n$. Summing gives
$\Pbb(\Omega_n^c)\le\zeta+\sum_k\chi_k+\eta$.

\paragraph{Cost for Part I.}
On $\Omega_n$, $\hat s(x)\in\mathcal S_x$ for almost every $x$. Taylor's
theorem on the convex box $\mathcal S_x$, (I3) and $\nabla G_n^x(s^*(x))=0$
give
\[
 G_n^x(\hat s(x))-G_n^x(s^*(x))
 \le\frac\Lambda2\|\hat s(x)-s^*(x)\|_2^2
 \le\frac\Lambda2\|(I-A)^{-1}d\|_2^2.
\]
The test context is independent of the
$\mathcal F_n$-measurable policy, so integrating over $X$ proves
\eqref{eq:context-main-cost}.

\paragraph{Prefix cdf errors for Part II.}
Because $q$ is fixed, $E_k(z)=\{C_j-q_j(x)\le\hat t_j,\ j\le k\}$, and
Theorem~\ref{thm:coverage-law}(i) gives
$p_k=\Pbb(E_k(Z)\mid\mathcal F_k)=\Psi_k(\hat t_{1:k})=\prod_{j\le k}V_j$,
where $V_j$ is the conditional cdf $\mathcal H_j$ of the stage-$j$ residual
among survivors evaluated at $\hat t_j$. By (L1), $\mathcal H_j$ is
continuous. On
$\{M_j\ge\underline m_j\}$, the order-statistic cdf bound gives $|V_j-\beta_j|\le\gamma_j$ outside a
conditional event $\mathsf D_j^c$ of probability at most $\eta/n$. On these events, put $x_k:=\Psi_k(\hat t_{1:k})$. Using
$x_k=x_{k-1}V_k$ and $\alpha_k=\beta_k\alpha_{k-1}$,
\[
 x_k-\alpha_k
 =\beta_k(x_{k-1}-\alpha_{k-1})+x_{k-1}(V_k-\beta_k).
\]
Since $0\le x_{k-1}\le1$, this implies
\[
 |x_k-\alpha_k|\le\beta_k|x_{k-1}-\alpha_{k-1}|+\gamma_k.
\]
Iterating from $x_0=\alpha_0=1$ gives
\[
 |\Psi_k(\hat t_{1:k})-\alpha_k|
 \le\sum_{j\le k}\frac{\alpha_k}{\alpha_j}\gamma_j=\epsilon_k.
\]

\paragraph{Triangular inversion and localization for Part II.}
Suppose $\hat t_{<k}\in\prod_{j<k}[t^c_j-r_j,t^c_j+r_j]$ with
$|\hat t_j-t^c_j|\le w_j$ for $j<k$. By (L2) (Lipschitz part, with
$u=(\hat t_{<k},t^c_k)$ and $u'=t^c_{1:k}$) and $\Psi_k(t^c_{1:k})=\alpha_k$,
\[
 \bigl|\Psi_k(\hat t_{1:k})-\Psi_k(\hat t_{<k},t^c_k)\bigr|
 \le\epsilon_k+\sum_{j<k}\rho_{kj}|\hat t_j-t^c_j|
 \le\epsilon_k+\sum_{j<k}\rho_{kj}w_j=\psi_kw_k,
\]
where the last equality is the $k$-th row of $w=\tilde Aw+\epsilon/\psi$.
We must first place $\hat t_k$ inside the box before using its local margin.

\emph{Localization.} If $\hat t_k>t^c_k+r_k$ (including $\hat t_k=+\infty$),
monotonicity and the margin in (L2) from the base point $(\hat t_{<k},t^c_k)$
give $\Psi_k(\hat t_{1:k})-\Psi_k(\hat t_{<k},t^c_k)\ge\psi_kr_k$, so
$r_k\le w_k$. On $\mathcal M$, $\gamma\le\bar\gamma$ implies
$\epsilon\le\bar\epsilon$ and, since $(I-\tilde A)^{-1}$ is entrywise
nonnegative, $w\le\bar w$; thus $r_k\le\bar w_k<r_k$ by (L4), a
contradiction. If $\hat t_k<t^c_k-r_k$, monotonicity and the margin between
$t^c_k-r_k$ and $t^c_k$ instead give
\[
 \Psi_k(\hat t_{<k},t^c_k)-\Psi_k(\hat t_{1:k})\ge\psi_kr_k,
\]
leading to the same contradiction $r_k\le\bar w_k<r_k$.

\emph{Inversion.} Hence
$|\hat t_k-t^c_k|\le r_k$, and the margin applied between $\hat t_k$ and
$t^c_k$ gives $\psi_k|\hat t_k-t^c_k|\le\psi_kw_k$. Since $\psi_k>0$, division gives (i) at stage $k$.

\emph{Counts.} Define
$\Omega_k:=\bigcap_{j\le k}(\{M_j\ge\underline m_j\}\cap\mathsf D_j)$,
with $\Omega_0$ the sure event. On $\Omega_{k-1}$, the induction places
$\hat t_{<k}$ inside the corresponding coordinates of $B$. Therefore
$p_{k-1}=\Psi_{k-1}(\hat t_{<k})\ge\pi_k$, and the conditional binomial
tail bound just proved gives
$\Pbb(M_k<\underline m_k\mid\mathcal F_{k-1})\le\chi_k$ there.
Conditional on a sufficient count, DKW fails with probability at most
$\eta/n$. Thus
\[
 \Pbb(\Omega_{k-1}\cap\Omega_k^c\mid\mathcal F_0)
 \le\chi_k+\eta/n.
\]
The first-failure events on the left are disjoint and their union is
$\Omega_n^c$. Summing proves
$\Pbb(\Omega_n^c\mid\mathcal F_0)\le\sum_k\chi_k+\eta$;
the inversion induction gives (i) on $\Omega_n$.
Note that the count at stage $k$ is controlled only on the event where the
earlier thresholds are already localized; this is why the localization
condition (L4) is stated with the deterministic $\bar w$.

\paragraph{Gradient smoothness for Part II.}
Fix $x$ with $\bar f(x)<\infty$ and write $G=G_n^x$; all probabilities
in this paragraph are conditional on $X=x$. By
Proposition~\ref{prop:common-shift} and its proof
(Appendix~\ref{app:proof-common-shift}, part (a)),
$\partial_jG(s)=-(p+H_1)\Pbb(A_0(s)=j)+\sum_{i\le j}h_i\Pbb(A_i(s)=j)$, where
$A_i(s)$ is the almost surely unique maximizer of $W_l(s)=C_l-s_l$ over
$i\le l\le n$ (with $W_0=0$). For two stock vectors $s,s'$,
\[
 \|\nabla G(s)-\nabla G(s')\|_1
 \le2(p+H_1)\Pbb(A_0(s)\ne A_0(s'))+2\sum_{i=1}^nh_i\Pbb(A_i(s)\ne A_i(s')),
\]
because changing the maximizing index changes at most two of the
coordinate indicators. If a maximizing index changes from $j$ to $l$,
then $W_j(s)>W_l(s)$ and $W_l(s')>W_j(s')$ almost surely.
After relabeling the pair so that $j<l$, this says that $C_l-C_j$ lies
between $s_l-s_j$ and $s'_l-s'_j$. The interval has length at most
\[
 |(s_l-s_j)-(s'_l-s'_j)|
 \le |s_l-s'_l|+|s_j-s'_j|
 \le2\|s-s'\|_\infty,
\]
where $s_0=s'_0=0$ for pairs involving index zero. The conditional
density bound makes the probability for each pair at most
$2\bar f(x)\|s-s'\|_\infty$. A union bound over the
$n(n+1)/2$ pairs therefore gives, for every $i$,
\[
 \Pbb(A_i(s)\ne A_i(s'))
 \le n(n+1)\bar f(x)\|s-s'\|_\infty.
\]
Substituting into the gradient bound and using
$\|v\|_\infty\le\|v\|_2\le\|v\|_1$ gives
\[
 \begin{aligned}
 \|\nabla G(s)-\nabla G(s')\|_2
 &\le\|\nabla G(s)-\nabla G(s')\|_1\\
 &\le2(p+2H_1)n(n+1)\bar f(x)\|s-s'\|_\infty\\
 &\le\Lambda(x)\|s-s'\|_2.
 \end{aligned}
\]

\paragraph{Global descent and weighted bias--variance for Part II.}
For $d=y-z$, integration along the segment and the gradient bound give
\[
 \begin{aligned}
 G(y)-G(z)-\langle\nabla G(z),d\rangle
 &=\int_0^1\langle\nabla G(z+td)-\nabla G(z),d\rangle\,\mathrm dt\\
 &\le\int_0^1\Lambda(x)t\|d\|_2^2\,\mathrm dt
 =\frac{\Lambda(x)}2\|d\|_2^2.
 \end{aligned}
\]
By (L1) and
Proposition~\ref{prop:common-shift}(ii), $\nabla G_n^x(s^*(x))=0$, so
$G_n^x(y)-G_n^x(s^*(x))\le\frac{\Lambda(x)}2\|y-s^*(x)\|_2^2$ for \emph{every}
$y$. This upper bound uses stationarity, without requiring convexity
of $G_n^x$.

To integrate the bound, first note that the costs are finite. The nested
fractiles and nonnegative demand imply $s_k^*(x)>0$ and
\[
 \begin{aligned}
 \alpha_{k-1}-\alpha_k
 &=\Pbb(C_j\le s_j^*(x)\ (j<k),\ C_k>s_k^*(x)\mid X=x),\\
 (\alpha_{k-1}-\alpha_k)s_k^*(x)
 &\le\E[C_k\mid X=x].
 \end{aligned}
\]
Thus $\E\|s^*(X)\|_2<\infty$ follows from $\E C_n<\infty$.
The backlog maxima grow at most linearly in $C_n$ and the stock levels,
so these moment bounds and (L1) make the costs of $s^*$ and $q+t$
integrable for every finite $t$. On $\Omega_n$, fix the realized
$\mathcal F_n$-measurable vector $\hat t$ and integrate over a fresh
context $X$. With $y=q(x)+\hat t$ this gives
\[
 \mathcal G_n(q+\hat t)-\mathcal G_n(s^*)
 \le\tfrac12\E\bigl[\Lambda(X)\|e(X)+\hat t\|_2^2\bigr].
\]
It remains to separate the variation in $e$ from its mean level.
For any $t\in\mathbb R^n$, expand around $m_\Lambda$:
\[
 \begin{aligned}
 \E[\Lambda\|e+t\|_2^2]
 &=\E[\Lambda\|e-m_\Lambda\|_2^2]
   +2\langle\E[\Lambda(e-m_\Lambda)],t+m_\Lambda\rangle
   +\bar\Lambda\|t+m_\Lambda\|_2^2\\
 &=\bar\Lambda\bigl[V_\Lambda+\|t+m_\Lambda\|_2^2\bigr].
 \end{aligned}
\]
The cross term is zero because
$\E[\Lambda(e-m_\Lambda)]=\E[\Lambda e]-\bar\Lambda m_\Lambda=0$.
All terms are finite: (L1) and Cauchy--Schwarz give
\[
 \E[\Lambda\|e\|_2]
 \le\sqrt{\E\Lambda}\sqrt{\E[\Lambda\|e\|_2^2]}<\infty,
\]
and the quadratic terms follow from
$\|e+t\|_2^2\le2\|e\|_2^2+2\|t\|_2^2$.
Finally, part (i) gives
\[
 \|\hat t+m_\Lambda\|_2
 \le\|t^c+m_\Lambda\|_2+\|\hat t-t^c\|_2
 \le\mathsf M+\|w\|_2.
\]
Substituting $t=\hat t$ proves (ii).

\paragraph{Constant-offset invariance for Part II.}
Replacing $q$ by $q+c$ replaces $e$ by $e+c$ and $m_\Lambda$ by
$m_\Lambda+c$, so $V_\Lambda$ is unchanged; it replaces $\Psi_k(u)$ by
$\Psi_k(u+c_{1:k})$, so $t^c$ becomes $t^c-c$ and $\mathsf M$ is unchanged.
If $q=s^*+c$ almost surely, then $e-m_\Lambda=0$, so $V_\Lambda=0$, and
$\Psi_k(-c_{1:k})=\E[\Pbb(C_j\le s^*_j(X),\ j\le k\mid X)]=\alpha_k$ by
\eqref{eq:joint-target}. We may therefore take $t^c=-c$, giving
$\mathsf M=\|-c+c\|_2=0$; this choice is forced when the stated uniqueness
condition holds. This proves
(iii). \qed

\paragraph{Proof of Proposition~\ref{prop:cost-rates}.}
For each fixed confidence level, we choose survivor-count lower bounds
proportional to fold size and apply the finite-sample cost bounds.

\emph{Part (a): oscillation form.} Fix a confidence level $\epsilon_0\in(0,1)$ and take $\eta=\epsilon_0/2$ and
$\underline m_{k,N}:=\lfloor\underline\pi_kN_{k,N}/2\rfloor$. For large $N$,
$\underline m_{k,N}\ge m_k^*$, $\bar\gamma_{k,N}\to0$ so (I4) holds, and
$\chi_{k,N}\le\exp\{-N_{k,N}\underline\pi_k^2/2\}\to0$. On the event of Part~I,
$\gamma_k\le\bar\gamma_{k,N}\le C_\eta N_{k,N}^{-1/2}$, so
\[
 \begin{aligned}
 \|(I-A)^{-1}d\|_2^2
 &\le2\|(I-A)^{-1}\|_2^2
   \sum_k\left(\omega_{k,N}^2+\frac{\bar\gamma_{k,N}^2}{\phi_k^2}\right)\\
 &\le C'\left(\sum_k\sigma_{k,N}^2+\sum_kN_{k,N}^{-1}\right),
 \end{aligned}
\]
where $C'$ may depend on $\epsilon_0$ but not on $N$. The failure probability
$\zeta_N+\eta+\sum_k\chi_{k,N}$ is below $\epsilon_0$ for large $N$. Since $\epsilon_0$ was arbitrary, the cost bound of Part~I gives the
claimed $O_{\Pbb}$ rate. With $\omega=0$ the first sum vanishes.

\emph{Part (b): weighted-variance form.} Take $\eta=\epsilon_0/2$ and the stated $\underline m_{k,N}$, with the
fixed $\pi_k$ of (L3) in place of $\underline\pi_k$. As in (a),
$\underline m_{k,N}\ge m_k^*$ and $\chi_{k,N}\to0$; (L4) holds for large $N$
because $\bar\epsilon\to0$, and on the event of Part~II,
$\|w\|_2\le\|\bar w\|_2\le C_{\epsilon_0}(\sum_kN_{k,N}^{-1})^{1/2}$ because
$\bar\epsilon_k\le\sum_{j\le k}\bar\gamma_j$ and $n$ is fixed. The remainder in Part~II(ii) is obtained by expanding
\[
 (\mathsf M+\|w\|_2)^2
 =\mathsf M^2+2\mathsf M\|w\|_2+\|w\|_2^2.
\]
The last two terms have the stated orders, and the failure probability is
below $\epsilon_0$ for large $N$. If $q-s^*$ is constant,
$V_\Lambda=\mathsf M=0$ by Part~II(iii), leaving only the squared
calibration-error term. \qed

\subsection{Single-Echelon Case}
\begin{remark}[When calibration helps]
\label{rem:calibration-helps}
Let $n=1$, let $R=C_1-q(X)$ have a continuous cdf $H$ for a fixed learner $q$
with $\E|q(X)|<\infty$ and $\E C_1<\infty$,
put $\beta=\alpha_1=p/(p+h_1)$, $\Phi(t):=\mathcal G_1(q+t)$ and let $t^c$
satisfy $H(t^c)=\beta$. Then: (a) $\Phi$ is convex with
$\Phi'(t)=(p+h_1)(H(t)-\beta)$, so $t^c$ minimizes $\Phi$; (b)
$\mathcal G_1(q+\hat t)-\mathcal G_1(s^*)=B(q)+(p+h_1)\int_{t^c}^{\hat t}(H-\beta)$,
where $B(q):=\Phi(t^c)-\mathcal G_1(s^*)\ge0$ is the best-shift excess of the
learner; (c) if $|H(\hat t)-\beta|\le\gamma$ (the DKW event of the order-statistic cdf bound) and
$H$ has the two-sided margin $H(t^c+t)-\beta\ge\phi_Ht$,
$\beta-H(t^c-t)\ge\phi_Ht$ for $t\in[0,\delta]$ with $\gamma<\phi_H\delta$, then
the noise term is at most $(p+h_1)\gamma^2/\phi_H$, and
\[
 \mathcal G_1(q)-\mathcal G_1(q+\hat t)\ge(p+h_1)
 \Bigl[\tfrac{\phi_H}{2}\min(|t^c|,\delta)^2-\gamma^2/\phi_H\Bigr].
\]
Calibration therefore strictly lowers cost when
$\min(|t^c|,\delta)>\sqrt2\,\gamma/\phi_H$, and when the learner already has
the right level ($t^c=0$) it raises cost by at most $(p+h_1)\gamma^2/\phi_H$.
This single-echelon mechanism is consistent with the small cost increases
that calibration causes for the LQR and GB learners in
Appendix~\ref{app:learner-benchmark}, whose plug-in service levels are
already close to target, and with the cost decrease for the more biased
safety-stock learner.

\emph{Proof.} We compare the reduction achieved by the population shift
with the error of its empirical estimate.

\emph{Express cost differences as integrals.}
Part (a) follows from Proposition~\ref{prop:common-shift}(iii) and its
derivative formula (Appendix~\ref{app:proof-common-shift}). Integrating that
formula gives
\[
 \Phi(\hat t)-\Phi(t^c)
 =(p+h_1)\int_{t^c}^{\hat t}(H(s)-\beta)\,\mathrm ds.
\]
Adding $B(q)=\Phi(t^c)-\mathcal G_1(s^*)$ proves (b).

\emph{Bound empirical calibration noise.}
For (c),
the two-sided margin inversion in the proof of
Theorem~\ref{thm:unified-cost}, with $\omega=0$, gives
$|\hat t-t^c|\le\gamma/\phi_H$. Between $t^c$ and $\hat t$ the integrand
$H-\beta$ has the sign of $\hat t-t^c$ and, by monotonicity,
$|H-\beta|\le|H(\hat t)-\beta|\le\gamma$; so
\[
 0\le\int_{t^c}^{\hat t}(H(s)-\beta)\,\mathrm ds
 \le\gamma|\hat t-t^c|\le\frac{\gamma^2}{\phi_H}.
\]

\emph{Compare noise with the population improvement.}
Decompose the improvement as
\[
 \Phi(0)-\Phi(\hat t)
 =[\Phi(0)-\Phi(t^c)]-[\Phi(\hat t)-\Phi(t^c)].
\]
The first bracket satisfies $\Phi(0)-\Phi(t^c)=(p+h_1)\int_{t^c}^0(H-\beta)$; the integrand has the
sign of the direction of integration and, by the margin, its absolute value
is at least $\phi_H|s-t^c|$ for $|s-t^c|\le\delta$, so the integral is at
least
\[
 \int_0^{\min(|t^c|,\delta)}\phi_Hu\,\mathrm du
 =\frac{\phi_H}{2}\min(|t^c|,\delta)^2.
\]
Subtract the empirical-noise bound to obtain the stated improvement
bound. It is strictly positive under the displayed condition. When
$t^c=0$, the population improvement is zero and the empirical-noise bound controls the
possible cost increase. \qed
\end{remark}

The bounds become numerical certificates only after the count, margin,
propagation and curvature constants have been controlled analytically.
Simulation estimates of these constants are diagnostics, not certificates.

\begin{remark}[Bounding the mismatch and the propagation matrix]
\label{rem:mismatch}
Let $\bar f_j(x)$ bound the conditional density of $C_j$ given $X=x$ and put
$\nu_k:=\sum_{j\le k}\E[\bar f_j(X)|e_j(X)-m_{\Lambda,j}|]$. Since
$\Psi_k(-m_\Lambda)=\Pbb\{C_j\le s_j^*(X)+e_j(X)-m_{\Lambda,j},\ j\le k\}$
and $\alpha_k=\Pbb\{C_j\le s_j^*(X),\ j\le k\}$, the two events differ only
where some $C_j$ lies between $s_j^*(X)$ and $s_j^*(X)+e_j(X)-m_{\Lambda,j}$,
an interval of length $|e_j(X)-m_{\Lambda,j}|$. Bounding its probability
conditionally on $X$, then applying the union bound and taking
expectations, gives
$|\Psi_k(-m_{\Lambda,1:k})-\alpha_k|\le\nu_k$.

Put $v=(I-\tilde A)^{-1}(\nu/\psi)$ and suppose $v_k<r_k$ for every $k$.
We show $|-m_{\Lambda,k}-t_k^c|\le v_k$ by induction. Once the earlier
coordinates satisfy these bounds, the Lipschitz part of (L2) gives
\[
 \bigl|\Psi_k(-m_{\Lambda,1:k})
       -\Psi_k(-m_{\Lambda,<k},t_k^c)\bigr|
 \le\nu_k+\sum_{j<k}\rho_{kj}v_j=\psi_kv_k.
\]
If the $k$-th coordinate lay outside $[t_k^c-r_k,t_k^c+r_k]$,
monotonicity and the margin to that interval's endpoint would make the
left side at least $\psi_kr_k>\psi_kv_k$, a contradiction. Inside the
interval, the margin bounds the left side below by
$\psi_k|-m_{\Lambda,k}-t_k^c|$. Division by $\psi_k>0$ proves the
inductive claim, including the initial case with no earlier coordinates.
Taking Euclidean norms gives $\mathsf M\le\|v\|_2$.

Finally, if $\tilde A_{kj}\le a$ for all $j<k$, the entry
$[\tilde A^\ell]_{kj}$ sums products along strictly decreasing index
paths of length $\ell$ from $k$ to $j$. Such a path chooses its
$\ell-1$ intermediate indices from the $k-j-1$ indices between them,
so there are $\binom{k-j-1}{\ell-1}$ paths and each product is at most
$a^\ell$. For $j<k$, the finite inverse series and the binomial theorem
therefore give
\[
 [(I-\tilde A)^{-1}]_{kj}
 \le\sum_{\ell=1}^{k-j}a^\ell\binom{k-j-1}{\ell-1}
 =a(1+a)^{k-j-1}.
\]
\end{remark}

\section{Quantile Representation}
\label{app:quantile-details}

\begin{proposition}[Derivative of the nested feasibility probability]
\label{prop:derivative-feasibility}
Let $D_0,\ldots,D_{k-1}$ be independent nonnegative random variables
with densities $f_0,\ldots,f_{k-1}$, extended by zero to negative
arguments. Assume $f_{k-1}$, extended by zero, is bounded and continuous
on $\mathbb R$. For fixed
$a_1,\ldots,a_{k-1}$ define
\[
\mathcal A_{k-1}(a):=\left\{d\in\mathbb R_+^{k-1}:
\sum_{m=0}^j d_m\le a_{j+1},\ j=0,\ldots,k-2\right\}
\]
and
\[
\Phi_k(x):=\Pbb\left(
\sum_{m=0}^jD_m\le a_{j+1}\ (j\le k-2),\quad
\sum_{m=0}^{k-1}D_m\le x\right).
\]
Then $\Phi_k$ is continuously differentiable and
\begin{equation}\label{eq:derivative-feasibility}
\Phi_k'(x)=
\int_{\mathcal A_{k-1}(a)}
f_{k-1}\left(x-\sum_{m=0}^{k-2}d_m\right)
\left(\prod_{m=0}^{k-2}f_m(d_m)\right)\,\mathrm d d_0\cdots\mathrm d d_{k-2}.
\end{equation}
For $k=1$, continuity of $f_0$ gives $\Phi_1'(x)=f_0(x)$.
\end{proposition}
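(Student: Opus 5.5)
We condition on the first $k-1$ windows and write $\Phi_k$ as an iterated integral. Independence gives the joint density $\prod_{m=0}^{k-1}f_m(d_m)$ on $\mathbb R_+^k$. Put $S=\sum_{m=0}^{k-2}d_m$ for $d=(d_0,\ldots,d_{k-2})$. By Fubini,
\[
 \Phi_k(x)=\int_{\mathcal A_{k-1}(a)}F_{k-1}\bigl(x-S\bigr)\prod_{m=0}^{k-2}f_m(d_m)\,\mathrm d d_0\cdots\mathrm d d_{k-2},
\]
where $F_{k-1}(y)=\int_{-\infty}^y f_{k-1}$ is the cdf of $D_{k-1}$. Here $f_{k-1}$ is extended by zero, so $F_{k-1}(y)=0$ for $y<0$ and the constraint $d_{k-1}\ge0$ is absorbed. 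For $k=1$ there are no earlier windows, so $\Phi_1=F_0$. Because $f_0$ is continuous, the fundamental theorem of calculus gives $\Phi_1'=f_0$.

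For $k\ge2$ we differentiate under the integral sign. Since $f_{k-1}$ is continuous on $\mathbb R$, the fundamental theorem of calculus shows that $F_{k-1}$ is $C^1$ with derivative $f_{k-1}$ everywhere. Thus, for each fixed $d$, the integrand is differentiable in $x$ with derivative $f_{k-1}(x-S)\prod_m f_m(d_m)$. The difference quotients are dominated by the mean value theorem: for each $d$,
\[
 \frac{|F_{k-1}(x+h-S)-F_{k-1}(x-S)|}{|h|}\le\|f_{k-1}\|_\infty.
\]
Multiplied by $\prod_m f_m(d_m)$, this bound is integrable because the product is a probability density. Dominated convergence then gives \eqref{eq:derivative-feasibility}.

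Continuity of $\Phi_k'$ follows from another dominated-convergence step. Take $x_n\to x$. The integrand $f_{k-1}(x_n-S)\prod_m f_m(d_m)$ converges pointwise to $f_{k-1}(x-S)\prod_m f_m(d_m)$ by continuity of the extended $f_{k-1}$. The same bound $\|f_{k-1}\|_\infty\prod_m f_m$ dominates it.

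The only delicate point is that $f_{k-1}$ must be continuous on all of $\mathbb R$, including at $0$ where it is glued to the zero extension. The hypothesis assumes exactly this. Without it, $F_{k-1}$ fails to be differentiable at $0$ and the pointwise derivative argument breaks on the set $\{S=x\}$. With continuity everywhere, no null-set exceptions arise, and the plain interchange of derivative and integral suffices.
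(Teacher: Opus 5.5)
Your proposal is correct and follows the paper's proof. Both arguments write $\Phi_k$ as the integral of $F_{k-1}(x-S)$ against $\prod_{m\le k-2}f_m$ over $\mathcal A_{k-1}(a)$, bound the difference quotients by $\|f_{k-1}\|_\infty$, and apply dominated convergence once for the derivative formula and once more, along $x_r\to x$, for continuity of $\Phi_k'$; the only cosmetic difference is that the paper writes the quotient as $\frac1t\int_0^t f_{k-1}(x-S+u)\,\mathrm du$ instead of invoking the mean value theorem.
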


The proof is in Appendix~\ref{app:proof-derivative-feasibility}.

\begin{remark}[Positive margin]
Formula \eqref{eq:derivative-feasibility} is nonnegative, but it is
strictly positive only if its integrand is positive on a subset of
$\mathcal A_{k-1}(a)$ having positive measure. Existence of densities
alone does not supply a positive margin constant.
\end{remark}

\section{Design Choices}
\label{app:design}

\begin{remark}[Marginal versus sequential targets]
\label{rem:naive}
Separate bounds on $\Pbb(C_k\le s_k(X))$ do not generally imply
\eqref{eq:joint-target}. For example, two independent events with
probabilities $\alpha_1,\alpha_2<1$ have intersection probability
$\alpha_1\alpha_2<\alpha_2$. The appropriate calibration population at
stage $k$ is the one satisfying all preceding constraints.
\end{remark}

\paragraph{Why not a single joint score.} One could alternatively
conformalize a single joint score such as $\max_k w_k (C_k - \hat q_k)$
to calibrate all stages at once. That construction controls only the
overall level $\alpha_n$ and leaves the per-stage profile
$(\alpha_1,\dots,\alpha_{n-1})$, which Theorem~\ref{thm:newsvendor}
pins down uniquely at the optimum, to the arbitrary choice of weights
$w_k$. The sequential scheme calibrates the whole profile.

\paragraph{Fold allocation.}
The expected survivor count is $\E M_k=N_k\Pbb(E_{k-1}(Z))$.
Proposition~\ref{prop:calibration-budget} therefore links fold sizes to
stagewise finiteness, while Theorem~\ref{thm:coverage-law}(C3) permits
exact chain-based comparisons of candidate allocations. The numerical
studies use the fixed fold sizes stated with each experiment.

\paragraph{Feasibility and abstention.} The unconstrained classical model
optimizes $s\in\mathbb R^n$, and the statistical construction does not
enforce monotone increments $s_{k+1}(X)-s_k(X)$. Implementations should
therefore report crossings and apply application-specific feasibility
constraints only with a corresponding validity analysis. Likewise,
$\hat t_k=+\infty$ means ``cannot certify from this fold,'' not an
executable stocking level; the code records its frequency as an
abstention diagnostic.

\section{Additional Experiments}
\label{app:supp-numerics}

\subsection{Calibration}
\label{app:mechanisms}
The added main-setting comparisons use the same seed ($20260805$),
$100$ fits, $1000$ training and $1000$ calibration cycles, and common
$400{,}000$-cycle test set as the four-cell ablation of
Table~\ref{tab:ablation}.
The normal safety-stock learner is used throughout: it fits a linear mean,
and its standard deviation is obtained from a log-absolute-residual
regression $\hat\ell$ using the Gaussian conversion
$\hat\sigma(x)=\exp\{\hat\ell(x)+(\gamma_{\rm E}+\log2)/2\}$, where
$\gamma_{\rm E}$ is Euler's constant; this is exact under a correctly
specified Gaussian scale model, which the Gamma demand violates. Fixed-score sequential
predictors are trained at $\beta_k$ on the full independent training sample;
refitted predictors use training survivors of the previously learned
thresholds, reverting to the full training sample if fewer than $20$ remain.
Both variants use identical calibration folds. Their uncalibrated controls
use the same respective fitting rules with zero calibration offset.
Bonferroni fits each predictor at $1-\varepsilon_k$ on the full training sample
and uses every calibration observation at every stage. No predictor uses
current or future calibration outcomes for fitting.

\begin{center}
\begin{appendixtabular}{llrrrr}
\toprule
Score fitting & Calibration & $k=1$ & $k=2$ & $k=3$ & Cost\\
\midrule
Fixed & None & $0.8934\pm0.0018$ & $0.8734\pm0.0019$ & $0.8510\pm0.0020$ & $70.36\pm0.13$\\
Refit & None & $0.8934\pm0.0018$ & $0.8416\pm0.0023$ & $0.7888\pm0.0028$ & $69.12\pm0.08$\\
Fixed & Fresh-fold & $0.9098\pm0.0032$ & $0.8655\pm0.0038$ & $0.8201\pm0.0047$ & $69.72\pm0.12$\\
Refit & Fresh-fold & $0.9098\pm0.0032$ & $0.8652\pm0.0037$ & $0.8198\pm0.0044$ & $68.98\pm0.09$\\
Fixed & Bonferroni & $0.9094\pm0.0018$ & $0.8979\pm0.0019$ & $0.8836\pm0.0020$ & $73.00\pm0.18$\\
\bottomrule
\end{appendixtabular}\end{center}
Entries show means and $95\%$ interval half-widths across $100$ fits, conditional on the common test set.
\begin{center}
\begin{appendixtabular}{rrrllll}
\toprule
$N_{\rm cal}$ & $n$ & Target & Bonferroni & Fixed score & Refit & Inf. (fixed/refit)\\
\midrule
300 & 2 & 0.8571 & $0.8944\pm0.0043$ & $0.8590\pm0.0084$ & $0.8605\pm0.0080$ & 0/0\%\\
300 & 3 & 0.8182 & $0.8902\pm0.0044$ & $0.8352\pm0.0104$ & $0.8309\pm0.0107$ & 0/0\%\\
300 & 5 & 0.7500 & $0.8816\pm0.0048$ & $0.7915\pm0.0157$ & $0.7902\pm0.0170$ & 0/0\%\\
300 & 7 & 0.6923 & $0.8769\pm0.0053$ & $0.7800\pm0.0185$ & $0.7849\pm0.0207$ & 0/0\%\\
300 & 10 & 0.6207 & $0.8791\pm0.0052$ & $0.6682\pm0.0212$ & $0.6719\pm0.0194$ & 64/66\%\\
1200 & 2 & 0.8571 & $0.8936\pm0.0023$ & $0.8586\pm0.0039$ & $0.8573\pm0.0038$ & 0/0\%\\
1200 & 3 & 0.8182 & $0.8884\pm0.0027$ & $0.8225\pm0.0057$ & $0.8220\pm0.0064$ & 0/0\%\\
1200 & 5 & 0.7500 & $0.8788\pm0.0025$ & $0.7532\pm0.0074$ & $0.7537\pm0.0073$ & 0/0\%\\
1200 & 7 & 0.6923 & $0.8749\pm0.0027$ & $0.7025\pm0.0089$ & $0.7035\pm0.0114$ & 0/0\%\\
1200 & 10 & 0.6207 & $0.8782\pm0.0025$ & $0.6487\pm0.0120$ & $0.6529\pm0.0096$ & 0/0\%\\
\bottomrule
\end{appendixtabular}\end{center}
Inf. is the percentage of runs with any infinite threshold. Intervals are across $50$ fits, conditional on one $50{,}000$-cycle test set per depth. Bonferroni has no infinite thresholds in these runs.

Fixed-score fresh folds use a cloned fold generator and Bonferroni is
deterministic given training and calibration samples, so adding these
methods leaves the random-number stream, and hence the four ablation cells
of Table~\ref{tab:ablation}, unchanged. All main-setting policies are finite. Refitted fresh folds cost $4.0196$
less than Bonferroni and $0.7347$ less than fixed-score fresh folds;
paired $95\%$ intervals are $[3.8352,4.2041]$ and $[0.6480,0.8214]$.
These are comparisons under this specified predictor and demand law.

The fixed-total-budget study uses seed $20260929$, $50$ fits, $1000$
training cycles, and one independent $50{,}000$-cycle test set per depth.
Depths are $2,3,5,7,10$ and total calibration budgets are $300,1200$.
Lead times are all one, with $h=(1,.5,\ldots,.5)$ and $p=9$.
Thus both the number of constraints and their cost-derived target profile
vary with depth; this is not a constant-target depth experiment.
For a given depth and replication, all methods share fit and calibration
data, and the $300$-case sample is a prefix of the $1200$-case sample.
Fresh-fold fixed and refitted methods use identical random folds.

All coverage summaries include infinite thresholds. An infinite threshold
makes its own stage unconstraining; the implementation continues processing
later folds, as specified in Section~\ref{sec:general}. At depth $10$ and
budget $300$, $32$ of $50$ fixed-score runs and $33$ of $50$ refitted runs
have at least one infinite threshold; the exact chain of
Theorem~\ref{thm:coverage-law} gives probability $.574$ (Monte Carlo
standard error $.0003$), so both counts lie about one binomial standard
error ($.07$) above it. All other settings have finite thresholds in every
run. Inventory costs are evaluated only when the entire stock vector is
finite. The JSON reports the number of such runs, conditional cost means,
and paired differences restricted to runs where both policies are finite;
these must not be read as unconditional costs or as costs for executable
policies in every run. The code records per-stage sample counts, training
counts, all prefix coverages, and runtime as well as the final coverage
shown above. No nonnegative or monotone projection is applied here.
Table~\ref{tab:pat-finite} separates all-run and finite-policy coverage;
conditioning on finiteness is descriptive and need not preserve the
original coverage guarantee.
\begin{table}[!ht]
\centering\small
\caption{Abstention diagnostic at depth $10$, total calibration size $300$. Means (SEs); finite means the entire policy is finite. Conditional rows select different runs and are not an unconditional cost ranking. Paired cost differences elsewhere use common finite runs. No fallback policy is evaluated.}
\label{tab:pat-finite}
\begin{appendixtabular}{llrrr}
\toprule
Method & Finite runs & All-run coverage & Finite-run coverage & Finite-run cost\\
\midrule
Bonferroni & 50/50 & \shortstack{$0.8791$\\$(0.0027)$} & \shortstack{$0.8791$\\$(0.0027)$} & \shortstack{$470.21$\\$(1.99)$}\\
fresh-fold fixed & 18/50 & \shortstack{$0.6682$\\$(0.0108)$} & \shortstack{$0.7043$\\$(0.0165)$} & \shortstack{$413.11$\\$(2.44)$}\\
fresh-fold refit & 17/50 & \shortstack{$0.6719$\\$(0.0099)$} & \shortstack{$0.7209$\\$(0.0193)$} & \shortstack{$408.47$\\$(2.54)$}\\
\bottomrule
\end{appendixtabular}
\end{table}

The executable scripts are \texttt{run\_misspecified.py} and
\texttt{run\_conformal\_mechanisms.py}; their outputs are
\texttt{misspecified\_summary.json} and \texttt{conformal\_mechanisms.json}.
The main COINS comparison additionally uses \texttt{run\_coins\_pat.py}
and \texttt{coins\_pat.json}, replaying the same data and random stream
and verifying every fixed-score fresh-fold replication against the original.
The revised main and finite-policy tables use \texttt{render\_pat\_tables.py}.
The original supplementary tables are generated directly from these artifacts by
\texttt{render\_mechanism\_tables.py}. Boundary tests in
\texttt{verify\_calibration\_mechanisms.py} check that an infinite stage
can be followed by a finite one, empty survivor samples yield infinite
thresholds even with a sklearn learner, fixed scores retain the full
training set, and Bonferroni uses the specified failure budgets and full
calibration sample.

\subsection{A two-echelon population-optimum benchmark}
\label{app:small-oracle}
To measure decision quality against a known population benchmark, consider
independent per-period $\mathrm{Exp}(1)$ demand, $L=(1,1)$,
$h=(1,0.5)$, and $p=9$, without context. Then
$D_0\sim\mathrm{Gamma}(2,1)$ and $D_1\sim\mathrm{Exp}(1)$ are independent.
For feasible stocks $0\le a=s_1\le b=s_2$, direct integration gives
\begin{align*}
 P(E_1)&=1-e^{-a}(1+a),&
 P(E_2)&=P(E_1)-e^{-b}a^2/2,\\
 \E B_1&=e^{-(b-a)},&
 \E B_0&=e^{-a}(a+2)+e^{-b}(1+a+a^2/2),\\
 G(a,b)&=(a-\E B_1-2)+0.5(b-2)+10.5\E B_0.
\end{align*}
The classical targets are $\alpha=(19/21,18/21)$. Solving
$1-e^{-a^*}(1+a^*)=19/21$ and setting
$b^*=\log\{(a^*)^2/[2(\alpha_1-\alpha_2)]\}$ yields
\[
 s^*=(3.950955932846,5.099290373062),\qquad
 G(s^*)=5.202582315244.
\]
Thus evaluation uses analytic population probabilities and expectations,
with a numerical one-dimensional root for the oracle, rather than an
optimum estimated from a large simulated demand sample. Independent
cost minimization from three starting points agrees within $1.1\times10^{-7}$
in stock coordinates and $10^{-8}$ in cost. These are floating-point
checks, not certified interval-arithmetic error bounds.

For each total calibration size, we run $500$ independent replications.
Methods share the calibration data; fresh folds split them into two parts,
whereas Bonferroni and the marginal ablation use the full sample.
The training size is $1000$, although the scalar predictor cancels from
the noncontextual conformal order statistic. All policies receive the
same nonnegative cumulative-maximum projection, ensuring feasible stock
vectors. Projection changes $2.8\%$ of fresh-fold and $0.8\%$ of marginal
policies at $N=100$, and none of the other evaluated policies. The reported
costs concern these projected policies, not a test of the raw-policy cost
bound. Every evaluated policy is finite.

\begin{table}[ht]
\centering
\caption{Population-optimum benchmark. Final-prefix target: $0.85714$.
Parentheses give standard errors across $500$ calibration replications;
population evaluation has no test-sample Monte Carlo error.}
\label{tab:small-oracle}
\begin{appendixtabular}{rlrr}
\toprule
$N_{\rm cal}$ & Method & Final coverage & Mean cost gap $G(\hat s)-G(s^*)$\\
\midrule
100 & Fresh-fold & .88415 (.00208) & .29817 (.01593)\\
    & Bonferroni & .90032 (.00133) & .52253 (.01417)\\
    & Marginal ablation & .84609 (.00159) & .08921 (.00417)\\
300 & Fresh-fold & .86207 (.00118) & .06406 (.00319)\\
    & Bonferroni & .89322 (.00079) & .33022 (.00635)\\
    & Marginal ablation & .84108 (.00093) & .04854 (.00203)\\
1000 & Fresh-fold & .85978 (.00069) & .01942 (.00090)\\
     & Bonferroni & .89187 (.00044) & .28728 (.00320)\\
     & Marginal ablation & .84051 (.00051) & .03256 (.00097)\\
\bottomrule
\end{appendixtabular}
\end{table}

\begin{figure}[ht]
\centering
\includegraphics[width=\linewidth]{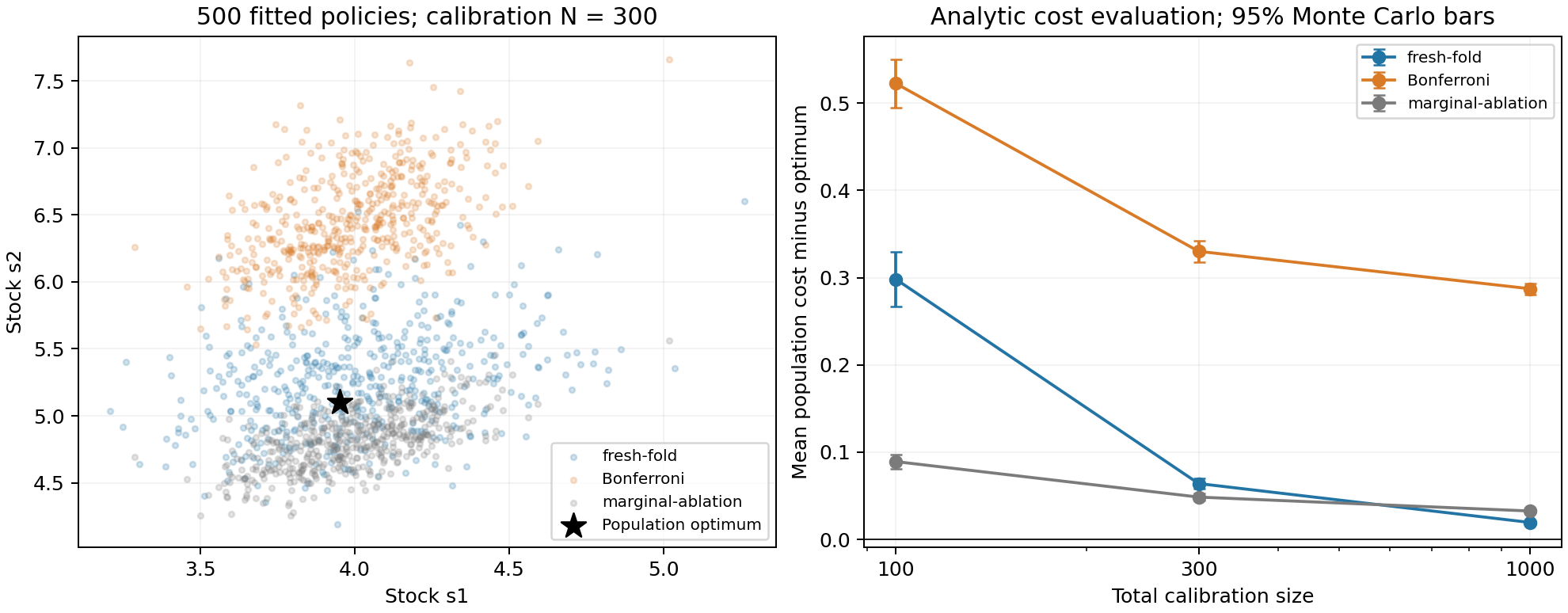}
\caption{Left: fitted stocks at $N=300$, with the population optimum
marked by a star. The scatter is not a confidence region for $s^*$.
Right: mean population cost gaps; bars are $1.96$ standard errors across
calibration replications.}
\label{fig:small-oracle}
\end{figure}

Fresh-fold coverage approaches the target while its mean cost gap decreases
over the tested budgets. Bonferroni remains more conservative in this
example. The marginal ablation sometimes has a smaller cost gap but fails
the final service target on average. These three budgets do not establish
an asymptotic rate, and a noncontextual experiment cannot measure the value
of predictable contextual refitting.

The conformal guarantee concerns a fresh demand vector, not inclusion of
the optimal parameter in a confidence region. For example, the frequency
of $s^*\le\hat s$ coordinatewise for fresh folds is $60.6\%$, $41.2\%$,
and $42.4\%$ at the three budgets. These parameter-position diagnostics
need not equal the service target. Testing confidence-region coverage for
$s^*$ would require separately constructing and checking the assumptions
of a parameter-error bound.
The experiment is reproduced by \texttt{run\_small\_oracle.py}, with
replication records in \texttt{small\_oracle.json};
\texttt{render\_small\_oracle.py} generates the figure.

In the two-echelon oracle experiment, the marginal ablation sets each
stock level to the augmented empirical $\alpha_k$-quantile of cumulative
demand $C_k$ on the full calibration sample, without conditioning on
previous-stage survival. All arms then use the same nonnegative
cumulative-maximum projection.

\subsection{Reuse}
\label{app:freshness-ablation}
We hold the training sample, calibration sample, test set, conditional
targets, and normal safety-stock learner fixed within each replication. The
shared-fold baseline deliberately uses all $1200$ calibration cases at every
stage, selects weak survivors ($C_j^{(i)}\le\hat s_j(X_i)$), and refits each
stage predictor on the training survivors of the calibrated earlier
thresholds, so it is weak-survivor reuse with adaptive scores and no result
of Proposition~\ref{prop:reuse-boundary} certifies it; fresh-fold uses a
disjoint $400$-case fold per stage with the same refitting. The repeated-split
baseline constructs five independently randomized fresh-fold policies and
takes their coordinatewise maximum. This maximum enlarges every prefix event
and is therefore a conservative stability baseline, not an efficiency claim
for cross-conformal prediction. Results average $100$ fits evaluated on a
common $200{,}000$-cycle test set.

\begin{center}

\begin{appendixtabular}{lrrrrr}
\toprule
Method & $k=1$ & $k=2$ & $k=3$ & $M_3$ & Cost\\
\midrule
Shared-fold reuse & $.9095$ & $.8649$ & $.8201$ & $1039.0$ & $68.95$\\
Fresh-fold & $.9100$ & $.8653$ & $.8198$ & $345.5$ & $69.13$\\
Repeated-split max & $.9238$ & $.8853$ & $.8465$ & $336.1$ & $69.25$\\
\midrule
Target & $.9091$ & $.8636$ & $.8182$ & &\\
\bottomrule
\end{appendixtabular}
\end{center}
No method abstains in this setting. The shared-fold row shows that
uncertified reuse need not fail detectably in every benign distribution; its larger
survivor counts also reflect reuse of the same observations at all stages.
To show that the missing guarantee is substantive, we also use
$Z\sim\mathrm{Unif}(0,1)$ with opposing scores
$R_1(Z)=Z$, $R_2(Z)=-Z$, conditional targets $(.5,.5)$, and $40$
calibration observations. The exact test-path probability is available for
every fitted policy. Across $100{,}000$ calibration draws, shared-fold reuse
gives $.2437$ (SE $.00021$), below the $.25$ target and consistent with the
exact value $10/41\approx.2439$ of Proposition~\ref{prop:reuse-boundary}(a);
fresh folds give
$.2732$ (SE $.00029$), and the repeated-split maximum gives $.4369$
(SE $.00031$). The script \texttt{run\_freshness\_ablation.py} records
per-replication coverage, survivor counts, abstention, cost, and runtime in
\texttt{freshness\_ablation.json}.

\subsection{Cost Comparisons}
\label{app:reuse-from-main}

Table~\ref{tab:reuse-arms} compares valid and invalid reuse with fresh folds
at the same training sample and total calibration budget, and
Table~\ref{tab:bonferroni-plateau} tracks excess cost against the fold size.
Both are discussed in Section~\ref{sec:experiments}; the corresponding
theory is Proposition~\ref{prop:reuse-boundary},
Proposition~\ref{prop:blocked-reuse}, and
Proposition~\ref{prop:bonferroni-inconsistency}.

\begin{table}[!t]
\centering
\caption{Matched-budget reuse arms. Same training sample and same total calibration sample of size $N$ for every arm; 500 replications, 50{,}000 common test cycles per depth. Slack: final-prefix coverage minus $\alpha_K$, in percentage points, measured on the common test set (its binomial error, about 0.2 points, is shared by all arms; for (b), (c), (h), (h') the exact expected coverage is $S_K/(N+1)\ge\alpha_K$ and all Monte Carlo deviations are within 2.0 combined s.e.). $\Delta G$: paired cost difference to (f), finite runs only; $^\ast$ marks a 95\% interval excluding 0. (a)--(e) use fixed scores; (h), (h') refit on training survivors of the uncalibrated predictors (PS), which keeps strict reuse valid; (g) refits between two fresh blocks. Note: at $K=10$, $N=300$ some threshold is infinite in 62\% (e), 59\% (f) of runs, so $\Delta G$ there uses the 207 runs in which (f) is finite (157 for (e), where both must be finite).}
\label{tab:reuse-arms}
\begin{appendixtabular}{lrrrrrr}
\toprule
 & \multicolumn{2}{c}{$K=3$} & \multicolumn{2}{c}{$K=5$} & \multicolumn{2}{c}{$K=10$}\\
\cmidrule(lr){2-3}\cmidrule(lr){4-5}\cmidrule(lr){6-7}
Arm & Slack & $\Delta G$ & Slack & $\Delta G$ & Slack & $\Delta G$\\
\midrule\multicolumn{7}{l}{\emph{Total calibration budget $N=300$}}\\
(a) weak reuse & +1.1 & +0.12$^\ast$ & +1.6 & $-$0.18 & +3.2 & $-$3.25$^\ast$\\
(b) strict & +0.7 & +0.13$^\ast$ & +0.7 & $-$0.15 & +1.0 & $-$3.46$^\ast$\\
(c) COINS & +0.4 & +0.11$^\ast$ & +0.1 & $-$0.05 & $-$0.3 & $-$3.14$^\ast$\\
(d) Bonf. & +7.2 & +4.66$^\ast$ & +13.2 & +13.99$^\ast$ & +25.9 & +65.04$^\ast$\\
(e) fresh fixed & +1.6 & +0.78$^\ast$ & +3.8 & +2.26$^\ast$ & +5.4 & +5.84$^\ast$\\
(f) fresh refit & +1.5 & 0 & +3.8 & 0 & +5.9 & 0\\
(g) blocked $b{=}2$ & +0.8 & $-$0.44$^\ast$ & +0.5 & $-$2.14$^\ast$ & +0.2 & $-$9.59$^\ast$\\
(h) strict+PS refit & +0.8 & $-$0.76$^\ast$ & +0.7 & $-$2.56$^\ast$ & +1.1 & $-$10.28$^\ast$\\
(h') COINS+PS refit & +0.5 & $-$0.77$^\ast$ & 0.0 & $-$2.51$^\ast$ & $-$0.2 & $-$10.19$^\ast$\\
\midrule\multicolumn{7}{l}{\emph{Total calibration budget $N=1200$}}\\
(a) weak reuse & +0.4 & +0.74$^\ast$ & +0.2 & +2.04$^\ast$ & +0.7 & +5.06$^\ast$\\
(b) strict & +0.3 & +0.74$^\ast$ & 0.0 & +2.08$^\ast$ & +0.2 & +5.09$^\ast$\\
(c) COINS & +0.2 & +0.76$^\ast$ & 0.0 & +2.08$^\ast$ & $-$0.1 & +5.15$^\ast$\\
(d) Bonf. & +7.1 & +4.86$^\ast$ & +13.0 & +15.10$^\ast$ & +25.9 & +71.44$^\ast$\\
(e) fresh fixed & +0.5 & +0.88$^\ast$ & +0.7 & +2.40$^\ast$ & +2.3 & +6.58$^\ast$\\
(f) fresh refit & +0.5 & 0 & +0.7 & 0 & +2.5 & 0\\
(g) blocked $b{=}2$ & +0.4 & $-$0.09$^\ast$ & 0.0 & $-$0.28$^\ast$ & +0.1 & $-$1.58$^\ast$\\
(h) strict+PS refit & +0.3 & $-$0.16$^\ast$ & 0.0 & $-$0.35$^\ast$ & +0.3 & $-$1.75$^\ast$\\
(h') COINS+PS refit & +0.2 & $-$0.16$^\ast$ & 0.0 & $-$0.35$^\ast$ & 0.0 & $-$1.72$^\ast$\\
\bottomrule
\end{appendixtabular}
\end{table}

\begin{table}[!t]
\centering
\caption{Excess cost $G(\hat s)-G(s^*)$ (s.e.) against the per-stage fold size $N_k$. Fresh folds use $K$ folds of $N_k$; COINS (strict reuse, cumulative counts) and Bonferroni use all $KN_k$ units. 400 replications per $N_k$; one common evaluation sample of 2{,}000{,}000 cycles per panel. Slope: log-log slope of the mean excess over all $N_k$. $N\to\infty$: population Bonferroni gap $G(s^B)-G(s^*)$ (joint coverage of $s^B$: 0.880 vs.\ target 0.818).}
\label{tab:bonferroni-plateau}
\begin{appendixtabular}{rrrr}
\toprule
$N_k$ & Fresh folds & COINS & Bonferroni\\
\midrule\multicolumn{4}{l}{\emph{A. Noncontextual base system}}\\
125 & 0.888 (0.038) & 0.268 (0.012) & 4.02 (0.07)\\
250 & 0.424 (0.022) & 0.128 (0.006) & 3.83 (0.05)\\
500 & 0.184 (0.008) & 0.0615 (0.0027) & 3.71 (0.03)\\
1000 & 0.101 (0.005) & 0.0340 (0.0016) & 3.66 (0.02)\\
2000 & 0.0542 (0.0026) & 0.0164 (0.0008) & 3.65 (0.02)\\
4000 & 0.0246 (0.0010) & 0.0084 (0.0004) & 3.62 (0.01)\\
8000 & 0.0113 (0.0005) & 0.0039 (0.0002) & 3.62 (0.01)\\
Slope & $-1.03$ & $-1.00$ & \\
$N\to\infty$ & 0 & 0 & 3.609 (0.004)\\
\midrule\multicolumn{4}{l}{\emph{B. Contextual, oracle-up-to-level learner $q=s^*(x)$}}\\
125 & 0.902 (0.044) & 0.237 (0.011) & 4.04 (0.07)\\
250 & 0.395 (0.019) & 0.113 (0.005) & 3.79 (0.05)\\
500 & 0.192 (0.009) & 0.0601 (0.0027) & 3.72 (0.03)\\
1000 & 0.0870 (0.0041) & 0.0290 (0.0013) & 3.67 (0.03)\\
2000 & 0.0461 (0.0021) & 0.0144 (0.0006) & 3.64 (0.02)\\
4000 & 0.0212 (0.0010) & 0.0074 (0.0003) & 3.64 (0.01)\\
8000 & 0.0110 (0.0005) & 0.0038 (0.0002) & 3.61 (0.01)\\
Slope & $-1.06$ & $-0.99$ & \\
$N\to\infty$ & 0 & 0 & 3.631 (0.005)\\
\bottomrule
\end{appendixtabular}
\end{table}

\clearpage

\subsection{Learners}
\label{app:learner-benchmark}

The single-node study of \citet{cao2024conformal} compares Linear Quantile
Regression (LQR), Gradient Boosting (GB), LightGBM, and a Quantile Regression
Neural Network, each with and without local calibration. To transfer the
comparison without changing the multi-echelon estimand, we place two
dependency-light representatives, LQR and GB, inside exactly the same
sequential conditional-quantile policy as our safety-stock predictor. Every
learner uses the same $1000$ fit cycles; the conformal version uses $1000$
additional cycles split across stages. We report $50$ policy replications on a
fixed $2\times10^5$-cycle test set. Table entries are means $\pm$ standard
errors across fits, conditional on this shared test set. The experiment is generated by
\texttt{run\_base\_learner\_benchmark.py}.

\begin{center}

\begin{appendixtabular}{llcccc}
\toprule
Base learner & Policy & $k=1$ & $k=2$ & $k=3$ & Cost \\
\midrule
Safety stock & plug-in & $.8914\pm.0013$ & $.8377\pm.0016$ & $.7851\pm.0020$ & $69.17\pm0.07$ \\
Safety stock & conformal & $.9143\pm.0021$ & $.8714\pm.0021$ & $.8272\pm.0026$ & $68.92\pm0.06$ \\
Linear QR & plug-in & $.9075\pm.0013$ & $.8599\pm.0016$ & $.8132\pm.0020$ & $70.39\pm0.04$ \\
Linear QR & conformal & $.9145\pm.0020$ & $.8707\pm.0023$ & $.8255\pm.0028$ & $70.53\pm0.05$ \\
Gradient boosting & plug-in & $.8980\pm.0013$ & $.8484\pm.0014$ & $.8010\pm.0017$ & $70.27\pm0.11$ \\
Gradient boosting & conformal & $.9140\pm.0022$ & $.8708\pm.0022$ & $.8253\pm.0023$ & $70.46\pm0.10$ \\
\midrule
Target & & $.9091$ & $.8636$ & $.8182$ & \\
\bottomrule
\end{appendixtabular}
\end{center}

The learner changes efficiency and the magnitude of misspecification, but not
the role of calibration. Every conformalized learner attains all three targets
on average. Both nonlinear-GB and linear-QR plug-in policies miss the final
joint target, although much less severely than the normal safety-stock rule.
The means and standard errors are saved in \path{base_learner_benchmark.json}.
LightGBM and QRNN are not
reported because they add package and tuning variation without changing the
factorization/calibration comparison; the script exposes the common learner
interface for adding them.

\paragraph{Uncertainty for the misspecified ablation.}
Across $100$ policy replications, the $95\%$ CI half-widths for joint
coverage at stages $1$--$3$ are $(.0018,.0020,.0024)$ for marginal plug-in,
$(.0018,.0022,.0026)$ for marginal conformal,
$(.0018,.0023,.0028)$ for sequential plug-in,
$(.0032,.0037,.0044)$ for sequential conformal.
The marginal-conformal marginal coverages are $(.9094,.8646,.8182)$, matching
$(.9091,.8636,.8182)$ within their $95\%$ intervals.
Machine-readable means, standard errors, seed, and sample sizes
are saved by \texttt{run\_misspecified.py}.

\paragraph{Realized-policy and survivor diagnostics.}
The guarantee of Theorem~\ref{thm:coverage} averages over calibration data;
it need not hold for each fitted policy, and the exact law of the realized
coverage is given by Theorem~\ref{thm:coverage-law}. Among the $100$ sequential-conformal fits,
the realized joint coverage is below $(\alpha_1,\alpha_2,\alpha_3)$ in
$(48\%,44\%,45\%)$ of replications. Its 10th-percentile coverages are
$(.8899,.8399,.7948)$, while the means attain the targets. The median
realized survivor counts are $(334,302,289)$; their stage-2 and stage-3
10th--90th percentile ranges are $293$--$312$ and $274.9$--$300.1$.
No infinite threshold occurs in this design. Base-stock crossings occur
for $0.42\%$ and $0.39\%$ of evaluated contexts at the two adjacent
stage pairs, so a constrained deployment would require separate validity
analysis. The JSON artifact records every replication, rank, survivor
count, crossing diagnostic, and paired cost interval.

Paired with sequential conformal, the other three misspecified-model
cells cost $1.37$ (marginal plug-in), $1.51$ (marginal conformal), $0.14$ (sequential plug-in) more; their $95\%$ paired intervals are
$[1.26,1.49]$, $[1.40,1.62]$, $[0.05,0.23]$, respectively.

\paragraph{Equal-total-budget sensitivity.}
The main ablation holds the 1000-observation fit sample fixed to isolate
calibration, while conformal methods use 1000 additional calibration
observations. Refitting the two plug-in baselines on all 2000 observations
gives joint coverages
$(.8940,.8343,.7654)$ (marginal) and $(.8940,.8424,.7897)$
(sequential), with costs $70.27$ and $68.91$. Both still miss
the service targets. The sequential plug-in cost is slightly lower than
sequential conformal's $68.98$, so the equal-budget result supports a
service correction rather than a uniform cost advantage. Full estimates
and standard errors are included in the machine-readable output.

\subsection{Sensitivity}
\label{app:sensitivity}

The sensitivity results come from
\path{run_sensitivity_revision.py}, saved as
\path{sensitivity_revision.json}. They use seed $20260925$, $30$
independent fits, $1000$ training cycles, and $20{,}000$ test cycles.
All lead times are one, $h=(1,.5,\ldots,.5)$, and $p=9$.

\begin{table}[h]
\centering
\caption{Calibration-size sensitivity, final target $.8182$.
Coverage is for raw thresholds; $M_3$ is the final survivor count.}
\label{tab:cal-size}
\begin{appendixtabular}{rrrr}
\toprule
$N_{\rm cal}$ & Coverage & SE & Mean $M_3$\\
\midrule
300 & .8266 & .0065 & 86.3\\
600 & .8261 & .0047 & 172.2\\
1200 & .8180 & .0042 & 346.0\\
2400 & .8175 & .0026 & 694.1\\
\bottomrule
\end{appendixtabular}
\end{table}

Table~\ref{tab:cal-size} gives the calibration-size results.
At total sizes $300,600,1200,2400$, the mean final survivor counts are
$86.3,172.2,346.0,694.1$; no infinite threshold occurs.
For the separate projected-policy cost comparison, fresh-fold minus
plug-in mean costs and paired standard errors are
$+0.420\pm.127$, $+0.086\pm.066$, $-0.214\pm.066$, $-0.188\pm.038$. These use different fitted samples from the coverage
sweep, the monotone nonnegative projection, and the objective without
the policy-independent holding constant.

\begin{table}[h]
\centering
\caption{Depth sensitivity: $400$ raw calibration observations per stage.
SE is across $30$ fits, conditional on the shared test sample.}
\label{tab:depth}
\begin{appendixtabular}{rrrrr}
\toprule
Depth & Target & Final coverage & SE & Minimum mean survivors\\
\midrule
2 & .8571 & .8635 & .0026 & 364.6\\
3 & .8182 & .8222 & .0035 & 344.3\\
4 & .7826 & .7798 & .0037 & 332.6\\
5 & .7500 & .7522 & .0039 & 320.4\\
6 & .7200 & .7276 & .0043 & 306.0\\
7 & .6923 & .6957 & .0054 & 295.1\\
\bottomrule
\end{appendixtabular}
\end{table}

For the shift experiment, the first covariate changes from $N(0,1)$
in the source environment to $N(\delta,1)$ at test time; the second
covariate remains an independent $N(0,1)$. Conditional demand follows the
same Gamma model as the main synthetic experiment, with shape $2$ per
period and scale $5\exp(.35x_1-.20x_2)$. The three-stage system and
conditional demand law are unchanged. Both arms use the same $1000$
source training observations. They differ in whether the $1200$
calibration observations come from the source or target distribution;
target calibration does not replace the source training sample.

\begin{table}[h]
\centering
\caption{Covariate-shift sensitivity: final-prefix coverage, target $.8182$.
Entries are means $\pm$ SE across $30$ fits, conditional on the shared
test sample for each shift.}
\label{tab:shift}
\begin{appendixtabular}{rrr}
\toprule
$\delta$ & Source calibration & Target calibration\\
\midrule
0.0 & $.8169\pm.0033$ & $.8152\pm.0041$\\
0.5 & $.8184\pm.0033$ & $.8155\pm.0036$\\
1.0 & $.7966\pm.0052$ & $.8185\pm.0030$\\
\bottomrule
\end{appendixtabular}
\end{table}

The displayed uncertainties are standard errors, not confidence bounds
for a realized policy. The smaller shift does not produce a detectable
source-calibration deficit in this run; the larger shift does.

\subsection{Demand Data}
\label{app:real-data}

The UCI Bike Sharing dataset \citep{fanaee2013bike} is distributed under
CC BY 4.0. We use its \texttt{hour.csv} file and preserve the original archive.
Its SHA-256 checksum and library versions are recorded in
\path{bikeshare_inventory.json}. The runnable script is
\path{run_bikeshare_inventory.py} with the option
\texttt{-{}-repetitions 5}.

Each cycle starts at hour $0,4,8,12,16$, or $20$. We require the preceding
hour and all four target hours to be present; missing observations are
not filled with zeros. Features are sine/cosine encodings of hour, month,
and weekday; holiday, working-day, and year indicators; and lagged
temperature, humidity, wind speed, and rental count. Neither
\texttt{casual} nor \texttt{registered} counts from the target window
enter the features. Standardization is fitted only on training data.
The target intervals do not overlap, although adjacent cycles can share
a lagged feature with a previous target.

Chronological training ends on January 6, 2012. Test cycles run from
July 4 through December 31, 2012. Calibration uses the latest
$N_{\rm cal}$ cycles in the intervening block. The five replications
vary fold allocation, so their variability does not quantify
time-series sampling uncertainty. Increasing $N_{\rm cal}$ includes
older observations and is confounded with recency.

\begin{center}

\begin{appendixtabular}{lrrrrr}
\toprule
Split & $N_{\rm cal}$ & LQR coverage & LQR cost & GB coverage & GB cost\\
\midrule
Chronological & 300 & $.8727\pm.0236$ & $2745.53\pm51.07$ & $.8830\pm.0114$ & $2645.16\pm42.56$\\
Chronological & 600 & $.8623\pm.0158$ & $2705.56\pm22.96$ & $.8640\pm.0068$ & $2652.23\pm16.82$\\
Chronological & 1000 & $.8143\pm.0183$ & $2700.25\pm20.18$ & $.8054\pm.0173$ & $2650.42\pm15.82$\\
Randomized & 300 & $.8657\pm.0414$ & $1998.79\pm28.98$ & $.8409\pm.0433$ & $1860.02\pm60.97$\\
Randomized & 600 & $.8471\pm.0145$ & $1989.11\pm42.13$ & $.8501\pm.0235$ & $1847.10\pm34.92$\\
Randomized & 1000 & $.8321\pm.0241$ & $1984.34\pm38.08$ & $.8383\pm.0239$ & $1872.40\pm59.71$\\
\bottomrule
\end{appendixtabular}
\end{center}

The table reports calibrated final-prefix coverage and simulated cost as
means $\pm$ SD across five replications. Chronological SD measures fold
randomization only; randomized-split SD also includes partition variability.
Neither is a time-series sampling confidence interval.
Randomized splits use five different partitions with the same sample
proportions. Their plug-in baselines have final coverages $.8362$ (LQR)
and $.8622$ (GB), with costs $1985.45$ and $1849.94$.
At $N_{\rm cal}=1000$, calibration therefore changes the mean cost by
$-1.11$ and $+22.46$, respectively. Random splitting is a diagnostic,
not a prospective deployment evaluation or proof of independent cycles.

All costs and coverages use
$S_k(x)=\max\{0,\hat s_1(x),\ldots,\hat s_k(x)\}$.
This coordinatewise enlargement preserves a lower service bound when
one applies to the raw policy, but changes its cost and invalidates
direct application of the raw-policy excess-cost bound.
In the chronological $1000$-cycle calibration setting, raw stock
crossings occur on $17.95\%$ of contexts for calibrated LQR and
$14.50\%$ for calibrated GB. Mean raw survivor counts are
$(334,304.2,287.8)$ and $(334,305.6,289.8)$.
Projection is therefore a substantive part of this empirical protocol.
Table~\ref{tab:pat-projection} isolates its effect on the same algebraic
objective; raw policies may be infeasible, so this is not an operational
cost comparison. The change can have either sign.
The paired cost changes relative to plug-in are $-567.22$ and $-756.53$,
with fold-randomization standard deviations $20.18$ and $15.82$.

\begin{table}[!ht]
\centering\small
\caption{Fresh-fold projection diagnostic: means (fold SEs). Raw values evaluate the same algebraic objective at possibly infeasible decisions; they are not executable operating costs. The difference is projected minus raw.}
\label{tab:pat-projection}
\begin{appendixtabular}{lrrr}
\toprule
Learner & Raw objective & Projected objective & Difference\\
\midrule
LQR & \shortstack{$2690.09$\\$(6.61)$} & \shortstack{$2700.25$\\$(9.02)$} & \shortstack{$10.17$\\$(2.89)$}\\
GB & \shortstack{$2653.95$\\$(8.03)$} & \shortstack{$2650.42$\\$(7.07)$} & \shortstack{$-3.53$\\$(2.08)$}\\
\bottomrule
\end{appendixtabular}
\end{table}

\begin{table}[!ht]
\centering\small
\caption{Chronological Bike Sharing stress test: prefix coverage (\%) and hypothetical cost from observed rentals. Fresh-fold entries average five allocations (allocation SEs in Appendix~\ref{app:real-data}); other policies are fixed.}
\label{tab:pat-bike-baselines}
\begin{appendixtabular}{llrrrr}
\toprule
Learner & Policy & $k=1$ & $k=2$ & $k=3$ & Cost\\
\midrule
LQR & Plug-in & 0.6742 & 0.5465 & 0.5296 & 3267.48\\
LQR & Fresh-fold & \shortstack{$0.9127$\\$(0.0010)$} & \shortstack{$0.8685$\\$(0.0038)$} & \shortstack{$0.8143$\\$(0.0082)$} & \shortstack{$2700.25$\\$(9.02)$}\\
LQR & Bonferroni & 0.9117 & 0.9117 & 0.9117 & 2723.36\\
LQR & COINS, fixed & 0.9117 & 0.8704 & 0.8038 & 2720.44\\
GB & Plug-in & 0.6188 & 0.5784 & 0.4620 & 3406.94\\
GB & Fresh-fold & \shortstack{$0.9010$\\$(0.0002)$} & \shortstack{$0.8507$\\$(0.0042)$} & \shortstack{$0.8054$\\$(0.0077)$} & \shortstack{$2650.42$\\$(7.07)$}\\
GB & Bonferroni & 0.9005 & 0.8995 & 0.8958 & 2619.40\\
GB & COINS, fixed & 0.9005 & 0.8404 & 0.7840 & 2660.80\\
\bottomrule
\end{appendixtabular}
\end{table}

The additional baseline experiment uses \texttt{run\_bikeshare\_pat.py}
and saves \texttt{bikeshare\_pat.json}. It reproduces all four original
chronological $N_{\rm cal}=1000$ rows exactly before adding Bonferroni
and fixed-score prefix-COINS (Table~\ref{tab:pat-bike-baselines}).
Both baselines fit only on training data and use all calibration cycles.
Their lack of fold randomization does not remove uncertainty from
time-series sampling or distribution shift.

The data contain no echelon inventory, lead times, stockout quantities,
or realized inventory expenses. The network and cost parameters are
hypothetical. Rental counts may be supply-constrained, and the temporal
dependence and shift lie outside the independent-cycle theorem.
These results measure empirical behavior on observed demand, not
optimality or service certification for a real supply chain.

\section{Proofs}

\subsection{Proof of Proposition~\ref{prop:event-characterization}}
\label{app:proof-event-characterization}

\begin{proof}
We expand the backlog recursion, then read off the no-backlog event.

\paragraph{Expand the upstream backlogs.}
We prove by backward induction that, for $i=1,\ldots,k-1$,
\begin{equation}\label{eq:Bi-max}
B_i^{(k)}
=\left(
\max_{i\le j\le k-1}
\left\{\sum_{m=i}^jD_m-(s_{j+1}-s_i)\right\}
\right)^+.
\end{equation}
For $i=k-1$, this is the defining recursion. Suppose it holds at
$i+1$ and write
\[
M_{i+1}:=\max_{i+1\le j\le k-1}
\left\{\sum_{m=i+1}^jD_m-(s_{j+1}-s_{i+1})\right\}.
\]
Then
\begin{align*}
B_i^{(k)}
&=\bigl(\max\{M_{i+1},0\}+D_i-(s_{i+1}-s_i)\bigr)^+\\
&=\left(\max\left\{
M_{i+1}+D_i-(s_{i+1}-s_i),
D_i-(s_{i+1}-s_i)
\right\}\right)^+.
\end{align*}
For $j\ge i+1$, the first entry becomes
\[
\sum_{m=i}^jD_m-(s_{j+1}-s_i),
\]
and the second entry is the same expression for $j=i$. Together they
give all indices in \eqref{eq:Bi-max}, completing the induction.

\paragraph{Characterize customer backlog.}
Substituting \eqref{eq:Bi-max} into
$B_0^{(k)}=(B_1^{(k)}+D_0-s_1)^+$ gives
\[
\begin{aligned}
B_0^{(k)}
&=\max\left\{0,D_0-s_1,
 \max_{1\le j\le k-1}\left(\sum_{m=0}^{j}D_m-s_{j+1}\right)\right\}\\
&=\max\{0,C_1-s_1,\ldots,C_k-s_k\},
\end{aligned}
\]
which is \eqref{eq:B0-max}. For $k=1$, this is just
$B_0^{(1)}=(D_0-s_1)^+$. The maximum is zero if and only if
$C_j\le s_j$ for every $j\le k$, proving the event characterization.
\end{proof}

\subsection{Proof of Proposition~\ref{prop:derivative-feasibility}}
\label{app:proof-derivative-feasibility}

\begin{proof}
For $k=1$, $\Phi_1=F_0$ and the conclusion follows from continuity of
$f_0$. For $k\ge2$, we first condition on the earlier windows and then
differentiate under the integral.

\paragraph{Condition on the earlier demands.}
Independence of $D_{k-1}$ from $(D_0,\ldots,D_{k-2})$ gives
\[
\Phi_k(x)=\int_{\mathcal A_{k-1}(a)}
F_{k-1}\left(x-\sum_{m=0}^{k-2}d_m\right)
\left(\prod_{m=0}^{k-2}f_m(d_m)\right)\,\mathrm d d_0\cdots\mathrm d d_{k-2}.
\]
\paragraph{Differentiate and check continuity.}
For $t\ne0$, the difference quotient of the cdf factor equals
\[
\frac1t\int_0^t
f_{k-1}\left(x-\sum_{m=0}^{k-2}d_m+u\right)\,\mathrm du.
\]
It converges pointwise to the integrand in
\eqref{eq:derivative-feasibility} and is bounded in absolute value by
$\|f_{k-1}\|_\infty$. After multiplication by
$\prod_{m=0}^{k-2}f_m(d_m)$, this is an integrable dominating function.
Dominated convergence proves \eqref{eq:derivative-feasibility}.
Applying the same argument to a sequence $x_r\to x$ proves continuity
of the derivative.
\end{proof}

\subsection{Proof of Lemma~\ref{lem:survivor-exchangeability}}
\label{app:proof-survivor-exchangeability}

\begin{proof}
We first condition on the identities of the survivors, then retain only
their count. Fix a history with positive survival probability, and write
$Q=P(\cdot\mid E_{k-1},\mathcal F_{k-1})$. At this history both the
selection set $E_{k-1}$ and the score function $R_k$ are fixed.

\paragraph{Fix the survivor indices.}
For a set $J\subseteq\mathcal I_k$ whose selection has positive
conditional probability, condition on $\mathcal K_k=J$. This event
requires each unit in $J$ to survive and each remaining unit to fail.
Independence of the raw units therefore gives the product law
$Q^{\otimes |J|}$ for the selected units, listed in index order.

\paragraph{Condition only on the count.}
Every set $J$ with $|J|=m$ gives the same product law $Q^{\otimes m}$.
Mixing over these sets thus preserves that law when we condition on
$M_k=m$. The test unit is independent of the fold and, conditional on
its own survival, also has law $Q$. Applying the fixed function $R_k$
to these $m+1$ units proves the claim, including $m=0$.
\end{proof}

\subsection{Proof of Theorem~\ref{thm:coverage}}
\label{app:proof-coverage}

\begin{proof}
We first establish the rank guarantee for one stage, then use conditional
expectations to propagate it through the prefix.

\paragraph{The stagewise rank guarantee.}
Write $A_k:=\{R_k(Z)\le\hat t_k\}$. Fix an
$\mathcal F_{k-1}$-history for which
$\Pbb(E_{k-1}\mid\mathcal F_{k-1})>0$. By
Lemma~\ref{lem:survivor-exchangeability}, the $m$ selected fold scores and
the test score are i.i.d., hence exchangeable, conditionally on
$\mathcal F_{k-1}$, $M_k=m$ and $E_{k-1}(Z)$.

Let $r=\lceil(m+1)\beta_k\rceil$. If $r\le m$, continuity makes the test
rank uniform on $\{1,\ldots,m+1\}$, and the event that the test score is
no larger than the $r$-th calibration order statistic has probability
$r/(m+1)$. If $r=m+1$, the augmented order statistic in
\eqref{eq:generic-threshold} is $+\infty$, so the probability is
$1=r/(m+1)$. This also handles $m=0$ and proves
\eqref{eq:stage-sandwich}.

\paragraph{Express realized prefix coverage as a product.}
The rank guarantee averages over the current calibration fold. Coverage
for a realized fold can be below $\beta_k$, so we propagate the guarantee
using the tower property. Define
\[
V_j:=\Pbb(A_j\mid E_{j-1},\mathcal F_j)\in[0,1]
\quad\text{on }\{\Pbb(E_{j-1}\mid\mathcal F_j)>0\},
\qquad V_j:=1\text{ otherwise},
\]
The score, threshold and preceding selection set are fixed by
$\mathcal F_j$, so $V_j$ is $\mathcal F_j$-measurable. On the
zero-survival set both sides of the chain rule below vanish.
Future raw folds are independent of the test unit given
$\mathcal F_j$. Since $\mathcal F_k$ is generated from $\mathcal F_j$ and
those future folds, for $j\le k$ this implies
\[
\Pbb(A_j\mid E_{j-1},\mathcal F_k)
=\Pbb(A_j\mid E_{j-1},\mathcal F_j)=V_j.
\]
The conditional chain rule therefore gives
\[
\Pbb(E_k\mid\mathcal F_k)=\prod_{j=1}^kV_j.
\]
\paragraph{Average each factor before advancing.}
Fold $j$, and hence $M_j$, is conditionally independent of
fresh-replicate survival $E_{j-1}$ given $\mathcal F_{j-1}$. The tower
property and \eqref{eq:stage-sandwich} therefore give, on histories with
positive survival probability and for every admissible $m$,
\[
\E[V_j\mid\mathcal F_{j-1},M_j=m]
=\Pbb(A_j\mid E_{j-1},\mathcal F_{j-1},M_j=m)
=\frac{\lceil(m+1)\beta_j\rceil}{m+1}.
\]
Averaging this identity over $M_j$, and using $V_j=1$ on histories
with zero survival probability, yields
\begin{equation}
\E[V_j\mid\mathcal F_{j-1}]\ge\beta_j
\qquad\text{a.s.},
\qquad j=1,\dots,K.
\label{eq:cond-mean-bound}
\end{equation}
Since $\prod_{j<k}V_j$ is nonnegative and
$\mathcal F_{k-1}$-measurable, the tower property and
\eqref{eq:cond-mean-bound} give
\begin{align*}
\Pbb(E_k)
&=\E\left[\prod_{j=1}^kV_j\right]
=\E\left[
\left(\prod_{j=1}^{k-1}V_j\right)
\E[V_k\mid\mathcal F_{k-1}]
\right]
\ge\beta_k\E\left[\prod_{j=1}^{k-1}V_j\right].
\end{align*}
Starting from $\Pbb(E_0)=1$, induction gives \eqref{eq:joint-lower}.
Freshness enters in \eqref{eq:cond-mean-bound}: the current fold must be
independent of $\mathcal F_{k-1}$. A shared calibration set does not
in general satisfy this requirement.

\paragraph{Allow ties.}
For $m+1$ exchangeable scores, let $I_i$ indicate that score $i$ is at
most the $r$-th smallest of the other scores augmented by $+\infty$.
Every index among the $r$ smallest scores satisfies this inequality, even
with ties, so $\sum_i I_i\ge r$. Exchangeability gives
$\E I_i=\E[\sum_j I_j]/(m+1)\ge r/(m+1)$. Applying this counting argument
to the surviving calibration and test scores gives
$\Pbb(A_k\mid E_{k-1},\mathcal F_{k-1},M_k=m)\ge r/(m+1)$ for the
deterministic weak-inequality rule. The same recursion proves
\eqref{eq:joint-lower}, but equality in \eqref{eq:stage-sandwich} can fail.
\end{proof}

\subsection{Service Profile}
\begin{proposition}[Inventory service profile]
\label{prop:inventory-coverage}
\leanref{inventory_service_profile}
For independent planning cycles and predictable fitted quantile functions,
the policy in \eqref{eq:conformal-level} satisfies
\[
 \Pbb\{C_j\le\hat s_j(X),\ j\le k\}\ge\alpha_k,
 \qquad k=1,\ldots,n,
\]
averaging over the training and calibration samples and a fresh cycle.
\end{proposition}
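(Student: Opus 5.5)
The plan is to obtain the service profile as a direct specialization of Theorem~\ref{thm:coverage}, using the tie-robust lower bound. That matters because discrete demand is allowed here (Appendix~\ref{app:model-details}), so the stage scores need not be atomless. I would instantiate the general setup of Section~\ref{sec:formulation} with $K=n$. The statistical units are $Z_i=(X_i,D_0^{(i)},\ldots,D_{n-1}^{(i)})$, and the scores are $R_k(z)=C_k-\hat q_k(x)$. Independent planning cycles give the required i.i.d.\ structure of calibration and test units, independent of training. Since $\hat s_k(x)=\hat q_k(x)+\hat t_k$, the stage events satisfy
\[
\{R_k(Z)\le\hat t_k\}=\{C_k\le\hat s_k(X)\}.
\]
Hence, by \eqref{eq:prefix-events}, $E_k(Z)=\bigcap_{j\le k}\{C_j\le\hat s_j(X)\}$. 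The survivor sets $\mathcal K_k$ in \eqref{eq:conformal-level} coincide with those of Section~\ref{sec:general}, and the rank $r_k=\lceil(M_k+1)\beta_k\rceil$ with the augmented order statistic is exactly \eqref{eq:generic-threshold}.

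Next I would verify predictability. Each $\hat q_k$ is fit on the training sample, possibly restricted by earlier calibrated decisions $\hat s_{<k}$, and never on the current or future folds. It is therefore $\mathcal F_{k-1}$-measurable, and so is $R_k$. If $\hat t_k=+\infty$, the stage event holds trivially, matching the convention in Theorem~\ref{thm:coverage}. The tie clause of Theorem~\ref{thm:coverage}, proved in Appendix~\ref{app:proof-coverage} by the counting argument, then yields
\[
\Pbb\{E_k(Z)\}\ge\prod_{j\le k}\beta_j=a_k
\]
for every $k$, with no continuity assumption.

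Finally, I would identify the targets. With $\beta_k=\alpha_k/\alpha_{k-1}=(p+H_{k+1})/(p+H_k)$ and $\alpha_0=1$, the product telescopes: $a_k=\prod_{j\le k}\beta_j=\alpha_k$. Positivity of the $h_i$ gives $0<\beta_k<1$, so the targets are a strictly decreasing profile, as Section~\ref{sec:formulation} requires. The probability is joint over training, calibration, and the fresh cycle, which is the stated averaging.

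The only step needing care is the tie case. The atomless equality in \eqref{eq:stage-sandwich} may fail for discrete demand, so the argument must invoke only the weak-inequality lower bound. It must also check that the recursion $\E[V_j\mid\mathcal F_{j-1}]\ge\beta_j$ still holds there, including on zero-survival histories, where $V_j:=1$. That step is covered by the proof of Theorem~\ref{thm:coverage}, so nothing new needs to be established.
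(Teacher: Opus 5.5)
Your proposal is correct and follows the paper's proof. Both identify the score event $\{R_k(Z)\le\hat t_k\}$ with $\{C_k\le\hat s_k(X)\}$, invoke Theorem~\ref{thm:coverage}, and telescope $\prod_{j\le k}\beta_j=\alpha_k$; your explicit appeal to the tie-robust lower bound of that theorem is a useful clarification, since the paper's one-line proof relies on it implicitly to cover discrete demand.
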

\label{app:proof-inventory-coverage}

\begin{proof}
Since $R_k(Z)=C_k-\hat q_k(X)$ and
$\hat s_k(X)=\hat q_k(X)+\hat t_k$, the score event is exactly
$C_k\le\hat s_k(X)$. Apply Theorem~\ref{thm:coverage} and telescope
$\prod_{j=1}^k\beta_j=\prod_{j=1}^k(\alpha_j/\alpha_{j-1})=\alpha_k$.
\end{proof}

\begin{remark}[Truncated subsystems]
\label{rem:truncated-subsystems}
Both Theorem~\ref{thm:newsvendor} and the event characterization
(Proposition~\ref{prop:event-characterization}) concern the truncated
$k$-stage system through $B_0^{(k)}$. When a truncated system is embedded in
the timeline of the $n$-stage system, the lead-time windows defining $D_i$
depend on $L_{i+2}+\cdots+L_n$ and so differ from those of a standalone
$k$-stage system, but the window \emph{lengths} coincide ($L_{i+1}$ for
$i\ge1$, and $L_1+1$ for $i=0$). Under i.i.d.\ demand the two families are
therefore equal in distribution, and the identity
$\Pbb\{B_0^{(k)}=0\}=\Pbb\{C_1\le s_1,\dots,C_k\le s_k\}$ used in
\eqref{eq:joint-target} is valid for every $k$.
\end{remark}


\subsection{Proof of Theorem~\ref{thm:coverage-law}}
\label{app:proof-coverage-law}

The proof has three stages: identify the realized coverage, derive the
conditional count and rank laws, and combine them into a universal chain.
We use the notation of Section~\ref{sec:properties}: $a_k=\prod_{j\le k}\beta_j$,
$\varepsilon_k=a_{k-1}-a_k$, $r_k(m)=\lceil(m+1)\beta_k\rceil$,
$g_k(m)=r_k(m)/(m+1)$, $\theta_k(m)=g_k(m)-\beta_k$, and
$p_k=\Pbb(E_k(Z)\mid\mathcal F_k)$. The scores are jointly measurable in
$(\omega,z)$, with $R_k$ and $E_{k-1}$ measurable with respect to
$\mathcal F_{k-1}\otimes\mathcal B(\mathcal Z)$, and $\hat t_k$ is
$\mathcal F_k$-measurable. Three facts are used repeatedly.

\emph{Freezing.} If $Y$ is $\mathcal G$-measurable, $Z'$ is independent of
$\mathcal G$, and $\varphi$ is bounded and jointly measurable, then
$\E[\varphi(Y,Z')\mid\mathcal G]=\int\varphi(Y,z)\,\Pbb_{Z'}(dz)$ almost
surely. The test unit $Z$ is independent of $\mathcal F_K$, and the units of
fold $k$ are i.i.d.\ from $P$ and independent of $\mathcal F_{k-1}$, because
the fold allocation is made independently of the values.

\emph{Conditioning on a possibly null event.} Throughout, conditioning on
$E_{k-1}(Z)$ means the ratio
$\Pbb(\,\cdot\,\cap E_{k-1}(Z)\mid\mathcal F_{k-1})/p_{k-1}$ on
$\{p_{k-1}>0\}$; on $\{p_{k-1}=0\}$ no conditional statement is made and the
convention $V_k:=1$ is used. Conditional laws given
$\sigma(\mathcal F_{k-1},M_k)$ are regular conditional laws, which exist
because the data spaces are standard Borel.

\emph{Order statistics of monotone images.} If $\varphi$ is nondecreasing,
the $r$-th smallest element of $\{\varphi(y_1),\ldots,\varphi(y_m)\}$ is
$\varphi$ of the $r$-th smallest element of $\{y_1,\ldots,y_m\}$, counting
multiplicities.

\paragraph{Part (i): identify the realized stage coverage.}
For every $\omega$, $E_k(\omega)=E_{k-1}(\omega)\cap\{z:R_k(\omega,z)\le\hat t_k(\omega)\}$.
By freezing with $\mathcal G=\mathcal F_k$, $p_k(\omega)=P(E_k(\omega))$ is a
version of $\Pbb(E_k(Z)\mid\mathcal F_k)$; since $E_k$ is also
$\mathcal F_K$-measurable and $Z$ is independent of $\mathcal F_K$, the same
function is a version of $\Pbb(E_k(Z)\mid\mathcal F_K)$. Put
$\Gamma_k(\omega,t):=P(E_{k-1}(\omega)\cap\{R_k(\omega,\cdot)\le t\})$ for
$t\in[-\infty,+\infty]$. It is $\mathcal F_{k-1}$-measurable in $\omega$, and on
$\{p_{k-1}>0\}$, $\mathcal H_k(t)=\Gamma_k(t)/p_{k-1}$ is the conditional cdf of
$R_k(Z)$ given $\mathcal F_{k-1}$ and $E_{k-1}(Z)$, with $\mathcal H_k(+\infty)=1$. Hence,
for every $\omega$ with $p_{k-1}(\omega)>0$,
\[
 p_k=\Gamma_k(\hat t_k)
 =p_{k-1}\mathcal H_k(\hat t_k)=p_{k-1}V_k.
\]
If
$p_{k-1}(\omega)=0$ then $0\le p_k\le p_{k-1}=0$ because $E_k\subseteq E_{k-1}$,
so $p_k=p_{k-1}V_k$ with $V_k=1$. The identity is pathwise; it uses neither
atomlessness nor any property of fold $k$.

\paragraph{Part (ii): derive the survivor-count law.}
Given $\mathcal F_{k-1}$, the set $E_{k-1}$ is fixed and the $N_k$ units of
fold $k$ are i.i.d.\ from $P$. By freezing, the indicators
$\mathbf 1\{Z_i\in E_{k-1}\}$, $i\in\mathcal I_k$, are conditionally i.i.d.\
Bernoulli$(p_{k-1})$. Summing these $N_k$ indicators gives
$M_k\mid\mathcal F_{k-1}\sim\operatorname{Binomial}(N_k,p_{k-1})$,
which is \eqref{eq:survivor-budget}.

\paragraph{Part (iii): transform the selected scores to uniforms.}
On $\{p_{k-1}>0,\ M_k=m\}$, Lemma~\ref{lem:survivor-exchangeability} (the part
concerning the $m$ selected fold scores) shows that, given $\mathcal F_{k-1}$
and $M_k=m$, the selected scores $Y_1,\ldots,Y_m$ are i.i.d.\ with the
continuous cdf $\mathcal H_k$. Let $r=r_k(m)$. If $r=m+1$ (which includes $m=0$), the
augmented order statistic is $+\infty$ and $V_k=1$. If $r\le m$, the threshold is finite and
\[
 \hat t_k=Y_{(r)},\qquad V_k=\mathcal H_k(Y_{(r)}).
\]
The variables
$U_i:=\mathcal H_k(Y_i)$ are
i.i.d.\ uniform on $(0,1)$ by the probability integral transform, and by the
monotone-image fact,
\[
 V_k=\mathcal H_k(Y_{(r)})=U_{(r)}
 \sim\mathrm{Beta}(r,m+1-r).
\]
On
$\{p_{k-1}=0\}$, $M_k=0$ almost surely, $r_k(0)=\lceil\beta_k\rceil=1=M_k+1$ and
$V_k=1$, so the same description holds.

\paragraph{Markov property and universality.}
For $p\in[0,1]$ let $\kappa_k(p,\cdot)$ be the law of $(M,pV)$ when
$M\sim\operatorname{Binomial}(N_k,p)$ and, given $M=m$,
$V\sim\mathrm{Beta}(r_k(m),m+1-r_k(m))$ if $r_k(m)\le m$ and $V=1$ otherwise.
It is a probability kernel (the binomial weights are polynomials in $p$) and
depends only on $(N_k,\beta_k)$. Parts (i)--(iii) give
$\E[\varphi(M_k,p_k)\mid\mathcal F_{k-1}]=\int\varphi\,d\kappa_k(p_{k-1},\cdot)$
almost surely for bounded measurable $\varphi$, because $p_k=p_{k-1}V_k$ with
$p_{k-1}$ $\mathcal F_{k-1}$-measurable. Since $(M_j,p_j)_{j<k}$ is
$\mathcal F_{k-1}$-measurable, for bounded measurable $\varphi$,
\[
 \E\bigl[\varphi\bigl((M_j,p_j)_{j\le k}\bigr)\bigr]
 =\E\Bigl[\int\varphi\bigl((M_j,p_j)_{j<k},y\bigr)\,
 \kappa_k(p_{k-1},dy)\Bigr].
\]
Induction from $p_0=1$ identifies the joint law of $(M_k,p_k)_{k\le K}$ with
the law of the Markov chain with kernels $\kappa_1,\ldots,\kappa_K$. It is the
same for every $P$, every predictable score sequence (fixed or refitted) and
every atomless survivor score law.

\paragraph{Consequence (C1): accumulate the rounding slack.}
The Beta mean is $r/(m+1)$, and $V_k=1=(m+1)/(m+1)$ when $r=m+1$, so
$\E[V_k\mid\mathcal F_{k-1},M_k=m]=g_k(m)=\beta_k+\theta_k(m)$ on
$\{p_{k-1}>0\}$. Multiplying by $p_{k-1}$ gives
\[
 \E[p_k\mid\mathcal F_{k-1}]
 =\beta_kp_{k-1}
  +p_{k-1}\E[\theta_k(M_k)\mid\mathcal F_{k-1}],
\]
which also holds when $p_{k-1}=0$. Taking expectations and using
$\Pbb(E_k(Z))=\E p_k$ gives a recursion for the excess
$d_k:=\Pbb(E_k(Z))-a_k$:
\[
 d_k=\beta_kd_{k-1}+\E[p_{k-1}\theta_k(M_k)],\qquad d_0=0.
\]
Iterating the recursion, with
$\prod_{l=j+1}^k\beta_l=a_k/a_j$, yields
\[
 d_k=\sum_{j=1}^k\frac{a_k}{a_j}\,
 \E[p_{j-1}\theta_j(M_j)],
\]
which is (C1). Appendix~\ref{app:proof-prefix-slack} bounds its terms
using the binomial reciprocal-count identity.

\paragraph{Consequences (C2)--(C4).}
The joint law of $(p_k)_k$ determines both its marginal coverages and the
probability of the event $\{p_k\ge a_k-\tau_k\ \forall k\}$. Universality
therefore gives (C2) and (C4). For (C3), $\hat t_k=+\infty$ iff $r_k(M_k)=M_k+1$, that
is, iff $r_k(M_k)>M_k$. For an integer $m$ and $0<\beta<1$,
\[
 \lceil(m+1)\beta\rceil\le m
 \iff (m+1)\beta\le m
 \iff m\ge\frac{\beta}{1-\beta}
 \iff m\ge\left\lceil\frac{\beta}{1-\beta}\right\rceil.
\]
The middle equivalence divides by $1-\beta>0$.
Moreover $\beta_k/(1-\beta_k)=a_k/(a_{k-1}-a_k)=a_k/\varepsilon_k$. In the
inventory case $1-\beta_k=h_k/(p+H_k)$, so
$\beta_k/(1-\beta_k)=(p+H_{k+1})/h_k$; for the base system ($h=(1,.5,.5)$,
$p=9$) this gives $m^*=(10,19,18)$. The abstention indicators
$\mathbf 1\{M_k<m_k^*\}$ are functions of the chain, so their joint law is
universal. \qed

\paragraph{Chain simulation.}
The following procedure samples the universal chain and therefore computes
every quantity in (C1)--(C4) for given $(N_k,\beta_k)$; the implementation is
\texttt{coverage\_chain.py}. Ranks are computed in exact rational
arithmetic: a floating-point product such as $24\times\beta$ can exceed an
integer by rounding and produce a spurious infinite threshold.

\begin{center}
\fbox{\begin{minipage}{0.92\linewidth}
\small
\textbf{Chain simulation.} Input: fold sizes $N_{1:K}$, rational levels
$\beta_{1:K}$, tolerances $\tau_{1:K}$ as in (C4), number of replications $B$.
\begin{enumerate}
\setlength\itemsep{0pt}
\item for $b=1,\ldots,B$:
\item \quad set $p\leftarrow1$;
\item \quad for $k=1,\ldots,K$:
\item \qquad draw $M\sim\operatorname{Binomial}(N_k,p)$;
\item \qquad $r\leftarrow\lceil(M+1)\beta_k\rceil$ in exact rational arithmetic;
\item \qquad if $r\le M$: draw $V\sim\mathrm{Beta}(r,M+1-r)$ and set $A\leftarrow0$;
\item \qquad otherwise: set $V\leftarrow1$ and $A\leftarrow1$ (infinite threshold);
\item \qquad $p\leftarrow pV$; record $(M_k^{(b)},p_k^{(b)},A_k^{(b)})\leftarrow(M,p,A)$.
\item Estimate $\E p_k$, quantiles of $p_k$, $\Pbb(p_k<a_k)$,
$\Pbb(\exists k:A_k=1)$ and $\Pbb(p_k\ge a_k-\tau_k\ \forall k)$ by
empirical averages over $b$.
\item Return these estimates with Monte Carlo standard errors.
\end{enumerate}
\end{minipage}}
\end{center}

\paragraph{Overlay on the main design.}
The table compares the chain (folds $334,333,333$; $2{,}000{,}000$ draws) with
the $100$ fits of the main design (Table~\ref{tab:ablation} and
Table~\ref{tab:valid-mechanisms}). The chain has the same law for fixed and
refitted scores, as (C2) states. Observed values carry across-fit sampling
error (standard error of a mean about $.002$; of a below-target rate about
$.05$) and a small test-set error; the observed rates are correlated across
methods because the methods share data. The largest discrepancy, the
fixed-score rate $.52$ against the chain's $.415$ at $k=3$, is about two
standard errors, which is consistent with $100$ correlated fits.

\begin{center}

\begin{appendixtabular}{llrrr}
\toprule
Quantity & Source & $k=1$ & $k=2$ & $k=3$\\
\midrule
Mean coverage & chain & $.9104$ & $.8663$ & $.8221$\\
 & refit & $.9098$ & $.8652$ & $.8198$\\
 & fixed & $.9098$ & $.8655$ & $.8201$\\
$\Pbb(p_k<a_k)$ & chain & $.445$ & $.429$ & $.415$\\
 & refit & $.48$ & $.44$ & $.45$\\
 & fixed & $.48$ & $.47$ & $.52$\\
$10\%$ quantile & chain & $.8900$ & $.8421$ & $.7949$\\
 & refit & $.8899$ & $.8399$ & $.7948$\\
 & fixed & $.8899$ & $.8418$ & $.7927$\\
\bottomrule
\end{appendixtabular}
\end{center}

The depth-study comparison is reported in Appendix~\ref{app:mechanisms}.
All overlays are produced by \path{run_theory_overlay.py}
(\path{theory_overlay.json}).

\subsection{Proof of Proposition~\ref{prop:prefix-slack}}
\label{app:proof-prefix-slack}

\begin{proof}
We bound each nonnegative term in (C1). The needed binomial identity
follows by integrating the probability-generating function: for
$M\sim\operatorname{Binomial}(N,p)$ and $p>0$,
\[
 p\,\E\left[\frac1{M+1}\right]
 =p\int_0^1\E[u^M]\,\mathrm du
 =p\int_0^1(1-p+pu)^N\,\mathrm du
 =\frac{1-(1-p)^{N+1}}{N+1}
 \le\frac1{N+1}.
\]
The identity also holds at $p=0$, where both sides of the equality are
zero. Now $0\le\theta_j(m)\le1/(m+1)$, and conditioning on
$\mathcal F_{j-1}$ gives
\[
 0\le\E[p_{j-1}\theta_j(M_j)\mid\mathcal F_{j-1}]
 \le p_{j-1}\E[(M_j+1)^{-1}\mid\mathcal F_{j-1}]
 \le\frac1{N_j+1}.
\]
Take expectations, multiply by $a_k/a_j>0$, and sum over $j\le k$
in (C1) to obtain the claim.
\end{proof}

\subsection{Reuse boundary: statement and proof}
\begin{proposition}[Reuse boundary]
\label{prop:reuse-boundary}
\leanref{reuse_boundary_weak_reuse_exact,reuse_boundary_weak_reuse_ten_over_forty_one,reuse_boundary_strict_reuse_count,reuse_boundary_rank_domination,reuse_boundary_finite_budget}
Let $N$ calibration units and the test unit be i.i.d.
\begin{itemize}
\item[(a)] \emph{Weak-survivor reuse can fail.} Let $Z\sim U(0,1)$,
$R_1(z)=z$, $R_2(z)=-z$, and use all units at both stages: rank
$r_1=\lceil(N+1)\beta_1\rceil$ at stage 1, the weak survivors
$\{i:R_1(Z_i)\le\hat t_1\}$ with rank $r_2=\lceil(r_1+1)\beta_2\rceil$ at
stage 2, and $r_2\le r_1\le N$. Then $\Pbb(E_2(Z))=(r_2-1)/(N+1)$ exactly, and
$-(1-\beta_2)/(N+1)\le\Pbb(E_2(Z))-\beta_1\beta_2<2\beta_2/(N+1)$. For
$\beta=(\frac12,\frac12)$ and $N=40$ this is $10/41<\frac14$.
\item[(b)] \emph{Strict-survivor reuse is exact for fixed scores.} Let the
scores be $\mathcal F_0$-measurable and, conditional on $\mathcal F_0$,
let each $R_k(Z)$ have an atomless law. Set $S_0=N+1$,
$S_k=\lceil S_{k-1}\beta_k\rceil$, $D_0=\{1,\ldots,N\}$,
$\hat t_k=\ord_{S_k}(\{R_k(Z_i):i\in D_{k-1}\}\cup\{+\infty\})$ and
$D_k=\{i\in D_{k-1}:R_k(Z_i)<\hat t_k\}$. Then
$\Pbb(E_k(Z))=S_k/(N+1)\in[a_k,\,a_k+\sum_{j\le k}(a_k/a_j)/(N+1))$. With
the cumulative counts $S_k=\lceil(N+1)a_k\rceil$, which specialize the
COINS calibration-only rule \citep{li2026coins},
$\Pbb(E_k(Z))=S_k/(N+1)\in[a_k,a_k+1/(N+1))$.
\item[(c)] \emph{Adaptive scores: each reused survivor can cost one rank.}
At stage $k$, let $\mathcal G\supseteq\mathcal F_{k-1}$ be the information
used to choose the score, generated by training and calibration data only
(so independent of the test unit), with $E_{k-1}$ $\mathcal G$-measurable and
$\Pbb(E_{k-1}(Z)\mid\mathcal G)>0$. Calibrate on $u$ reused survivors with
$\mathcal G$-measurable values and $f$ fresh survivors from a fold
independent of $(\mathcal G,Z)$, by $t=\ord_r$ of all $u+f$ scores and
$+\infty$, where the rank $r=r(u,f)$ does not depend on the score function.
For every $\mathcal G$-measurable score,
$\Pbb(R(Z)\le t\mid\mathcal G,f,E_{k-1}(Z))\ge(r-u)^+/(f+1)$, with
equality for some atomless $P$ and score. Hence the rule is valid at level
$\beta_k$, conditionally on $(\mathcal G,f,E_{k-1}(Z))$ for every $(u,f)$,
for all atomless $P$ and all $\mathcal G$-measurable scores if and
only if $r\ge u+\lceil(f+1)\beta_k\rceil$. Every valid rule satisfies
$t\ge\ord_{\lceil(f+1)\beta_k\rceil}$ of the fresh scores and $+\infty$,
and for $f=0$ only $t=+\infty$ is valid.
\item[(d)] \emph{Budgets: maximum versus sum.} All thresholds of strict reuse
and of COINS are finite whenever $N+1\ge\max_k1/\varepsilon_k$; all
Bonferroni thresholds (ranks $\lceil(N+1)(1-\varepsilon_k)\rceil$) are finite
if and only if $N\ge\max_k(1-\varepsilon_k)/\varepsilon_k$; fresh folds need
the sum budget of Proposition~\ref{prop:calibration-budget}(i).
\end{itemize}
\end{proposition}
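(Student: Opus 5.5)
The plan is to treat the four parts separately. Parts (a) and (b) reduce to order-statistic bookkeeping on the $N+1$ exchangeable units. Part (c) reuses the tie-robust counting argument from the proof of Theorem~\ref{thm:coverage}. Part (d) is ceiling arithmetic.

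\emph{Part (a).} With $R_1(z)=z$, write $U_{(1)}<\dots<U_{(N)}$ for the ordered calibration values, so $\hat t_1=U_{(r_1)}$. The weak survivors are then exactly the $r_1$ smallest units. Their stage-2 scores are $-U_{(1)},\dots,-U_{(r_1)}$, so $\hat t_2=-U_{(r_1+1-r_2)}$. The test case succeeds iff $U_{(r_1+1-r_2)}\le Z\le U_{(r_1)}$. Conditioning on the calibration sample and using the uniform spacings mean $\E[U_{(i)}]=i/(N+1)$ gives $\Pbb(E_2(Z))=(r_2-1)/(N+1)$. For the two-sided bound I will insert $(N+1)\beta_1\le r_1<(N+1)\beta_1+1$ and $(r_1+1)\beta_2\le r_2<(r_1+1)\beta_2+1$. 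These give $r_2-1\ge(N+1)\beta_1\beta_2-(1-\beta_2)$ and $r_2-1<(N+1)\beta_1\beta_2+2\beta_2$. The numerical case is then $r_1=21$, $r_2=11$, giving $10/41$.

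\emph{Part (b).} This is the step needing the most care. I condition on $\mathcal F_0$, so the scores are fixed and, being atomless, a.s.\ tie-free. I then define a \emph{symmetric} selection on all $N+1$ units (calibration plus test). Set $T_0$ to be all units, and let $T_k$ be the $S_k$ units of $T_{k-1}$ with smallest $R_k$; if $S_k=S_{k-1}$, take $T_k=T_{k-1}$. I will prove by induction that on $\{Z\in T_{k-1}\}$ we have $D_{k-1}=T_{k-1}\setminus\{Z\}$. On that event the test score satisfies $R_k(Z)\le\hat t_k$ iff its rank in $T_{k-1}$ is at most $S_k$. In that case $\hat t_k$ is the $(S_k+1)$-th smallest score in $T_{k-1}$ (or $+\infty$). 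The calibration units with scores \emph{strictly} below $\hat t_k$ are then exactly $T_k\setminus\{Z\}$, and this is where strictness is used. Hence $E_k(Z)=\{Z\in T_k\}$. Exchangeability of the $N+1$ units and $|T_k|=S_k$ give $\Pbb(E_k(Z))=S_k/(N+1)$. The bounds follow by iterating $S_{k-1}\beta_k\le S_k<S_{k-1}\beta_k+1$ from $S_0=N+1$, and the same argument applies verbatim to the cumulative counts, which are nonincreasing because $a_k$ decreases.

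\emph{Part (c).} Removing the $u$ reused values from the pooled sample lowers any order statistic's rank by at most $u$. Hence $t\ge\ord_{(r-u)^+}$ of the $f$ fresh scores together with $+\infty$. Conditional on $(\mathcal G,f,E_{k-1}(Z))$, the fresh survivors and the test score are i.i.d.\ by the argument of Lemma~\ref{lem:survivor-exchangeability} applied with $\mathcal G$ in place of $\mathcal F_{k-1}$. The tie-robust counting step of Theorem~\ref{thm:coverage} then gives the lower bound $(r-u)^+/(f+1)$.

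For equality I take $P$ uniform and a $\mathcal G$-measurable score that maps the finitely many reused points below the support, for instance to $-1$, and is the identity elsewhere. The law stays atomless and $t$ becomes exactly that fresh order statistic, so the continuous rank formula gives equality. Validity for every $(u,f)$ therefore forces $(r-u)/(f+1)\ge\beta_k$, which is $r\ge u+\lceil(f+1)\beta_k\rceil$; sufficiency comes from the lower bound. The two consequences follow: the stated domination of $t$ by the fresh order statistic, and, for $f=0$, the requirement $r\ge u+1$, which means $t=+\infty$.

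\emph{Part (d).} With strict survival and no ties, $|D_{k-1}|=S_{k-1}-1$ deterministically. So the stage-$k$ threshold is finite iff $S_k\le S_{k-1}-1$, i.e.\ iff $\lceil S_{k-1}\beta_k\rceil\le S_{k-1}-1$, which holds when $S_{k-1}(1-\beta_k)\ge1$. This is implied by $S_{k-1}\ge(N+1)a_{k-1}$ together with $(N+1)\varepsilon_k\ge1$. For COINS, $\lceil(N+1)a_k\rceil\le\lceil(N+1)a_{k-1}\rceil-1$ whenever $(N+1)(a_{k-1}-a_k)\ge1$. For Bonferroni, $\lceil(N+1)(1-\varepsilon_k)\rceil\le N$ iff $(N+1)(1-\varepsilon_k)\le N$, i.e.\ iff $N\ge(1-\varepsilon_k)/\varepsilon_k$. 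The fresh-fold statement is the cited Proposition~\ref{prop:calibration-budget}(i).

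I expect the main obstacle to be the induction in (b): the identification $D_{k-1}=T_{k-1}\setminus\{Z\}$ on the survival event, including the degenerate stages where $\hat t_k=+\infty$.
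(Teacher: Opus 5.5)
Your proposal is correct and follows essentially the same route as the paper. In (b) you use the symmetric selection $T_k$ on all $N+1$ units with the induction $D_k=T_k\setminus\{Z\}$ on the survival event; in (c) you combine the domination $\ord_r(A)\ge\ord_{r-u}(B)$, the tie-robust exchangeable rank bound, and a memorizing score; and in (d) you use the same ceiling arithmetic. The only variations are minor: in (a) you compute $\E[U_{(r_1)}-U_{(r_1+1-r_2)}]$ from uniform order-statistic means, where the paper instead counts the $r_2-1$ admissible test ranks (an argument that does not use uniformity). In (c) you take $P$ uniform directly rather than the paper's product instance $Z=(V,W)$, which suffices because only atomlessness of the survivor score law is needed.
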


\label{app:proof-reuse-boundary}

We treat weak reuse by counting successful test ranks, strict reuse by
permutation symmetry, and adaptive reuse by an order-statistic comparison.
The final part translates these rules into sample budgets.

\paragraph{Part (a): count the ranks that pass weak reuse.}
Scores are almost surely distinct. Write $U_{(1)}<\cdots<U_{(N)}$ for the
ordered calibration values. Because $r_1\le N$, $\hat t_1=U_{(r_1)}$ and the weak
survivors are the units with values $U_{(1)},\ldots,U_{(r_1)}$; the unit that
sets the threshold survives. Their stage-2 scores are $-U_{(i)}$, so the
$r_2$-th smallest is $\hat t_2=-U_{(r_1-r_2+1)}$, using $r_2\le r_1$. Thus the test passes both stages exactly when
\[
 U_{(r_1-r_2+1)}\le Z\le U_{(r_1)}.
\]
Let $\varrho$ be
the rank of $Z$ among the $N+1$ values; it is uniform on $\{1,\ldots,N+1\}$.
Now $Z<U_{(r_1)}$ iff at most $r_1-1$ calibration values lie below $Z$, iff
$\varrho\le r_1$; and $Z>U_{(r_1-r_2+1)}$ iff at least $r_1-r_2+1$ calibration
values lie below $Z$, iff $\varrho\ge r_1-r_2+2$. The successful ranks are therefore
\[
 \varrho\in\{r_1-r_2+2,\ldots,r_1\}.
\]
There are $r_2-1$ such ranks (none if $r_2=1$), so
$\Pbb(E_2(Z))=(r_2-1)/(N+1)$.

To bound the rounding error, apply $x\le\lceil x\rceil<x+1$
at each stage. Since $\beta_2>0$,
\[
 \begin{aligned}
 r_2-1&\ge(r_1+1)\beta_2-1
       \ge(N+1)\beta_1\beta_2-(1-\beta_2),\\
 r_2-1&<(r_1+1)\beta_2
       <(N+1)\beta_1\beta_2+2\beta_2.
 \end{aligned}
\]
Divide by $N+1>0$ and subtract $\beta_1\beta_2$ to obtain the stated bounds. For $\beta=(\frac12,\frac12)$ and $N=40$, the ranks are
$r_1=21$ and $r_2=11$, giving $10/41\approx0.2439$. The shared-fold
simulation in Appendix~\ref{app:freshness-ablation} gives $0.2437$
(SE $0.00021$). For
$N=30$ and $\beta=(.8,.7)$, $r_1=\lceil24.8\rceil=25$ and
$r_2=\lceil18.2\rceil=19$, so $\Pbb(E_2(Z))=18/31>.56$.

\paragraph{Part (b): represent strict reuse by symmetric selection.}
The scores are $\mathcal F_0$-measurable, so conditionally on $\mathcal F_0$
they are fixed functions; conditional atomlessness makes the $N+1$ values of each score
almost surely distinct. First, $|D_{k-1}|=S_{k-1}-1$ for every $k$, so the
order statistic is well defined: $|D_0|=N=S_0-1$; if $S_k<S_{k-1}$, then
$\hat t_k$ is the $S_k$-th smallest of the $S_{k-1}-1$ calibration scores in
$D_{k-1}$ and exactly $S_k-1$ of them lie strictly below it; if $S_k=S_{k-1}$,
then $\hat t_k=+\infty$ and $D_k=D_{k-1}$.

\emph{Construct a symmetric selection rule.} Index the test unit by $0$ and the calibration units
by $1,\ldots,N$. Let $P_0=\{0,\ldots,N\}$ and let $P_k$ consist of the $S_k$
elements of $P_{k-1}$ with the smallest stage-$k$ scores. This is well defined
because $S_k\le S_{k-1}=|P_{k-1}|$ and there are no ties, and it is
permutation-equivariant: permuting the $N+1$ units permutes $P_k$
accordingly. By exchangeability, $\Pbb(i\in P_k)$ does not depend on $i$, and
$\sum_{i=0}^N\mathbf 1\{i\in P_k\}=S_k$, so $\Pbb(0\in P_k)=S_k/(N+1)$.

\emph{Identify selection with test survival.} We show by induction that the test passes stages
$1,\ldots,k$ iff $0\in P_k$, and that then $P_k=D_k\cup\{0\}$. This holds for
$k=0$. If the test failed before stage $k$, then $0\notin P_{k-1}\supseteq P_k$.
Otherwise $P_{k-1}=D_{k-1}\cup\{0\}$, and $0\in P_k$ iff at most $S_k-1$ units
of $D_{k-1}$ have stage-$k$ scores strictly below $R_k(Z)$. If $S_k<S_{k-1}$,
this holds iff $R_k(Z)<\hat t_k$, which, without ties, is the event
$R_k(Z)\le\hat t_k$; if $S_k=S_{k-1}$, both events are sure. When the test
passes, $P_k\setminus\{0\}$ is a lower set of $D_{k-1}$ of size $S_k-1$, that
is, the $S_k-1$ smallest, which is $D_k$. Hence $\Pbb(E_k(Z))=S_k/(N+1)$.

\emph{Bound the rounding slack.}
The inequality $S_k\ge S_{k-1}\beta_k$ gives
$S_k\ge(N+1)a_k$ by induction. Define $x_k:=S_k/(N+1)-a_k$.
The complementary inequality $S_k<S_{k-1}\beta_k+1$ gives
\[
 0\le x_k<\beta_kx_{k-1}+\frac1{N+1},\qquad x_0=0.
\]
Iterating, and using $\prod_{l=j+1}^k\beta_l=a_k/a_j$, gives
\[
 x_k<\sum_{j\le k}\frac{a_k/a_j}{N+1}.
\]
For the cumulative counts
$S_k=\lceil(N+1)a_k\rceil$, which are nonincreasing because $a_k<a_{k-1}$ and
satisfy $S_0=N+1$, the peeling argument is unchanged (it uses only that the
$S_k$ are deterministic and nonincreasing), and
$S_k/(N+1)\in[a_k,a_k+1/(N+1))$. This cumulative-count rule is a
specialization of the calibration-only construction of
\citet{li2026coins}.

\paragraph{Part (c): quantify the rank cost of adaptive reuse.}
We make the quantifier explicit. A \emph{score-blind rank rule} is a map
$(u,f)\mapsto r(u,f)\in\{1,\ldots,u+f+1\}$ fixed before the score is chosen;
the threshold is $t=\ord_r(A)$, where $A$ is the multiset of the $u+f$
calibration scores together with $+\infty$. The rule is \emph{valid at level
$\beta$} if $\Pbb(R(Z)\le t\mid\mathcal G,f,E_{k-1}(Z))\ge\beta$ for every $u$
and $f$ of positive probability.

\emph{Compare with the fresh-only order statistic.} Let $B$ be the multiset of the $f$ fresh
scores together with $+\infty$, and set $\ord_j(B):=-\infty$ for $j\le0$.
Since $A$ is $B$ plus $u$ reused scores, $\#\{a\in A:a\le t\}\le
u+\#\{b\in B:b\le t\}$ for every $t$. If $\ord_r(A)<\ord_{r-u}(B)$ with
$r-u\ge1$, then at $t=\ord_r(A)$ the left side is at least $r$ while the right
side is at most $u+(r-u-1)$, a contradiction. Hence
$\ord_r(A)\ge\ord_{r-u}(B)$.

\emph{Establish the rank bound, allowing ties.} If $W_0,\ldots,W_f$ are exchangeable and
$1\le j\le f+1$, then $\Pbb(W_0\le\ord_j(\{W_1,\ldots,W_f,+\infty\}))\ge j/(f+1)$,
with or without ties. For $j=f+1$ the event is sure. For $j\le f$, order the
indices by (value, index) lexicographically and let $I$ contain the first $j$.
For $i\in I$ at most $j-1$ of the other values are strictly smaller than
$W_i$, so the $j$-th smallest of the others (with $+\infty$) is at least $W_i$.
Thus at least $j$ of the $f+1$ exchangeable indicators
$\mathbf 1\{W_i\le\ord_j(\{W_l:l\ne i\}\cup\{+\infty\})\}$ equal one, and each has
the same probability.

\emph{Combine the two bounds conditionally.} Conditional on $\mathcal G$, the score $R$ and the
selection set $E_{k-1}$ are fixed (both are $\mathcal G$-measurable), the test
unit is still distributed as $P$ because $\mathcal G$ is generated by
training and calibration data only, and the fresh fold is independent of
$(\mathcal G,Z)$. Since $\Pbb(E_{k-1}(Z)\mid\mathcal G)>0$, the argument of
Lemma~\ref{lem:survivor-exchangeability}, applied to the fresh fold only,
shows that given $\mathcal G$, $f$ and $E_{k-1}(Z)$, the $f$ fresh scores and
$R(Z)$ are i.i.d. (If $\mathcal G$ could depend on the test unit, a
$\mathcal G$-measurable score could memorize it, and the bound below would
fail.) If $r\le u$ the
claimed bound is zero. Otherwise, by domination and the rank bound with
$j=r-u\le f+1$,
\[
 \begin{aligned}
 \Pbb(R(Z)\le t\mid\mathcal G,f,E_{k-1}(Z))
 &\ge\Pbb(R(Z)\le\ord_{r-u}(B)\mid\mathcal G,f,E_{k-1}(Z))\\
 &\ge\frac{r-u}{f+1}.
 \end{aligned}
\]

\emph{Construct a score that attains the bound.} Let $Z=(V,W)$ with $V\sim U(0,1)$ independent of $W$, let
the earlier stages use scores that depend on $W$ only, so that $E_{k-1}$
depends on $W$ only and $V$ is still uniform given $E_{k-1}(Z)$, and let
$z_1,\ldots,z_u$ be the ($\mathcal G$-measurable) values of the reused units.
The memorizing score $R(z)=-1$ for $z\in\{z_1,\ldots,z_u\}$ and $R(z)=v$
otherwise is $\mathcal G$-measurable, and the fresh and test units avoid the
finite set almost surely. The calibration scores are $u$ copies of $-1$ and
$f$ i.i.d.\ uniforms. If $r\le u$, then $t=-1$ and the coverage is $0$; if
$u<r\le u+f$, then $t$ is the $(r-u)$-th smallest fresh uniform and the
coverage is $(r-u)/(f+1)$; if $r=u+f+1$, then $t=+\infty$ and the coverage is
$1=(r-u)/(f+1)$. So the lower bound is attained.

\emph{Characterize valid ranks.} If $r\ge u+\lceil(f+1)\beta\rceil$, then
$(r-u)/(f+1)\ge\beta$ and the lower bound gives validity. If
$r<u+\lceil(f+1)\beta\rceil$, then the integer $r-u$ is smaller than
$(f+1)\beta$, and the memorizing instance has coverage $(r-u)^+/(f+1)<\beta$.
For a valid rule, domination and monotonicity of $\ord_j(B)$ in $j$ give
$t\ge\ord_{r-u}(B)\ge\ord_{\lceil(f+1)\beta\rceil}(B)$. For $f=0$ validity
requires $r\ge u+\lceil\beta\rceil=u+1$, that is, $t=+\infty$.

\emph{Why the quantifier matters.} If the rank were allowed to depend on the
score, a rule could recognize an $\mathcal F_0$-measurable score, apply the
exact strict reuse of part (b), and be both valid and tighter. Part (c)
therefore concerns score-blind rules only. The memorizing score is
pathological; adaptations that do not touch the reused units, such as the
blocked design of Proposition~\ref{prop:blocked-reuse}, keep reuse valid.

\paragraph{Part (d): translate rank conditions into budgets.}
For each rule, a threshold is finite precisely when its rank does not
select the augmented $+\infty$.

\emph{Strict reuse.} For the recursive counts,
\[
 \hat t_k<\infty
 \iff S_k<S_{k-1}
 \iff S_{k-1}\beta_k\le S_{k-1}-1
 \iff S_{k-1}(1-\beta_k)\ge1.
\]
Since
$S_{k-1}\ge(N+1)a_{k-1}$, $S_{k-1}(1-\beta_k)\ge(N+1)a_{k-1}(1-\beta_k)=(N+1)\varepsilon_k$,
so $N+1\ge\max_k1/\varepsilon_k$ suffices.

\emph{Cumulative counts.}
$(N+1)a_{k-1}-(N+1)a_k=(N+1)\varepsilon_k\ge1$ implies
$\lceil(N+1)a_{k-1}\rceil\ge\lceil(N+1)a_k+1\rceil=\lceil(N+1)a_k\rceil+1$.

\emph{Bonferroni.} Rank $\lceil(N+1)(1-\varepsilon_k)\rceil\le N$ iff
$(N+1)(1-\varepsilon_k)\le N$ iff $(N+1)\varepsilon_k\ge1$ iff
$N\ge(1-\varepsilon_k)/\varepsilon_k$.
\emph{Fresh folds.} By
Proposition~\ref{prop:calibration-budget}(i), all thresholds are finite with
probability at least $1-\eta$ only if
$N\ge(1-\eta)\sum_km_k^*/\bar a_{k-1}$. When the $\varepsilon_k$ are equal to
$\varepsilon$ and $\bar a_{k-1}\approx a_{k-1}$, this is about
$\sum_k\beta_k/\varepsilon$, that is, $\sum_k\beta_k$ times the reuse budget
$1/\varepsilon$. \qed

\begin{remark}[Comonotone scores: Bonferroni conservatism does not vanish]
\label{rem:comonotone-bonferroni}
Let $W$ be an atomless scalar and $R_j=\varphi_j(W)$ with each $\varphi_j$
continuous and strictly increasing. Bonferroni uses all $N$ units with ranks
$b_j=\lceil(N+1)(1-\varepsilon_j)\rceil$. Each increasing transformation preserves the order of the calibration
values. With $W_{(N+1)}:=+\infty$, stage $j$ therefore accepts precisely
when $W\le W_{(b_j)}$. Intersecting these events gives
\[
 E_k=\{W\le W_{(\min_{j\le k}b_j)}\}.
\]
The test rank is uniform among the $N+1$ values, so
\[
 \Pbb(E_k(Z))=\frac{\min_{j\le k}b_j}{N+1}
 =\frac{\lceil(N+1)(1-\max_{j\le k}\varepsilon_j)\rceil}{N+1}
 \in\Bigl[1-\max_{j\le k}\varepsilon_j,\
 1-\max_{j\le k}\varepsilon_j+\frac1{N+1}\Bigr).
\]
Because $a_k=1-\sum_{j\le k}\varepsilon_j$, the excess over the target is at
least $\sum_{j\le k}\varepsilon_j-\max_{j\le k}\varepsilon_j$ for every $N$.
\end{remark}

\subsection{Proof of Proposition~\ref{prop:common-shift}}
\label{app:proof-common-shift}

We first differentiate the pathwise maxima. Summing their derivatives
gives the common-shift identity; independence then supplies the
factorization needed for stationarity. Finally we average over contexts.

Fix $x$ and suppress it. Let $s_0:=0$, $W_0:=0$, $W_j(s)=C_j-s_j$, and for
$0\le i\le k$ let $A_i^{(k)}(s)$ be the maximizer of $W_l(s)$ over
$i\le l\le k$. Two maximizers can tie only if $C_l-C_j=s_l-s_j$ for some
$0\le j<l\le k$, an event of probability zero by atomlessness, so
$A_i^{(k)}$ is almost surely unique. We prove (a)--(e) below; the
proposition's part (i) is (a)--(b), part (ii) is (c)--(d), and part (iii)
is the case $n=1$ of (e).

\paragraph{(a) Differentiate the expected pathwise maxima.}
For each sample path, $\ell_S(s):=\max_{j\in S}W_j(s)$ is convex and
$1$-Lipschitz for the sup-norm, and its partial derivative in $s_j$ is
$-\mathbf 1\{\arg\max=j\}$ wherever the maximizer is unique. Difference
quotients are bounded by one and converge almost surely, and
$\E|\ell_S(s)|\le\E C_k+\|s\|_\infty<\infty$, so dominated convergence gives
$\partial_j\E\ell_S(s)=-\Pbb(\arg\max_S=j)$. To apply this derivative rule, substitute the backlog maxima into the
cost. Since $s_i+W_i=C_i$, the terms $h_i(s_i+W_i-\kappa_i)$ do not depend
on the stock vector. The remaining terms give
\[
 \partial_jG_k(s)=-(p+H_1)\Pbb(A_0^{(k)}=j)+\sum_{i=1}^{j}h_i\Pbb(A_i^{(k)}=j).
\]
If $s'\to s$, then $\mathbf 1\{A_i^{(k)}(s')=j\}\to\mathbf 1\{A_i^{(k)}(s)=j\}$
almost surely, because a unique maximizer is stable under small
perturbations; bounded convergence makes the partial derivatives continuous,
so $G_k$ is $C^1$. Moreover $-(p+H_1)\le\partial_jG_k\le H_1$.

\paragraph{(b) Sum the coordinate derivatives.}
Summing (a) over $j\le k$ and using
$\sum_{j=1}^k\Pbb(A_0^{(k)}=j)=1-\Pbb(W_j<0,\ j\le k)=1-J_k(s)$ (atomlessness)
and $\sum_{j=i}^k\Pbb(A_i^{(k)}=j)=1$ for $i\ge1$,
\[
 \begin{aligned}
 \langle\nabla G_k(s),\mathbf 1_k\rangle
 &=-(p+H_1)(1-J_k)+(H_1-H_{k+1})\\
 &=(p+H_1)J_k-(p+H_{k+1})
 =(p+H_1)(J_k-\alpha_k).
 \end{aligned}
\]
Only atomless differences were used, not independence.

\paragraph{(c) Factor the upstream and downstream events.}
Assume conditionally independent windows. For $0\le i\le j\le k$, up to a
null set,
$\{A_i^{(k)}=j\}=U_j\cap\{A_i^{(j)}=j\}$ with
$U_j:=\{W_j>W_l,\ j<l\le k\}$. The event $U_j=\{C_l-C_j<s_l-s_j,\ j<l\le k\}$
depends on $D_j,\ldots,D_{k-1}$, and $\{A_i^{(j)}=j\}$ depends on
$D_0,\ldots,D_{j-1}$, so the two are independent. Also
$U_j=\{B_j^{(k)}(s)=0\}$ almost surely, because
$B_j^{(k)}=\max_{j\le l\le k}W_l-W_j$ by \eqref{eq:Bi-max}. Factoring
$\Pbb(U_j)$ out of every term of (a) gives
\[
 \partial_jG_k(s)
 =\Pbb(B_j^{(k)}(s)=0)\,\partial_jG_j(s_{1:j}),
\]
where
$\partial_jG_j$ is the derivative of $G_j$ in its last coordinate.

\paragraph{(d) Relate the nested fractiles to stationarity.}
Suppose $J_k(s)=\alpha_k$ for all $k\le n$. We show $\partial_kG_k(s_{1:k})=0$
by induction. For $k=1$, (b) gives $\partial_1G_1=(p+H_1)(J_1-\alpha_1)=0$. If
$\partial_jG_j=0$ for $j<k$, then (c) gives $\partial_jG_k=0$ for $j<k$, and
(b) gives $\partial_kG_k=(p+H_1)(J_k-\alpha_k)=0$. Applying (c) once more gives
\[
 \partial_kG_n=\Pbb(B_k^{(n)}=0)\partial_kG_k=0
 \qquad\text{for every }k.
\]
Conversely, if
$\nabla G_n(s)=0$ and $\Pbb(B_k^{(n)}(s)=0)>0$ for all $k$, then (c) gives
$\partial_kG_k=0$ for all $k$, and (b) with (c) gives
\[
 (p+H_1)(J_k-\alpha_k)
 =\sum_{j\le k}\Pbb(B_j^{(k)}=0)\partial_jG_j=0.
\]
Since $p+H_1>0$, this proves $J_k=\alpha_k$ for every $k$.

\paragraph{(e) Average the identity over contexts.}
Assume (a)--(c) hold for $P_X$-almost every $x$, $\E C_n<\infty$ and
$\E\|q(X)\|<\infty$. Put $\Phi_k(t)=\E[G_k^X(q_{1:k}(X)+t)]$ and
$\Psi_k(t)=\Pbb(C_j-q_j(X)\le t_j,\ j\le k)$. Since
$|\partial_jG_k^x|\le p+H_1$, the mean value theorem and dominated convergence
give $\partial_j\Phi_k(t)=\E[\partial_jG_k^X(q(X)+t)]$, continuous in $t$; so
$\Phi_k$ is $C^1$ and $\sqrt k(p+H_1)$-Lipschitz. By (b) pointwise,
\[
 \langle\nabla\Phi_k(t),\mathbf 1_k\rangle
 =(p+H_1)\bigl(\E[J_k^X(q(X)+t)]-\alpha_k\bigr)=(p+H_1)(\Psi_k(t)-\alpha_k).
\]
Thus $t$ makes every truncated system stationary along its common-shift
direction iff $\Psi_k(t_{1:k})=\alpha_k$ for all $k$, and the population
conformal shift $t^c$ is the unique such point when each $\Psi_k$ is strictly
increasing in its last coordinate.

\emph{The case $n=1$} (and, more generally, the first truncated system) needs
only that $C_1-q(X)$ is atomless, besides $\E C_1<\infty$ and
$\E|q_1(X)|<\infty$. Here $B_1^{(1)}=0$ and $\kappa_1(x)=\E[C_1\mid x]$, so
$G_1^x(s)=(p+H_1)\E[(C_1-s)^+\mid x]-h_1(\E[C_1\mid x]-s)$ is convex and
$(p+H_1)$-Lipschitz in $s$, so $\Phi_1$ is convex. Its right derivative is,
by dominated convergence,
$-(p+H_1)\Pbb(C_1-q(X)>t)+h_1=(p+H_1)(\Psi_1(t)-\alpha_1)$, using
$(p+H_1-h_1)/(p+H_1)=\alpha_1$; its left derivative replaces $>$ by $\ge$, and
the two agree because $C_1-q(X)$ is atomless. So $\Phi_1'=(p+H_1)(\Psi_1-\alpha_1)$
is nondecreasing and vanishes at $t^c$, and the convex function $\Phi_1$ is
minimized at $t^c$.

\emph{Covariance formula for $n\ge2$.} Fix $t$ and $x$, put $s=q(x)+t$ and
$g_j(x):=\partial_jG_j^x(s_{1:j})$. By (b) and (c) for each truncated system,
$(p+H_1)(J_k^x(s)-\alpha_k)=\sum_{j\le k}\Upsilon_{kj}(x)g_j(x)$ with the unit
lower-triangular matrix $\Upsilon_{kj}(x)=\Pbb_x(B_j^{(k)}(s)=0)$, and
$\nabla G_n^x(s)=\Xi(x)g(x)$ with
$\Xi(x)=\operatorname{diag}(\Pbb_x(B_j^{(n)}(s)=0))_j$.
All entries are bounded. At $t=t^c$,
$\E[\Upsilon(X)g(X)]=(p+H_1)(\Psi(t^c)-\alpha)=0$.
With $\bar\Xi=\E\,\Xi(X)$ and the invertible $\bar\Upsilon=\E\,\Upsilon(X)$,
\[
 \begin{aligned}
 \nabla\Phi_n(t^c)
 &=\E[\Xi(X)g(X)]-\bar\Xi\bar\Upsilon^{-1}\E[\Upsilon(X)g(X)]\\
 &=\E\bigl[\bigl(\Xi(X)-\bar\Xi\bar\Upsilon^{-1}\Upsilon(X)\bigr)g(X)\bigr],
 \end{aligned}
\]
a covariance between $g$ and a kernel of mean zero. If $q-s^*=c$ is almost
surely constant, then $\Psi_k(-c)=\E[J_k^X(s^*(X))]=\alpha_k$ by
\eqref{eq:joint-target}, so $t^c=-c$ (under the uniqueness condition),
$J^x(s^*(x))=\alpha$ pointwise, $\Upsilon(x)g(x)=0$, hence $g(x)=0$ and
$\nabla\Phi_n(t^c)=0$. If $\Xi$ and $\Upsilon$ do not depend on $x$, the centered matrix
$\Xi-\bar\Xi\bar\Upsilon^{-1}\Upsilon$ is zero, so the gradient also
vanishes in this case. \qed

\subsection{Proof of Proposition~\ref{prop:bonferroni-inconsistency}}
\label{app:proof-bonferroni}

We compare the population marginal targets with the nested targets,
convert the discrepancy into a stock shift, and then show that the cost
gap persists as calibration size grows.

\paragraph{Setting.}
Let $n\ge2$; given $X$, the windows are independent with densities, and
$s^*(x)$ is a conditional Clark--Scarf optimum satisfying the nested fractiles
$J_k^x(s^*(x))=\alpha_k$ of \eqref{eq:joint-target}. The learner is
$q=s^*+c$ with a constant $c$ (this includes the noncontextual model), and
$\varepsilon_k=\alpha_{k-1}-\alpha_k$. Let
$g_k(u):=\Pbb(C_k\le s_k^*(X)+u)$, a continuous nondecreasing function; as in
the proposition we assume that $g_k$ is strictly increasing at the shift
$u^B_k$ defined below, which holds for example
under positive window densities, and that $\rho_k<\infty$ bounds the density
of $C_k-s_k^*(X)$, so that $g_k$ is $\rho_k$-Lipschitz. The population Bonferroni policy is
$s^B=q+t^B=s^*+u^B$ with $u^B=t^B+c$ and $g_k(u^B_k)=1-\varepsilon_k$; let
$E^*_{k-1}=\{C_j\le s_j^*(X),\ j<k\}$, so $\Pbb(E^*_{k-1})=\alpha_{k-1}$ and
$\Pbb(E^*_{k-1}\cap\{C_k\le s_k^*(X)\})=\alpha_k$ after averaging over $X$.

\paragraph{Nonnegative stock shifts.}
Splitting on $E^*_{k-1}$,
\[
 \begin{aligned}
 g_k(0)&=\alpha_k+\Pbb\bigl((E^*_{k-1})^c\bigr)
 -\Pbb\bigl((E^*_{k-1})^c\cap\{C_k>s_k^*(X)\}\bigr)\\
 &=1-\varepsilon_k-\Pbb(F_k),
 \end{aligned}
\]
with $F_k:=(E^*_{k-1})^c\cap\{C_k>s_k^*(X)\}$. Hence
$g_k(u^B_k)-g_k(0)=\Pbb(F_k)\ge0$. If $u^B_k<0$, strict increase at $u^B_k$
would give $g_k(0)>g_k(u^B_k)$, a contradiction; so $u^B_k\ge0$. The $\rho_k$-Lipschitz property now gives
\[
 \rho_ku^B_k\ge g_k(u^B_k)-g_k(0)=\Pbb(F_k).
\]
For $k=1$, $E^*_0$ is the whole space, $F_1=\emptyset$, $g_1(0)=g_1(u^B_1)$,
and strict increase at $u^B_1$ excludes $u^B_1>0$ as well; so $u^B_1=0$. For $k\ge2$, the event $\{C_{k-1}>\max(s^*_{k-1},s^*_k)(X)\}$ is
contained in $F_k$: stage $k-1$ fails, and $C_k\ge C_{k-1}>s_k^*(X)$ because
demand is nonnegative. Its probability is positive when the conditional
support of $C_{k-1}$ is unbounded above on a set of contexts of positive
probability (for example, Gamma demand).

\paragraph{Strict cost gap.}
By the optimality of $s^*(x)$, $G_n^x(s^B(x))-G_n^x(s^*(x))\ge0$ for almost
every $x$. We assume, as in the proposition, that $s^*(x)$ is the
\emph{unique} minimizer for $x$ in a set of positive probability. Since $s^B(x)-s^*(x)=u^B\ne0$ for every $x$, the
integrand is positive on that set. The assumptions $\E C_n<\infty$
and $\E\|q(X)\|_2<\infty$, together with $s^*=q-c$ and finite shifts,
make both costs integrable by the backlog maximum representation.
Integration therefore gives
\[
 \mathcal G_n(s^B)-\mathcal G_n(s^*)
 =\E\bigl[G_n^X(s^B(X))-G_n^X(s^*(X))\bigr]>0.
\]
\paragraph{Convergence of empirical calibration.}
The empirical Bonferroni threshold $\hat t^B_{k,N}$ is the order statistic of
rank $b_{k,N}=\lceil(N+1)(1-\varepsilon_k)\rceil$ among the $N$ residuals
$C_k^{(i)}-q_k(X_i)$, with continuous cdf $F$ that is strictly increasing at
$t^B_k$. To prove convergence, fix $\xi>0$. Strict increase at the quantile gives
\[
 F(t^B_k-\xi)<1-\varepsilon_k<F(t^B_k+\xi).
\]
By the Glivenko--Cantelli theorem and $b_{k,N}/N\to1-\varepsilon_k$, eventually
fewer than $b_{k,N}$ residuals lie at or below $t^B_k-\xi$ and at least
$b_{k,N}$ lie at or below $t^B_k+\xi$, so $\hat t^B_{k,N}\to t^B_k$ almost
surely. Since $\mathcal G_n(q+t)=\Phi_n(t)$ is Lipschitz by part (e) of
Appendix~\ref{app:proof-common-shift}, the excess cost of the empirical
Bonferroni policy converges almost surely to
$\mathcal G_n(s^B)-\mathcal G_n(s^*)>0$. For the same learner, $q-s^*$ is
constant, so the fresh-fold excess cost is
$O_{\Pbb}(\sum_kN_k^{-1})$ under the assumptions of
Theorem~\ref{thm:unified-cost}, Part~II (Proposition~\ref{prop:cost-rates}(b)). \qed

\section{Further Results}
\label{app:further-consequences}

This appendix gives the calibration-budget, blocked-reuse and
group-conditional results used in Section~\ref{sec:properties}. Notation is that of
Appendix~\ref{app:proof-coverage-law}.

\subsection{Calibration budget}

Let $m_k^*=\lceil\beta_k/(1-\beta_k)\rceil$ and
$\bar a_{k-1}:=a_{k-1}+\sum_{j<k}(a_{k-1}/a_j)/(N_j+1)$, the upper bound of
Proposition~\ref{prop:prefix-slack} at stage $k-1$ ($\bar a_0=1$).

\begin{proposition}[Calibration budget]
\label{prop:calibration-budget}
\leanref{calibration_budget_necessary}
Assume the hypotheses of Theorem~\ref{thm:coverage-law}.
\begin{itemize}
\item[(i)] \emph{Necessary budget.}
$\Pbb(\hat t_k<\infty)=\Pbb(M_k\ge m_k^*)\le\min\{1,N_k\bar a_{k-1}/m_k^*\}$,
and $\hat t_k=+\infty$ surely if $N_k<m_k^*$. Hence
$\Pbb(\text{all thresholds finite})\ge1-\eta$ requires
$N_k\ge(1-\eta)m_k^*/\bar a_{k-1}$ for every $k$, and so
$N\ge(1-\eta)\sum_{k=1}^Km_k^*/\bar a_{k-1}$.
\item[(ii)] \emph{Sufficient budget with a simultaneous calibration-conditional
profile.} Fix integers $n_k\ge m_k^*$ and $\epsilon'_k\in(0,\beta_k)$, let
$\pi_0=1$ and $\pi_k=\pi_{k-1}(\beta_k-\epsilon'_k)$, and assume $N_k\ge1$
and $N_k\pi_{k-1}\ge n_k-1$. Then, with probability at least
\[
 1-\sum_{k=1}^Ke^{-2(N_k\pi_{k-1}-n_k+1)^2/N_k}-\sum_{k=1}^Ke^{-2n_k\epsilon_k'^2},
\]
simultaneously for all $k$: $M_k\ge n_k$, every threshold is finite, and
$\Pbb(E_k(Z)\mid\mathcal F_K)=p_k\ge\pi_k$.
\end{itemize}
\end{proposition}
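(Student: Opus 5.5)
Both parts follow from the universal chain of Theorem~\ref{thm:coverage-law}. Part~(i) combines (C3) with Markov's inequality. Part~(ii) is a forward induction using one binomial lower-tail bound and one order-statistic bound per stage.

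For (i), (C3) gives $\hat t_k<\infty$ iff $M_k\ge m_k^*$. Since $M_k\le N_k$, the threshold is surely infinite when $N_k<m_k^*$. By (ii) of Theorem~\ref{thm:coverage-law}, $\E[M_k\mid\mathcal F_{k-1}]=N_kp_{k-1}$. Taking expectations and using $\E p_{k-1}=\Pbb(E_{k-1}(Z))\le\bar a_{k-1}$ from Proposition~\ref{prop:prefix-slack} gives $\E M_k\le N_k\bar a_{k-1}$. Markov's inequality then yields $\Pbb(M_k\ge m_k^*)\le N_k\bar a_{k-1}/m_k^*$, and the bound $1$ is trivial. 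If all thresholds are finite with probability at least $1-\eta$, then each stage is too, so $1-\eta\le N_k\bar a_{k-1}/m_k^*$. Rearranging and summing over $k$ with $N=\sum_kN_k$ gives the stated budget.

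For (ii), define good events stagewise:
\[
G_k=\{M_k\ge n_k\}\cap\{V_k\ge\beta_k-\epsilon'_k\},\qquad \Omega_k=\bigcap_{j\le k}G_j .
\]
On $\Omega_k$ we have $M_j\ge n_j\ge m_j^*$ for all $j\le k$, so every threshold is finite by (C3). By (i) of Theorem~\ref{thm:coverage-law}, $p_k=\prod_{j\le k}V_j\ge\pi_k$. It remains to bound $\Pbb(\Omega_{k-1}\cap G_k^c)$ by a first-failure decomposition, as in the proof of Theorem~\ref{thm:unified-cost}.

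\emph{Count step.} On $\Omega_{k-1}$, which is $\mathcal F_{k-1}$-measurable, we have $p_{k-1}\ge\pi_{k-1}$. Conditional on $\mathcal F_{k-1}$, $M_k$ is Binomial$(N_k,p_{k-1})$, which stochastically dominates Binomial$(N_k,\pi_{k-1})$. Hoeffding's lower tail with deviation $N_k\pi_{k-1}-n_k+1\ge0$ gives
\[
\Pbb(M_k\le n_k-1\mid\mathcal F_{k-1})\le e^{-2(N_k\pi_{k-1}-n_k+1)^2/N_k}.
\]
For $k=1$, $M_1=N_1$.

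\emph{Coverage step.} Given $\mathcal F_{k-1}$ and $M_k=m\ge n_k\ge m_k^*$, we have $r=r_k(m)\le m$, and $V_k=U_{(r)}$ is an order statistic of $m$ i.i.d.\ uniforms (Theorem~\ref{thm:coverage-law}(iii)). The event $V_k<\beta_k-\epsilon'_k$ says that the empirical cdf of the uniforms at $\beta_k-\epsilon'_k$ is at least $r/m\ge\beta_k$. That is a deviation of at least $\epsilon'_k$ above the mean $\beta_k-\epsilon'_k$, so Hoeffding gives probability at most $e^{-2m\epsilon_k'^2}\le e^{-2n_k\epsilon_k'^2}$. (Here $r/m\ge(m+1)\beta_k/m\ge\beta_k$.)

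Summing the two bounds over $k$ gives the stated probability. The main point needing care is the coverage step. One must check that $r/m\ge\beta_k$, so that no rounding loss appears in the exponent. One must also apply the conditioning only on $\Omega_{k-1}\cap\{M_k\ge n_k\}$, which is $\sigma(\mathcal F_{k-1},M_k)$-measurable, so the Beta law applies without further argument.
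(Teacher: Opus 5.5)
Your proposal is correct and follows the paper's proof essentially step for step. For (i) you use (C3) plus Markov's inequality with $\E M_k=N_k\Pbb(E_{k-1}(Z))\le N_k\bar a_{k-1}$. For (ii) you use the same good events $\{M_k\ge n_k\}\cap\{V_k\ge\beta_k-\epsilon'_k\}$ with a first-failure decomposition, a Hoeffding lower tail for the stochastically dominated $\operatorname{Binomial}(N_k,\pi_{k-1})$ count, and a Hoeffding bound on $\Pbb(U_{(r)}<\beta_k-\epsilon'_k)=\Pbb(\operatorname{Binomial}(m,\beta_k-\epsilon'_k)\ge r)$ using $r\ge m\beta_k$.
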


\begin{proof}
The necessary condition follows from the expected survivor count. For
sufficiency, we control the first stage at which either the count or the
realized coverage falls below its required level.

\paragraph{Part (i): a necessary count at every stage.}
By (C3), $\hat t_k<\infty$ iff $M_k\ge m_k^*$, and $M_k\le N_k$. Since
$\beta_k>0$, $m_k^*\ge1$. Markov's inequality for the nonnegative integer
$M_k$, the tower property and \eqref{eq:survivor-budget} give
\[
 \Pbb(M_k\ge m_k^*)
 \le\frac{\E M_k}{m_k^*}
 =\frac{N_k\Pbb(E_{k-1}(Z))}{m_k^*}
 \le\frac{N_k\bar a_{k-1}}{m_k^*},
\]
where the last step is Proposition~\ref{prop:prefix-slack}. If all thresholds are finite with
probability at least $1-\eta$, then so is each one, and rearranging and
summing over $k$ gives the budget.

\paragraph{Part (ii): a sufficient joint event.}
Let $\mathsf G_k=\{M_k\ge n_k\}\cap\{V_k\ge\beta_k-\epsilon'_k\}$ and
$\mathsf G_{<k}=\bigcap_{j<k}\mathsf G_j\in\mathcal F_{k-1}$. On
$\mathsf G_{<k}$, every earlier threshold is finite and, by
Theorem~\ref{thm:coverage-law}(i), $p_{k-1}=\prod_{j<k}V_j\ge\pi_{k-1}$. By the
first-failure decomposition,
\[
 \Pbb\Bigl(\bigcup_k\mathsf G_k^c\Bigr)
 \le\sum_k\Pbb(\mathsf G_{<k}\cap\mathsf G_k^c).
\]
We bound each summand by separating count failure from coverage failure.

\emph{Control the survivor count.} The binomial law is stochastically increasing in its
success probability, so on $\mathsf G_{<k}$,
\[
 \begin{aligned}
 \Pbb(M_k<n_k\mid\mathcal F_{k-1})
 &\le\Pbb(\operatorname{Binomial}(N_k,\pi_{k-1})\le n_k-1)\\
 &\le e^{-2(N_k\pi_{k-1}-n_k+1)^2/N_k}.
 \end{aligned}
\]
The last step is Hoeffding's lower-tail bound, applicable because
$N_k\pi_{k-1}-(n_k-1)\ge0$.

\emph{Control coverage given a sufficient count.} Given $\mathcal F_{k-1}$ and $M_k=m\ge n_k\ge m_k^*$, the rank
$r=r_k(m)\le m$ and $V_k\overset d=U_{(r)}$ for $m$ i.i.d.\ uniforms. Then
$\Pbb(U_{(r)}<x)=\Pbb(\operatorname{Binomial}(m,x)\ge r)$. With
$x=\beta_k-\epsilon'_k$ and $r\ge(m+1)\beta_k>m\beta_k$, we have $r-mx>m\epsilon'_k$,
and Hoeffding's upper tail gives at most
$e^{-2m\epsilon_k'^2}\le e^{-2n_k\epsilon_k'^2}$.

\emph{Combine the stagewise failures.}
Summing the two bounds over $k$ bounds $\Pbb(\bigcup_k\mathsf G_k^c)$. On $\bigcap_k\mathsf G_k$,
$p_k=\prod_{j\le k}V_j\ge\pi_k$, and Theorem~\ref{thm:coverage-law}(i) gives
$\Pbb(E_k(Z)\mid\mathcal F_K)=p_k$.
\end{proof}

\begin{remark}[Certified ranks without atomlessness]
\label{rem:certified-ranks}
Fix $\tau_j\in[0,\beta_j)$ and $\delta_j\in(0,1)$ and replace the stage-$j$
rank by $r_j^{\rm PAC}(m)=\min\{r\le m:\Pbb(\operatorname{Binomial}(m,\beta_j-\tau_j)\ge r)\le\delta_j\}$,
or by $m+1$ if this set is empty. Then, without any atomlessness assumption,
\[
 \Pbb\Bigl(\forall k:\ \Pbb(E_k(Z)\mid\mathcal F_K)
 \ge\textstyle\prod_{j\le k}(\beta_j-\tau_j)\Bigr)\ge1-\sum_j\delta_j .
\]
\emph{Proof.} We couple the survivor scores to uniforms, then apply the
chosen binomial-tail bound at each stage. Part (i) of
Theorem~\ref{thm:coverage-law} holds pathwise, including with ties: $p_k=\prod_{j\le k}V_j$ with $V_j=\mathcal H_j(\hat t_j)$ and
$\mathcal H_j$ the
(possibly discontinuous) survivor cdf. Given $\mathcal F_{j-1}$ and $M_j=m$,
the survivor scores are i.i.d.\ $\mathcal H_j$ and may be represented as
$\mathcal H_j^{-1}(U_i)$ with $U_i$ i.i.d.\ uniform and
$\mathcal H_j^{-1}(u)=\inf\{t:\mathcal H_j(t)\ge u\}$ nondecreasing. By the
monotone-image fact, $\hat t_j=\mathcal H_j^{-1}(U_{(r)})$ for $r\le m$, and
\[
 V_j=\mathcal H_j(\mathcal H_j^{-1}(U_{(r)}))\ge U_{(r)}.
\]
Consequently, when $r=r_j^{\rm PAC}(m)\le m$,
\[
 \begin{aligned}
 \Pbb(V_j<\beta_j-\tau_j\mid\mathcal F_{j-1},M_j=m)
 &\le\Pbb(U_{(r)}<\beta_j-\tau_j)\\
 &=\Pbb(\operatorname{Binomial}(m,\beta_j-\tau_j)\ge r)
 \le\delta_j.
 \end{aligned}
\]
When $r=m+1$, $V_j=1$ and the failure probability is zero. Averaging over
the history and count, then taking a union bound over $j$, gives
$V_j\ge\beta_j-\tau_j$ at all stages with probability at least
$1-\sum_j\delta_j$. Multiplying these bounds proves the prefix guarantee. \qed
\end{remark}

\subsection{Blocked predictable reuse}

Partition the stages into consecutive blocks $B_l=\{s_l+1,\ldots,e_l\}$,
$l=1,\ldots,b$, with $s_1=0$, $s_{l+1}=e_l$ and $e_b=K$. Each block has a fresh
fold $\mathcal J_l$ of deterministic size $N_l$ drawn independently of its
values, and $\mathcal G_l=\sigma(\mathcal F_0,Z_i:i\in\mathcal J_1\cup\cdots\cup\mathcal J_l)$.
Within block $l$: select $\mathcal K_l=\{i\in\mathcal J_l:E_{s_l}(Z_i)\}$ with
$M_l=|\mathcal K_l|$; set $A_k=a_k/a_{s_l}$, $S_{s_l}=M_l+1$ and
$S_k=\lceil(M_l+1)A_k\rceil$ (cumulative counts); set $D_{s_l}=\mathcal K_l$ and,
for $k\in B_l$,
$\hat t_k=\ord_{S_k}(\{R_k(Z_i):i\in D_{k-1}\}\cup\{+\infty\})$,
$D_k=\{i\in D_{k-1}:R_k(Z_i)<\hat t_k\}$, and
$E_k=E_{k-1}\cap\{R_k\le\hat t_k\}$.

\begin{proposition}[Blocked predictable reuse]
\label{prop:blocked-reuse}
\leanref{blocked_predictable_reuse_slack}
Assume every score $R_k$ with $k\in B_l$ is $\mathcal G_{l-1}$-measurable, and
that, given $\mathcal G_{l-1}$ and $E_{s_l}(Z)$, each score $R_k(Z)$, $k\in B_l$,
has an atomless law. Then for $k\in B_l$:
\begin{itemize}
\item[(1)] $\Pbb(E_k(Z)\mid E_{s_l}(Z),\mathcal G_{l-1},M_l=m)=\lceil(m+1)A_k\rceil/(m+1)$;
\item[(2)] $\Pbb(E_k(Z))\ge a_k$, and $d_k:=\Pbb(E_k(Z))-a_k$ satisfies
$0\le d_k\le A_kd_{s_l}+1/(N_l+1)$; hence
$d_K\le\sum_{l=1}^b(a_K/a_{e_l})/(N_l+1)$;
\item[(3)] all thresholds of block $l$ are finite whenever
$M_l+1\ge\max_{k\in B_l}a_{s_l}/\varepsilon_k$.
\end{itemize}
With $b=K$ blocks this is the fresh-fold rule \eqref{eq:generic-threshold};
with $b=1$ and fixed scores it is the cumulative-count reuse of
Proposition~\ref{prop:reuse-boundary}(b).
\end{proposition}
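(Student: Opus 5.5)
The plan combines two earlier arguments. Within a block, I would run the permutation-symmetry argument of Proposition~\ref{prop:reuse-boundary}(b). Across blocks, I would use the conditional-expectation recursion of Theorem~\ref{thm:coverage}.

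\textbf{Step 1: reduce block $l$ to a fixed-score problem.} Fix a history $\mathcal G_{l-1}$ with $\Pbb(E_{s_l}(Z)\mid\mathcal G_{l-1})>0$. Every score $R_k$, $k\in B_l$, is $\mathcal G_{l-1}$-measurable, and so is the selection set $E_{s_l}$. The fold $\mathcal J_l$ is independent of $\mathcal G_{l-1}$ and of the test unit. The proof of Lemma~\ref{lem:survivor-exchangeability} then applies verbatim to this fold. Given $\mathcal G_{l-1}$, $M_l=m$ and $E_{s_l}(Z)$, the $m$ selected units and the test unit are i.i.d.\ from $Q=P(\cdot\mid E_{s_l})$, and all block-$l$ scores are fixed functions.

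\textbf{Step 2: prove (1) by the peeling argument.} Under $Q$, the block is exactly the strict-survivor cumulative-count procedure of Proposition~\ref{prop:reuse-boundary}(b). It has $N=m$ calibration units, $S_{s_l}=m+1$, and nonincreasing deterministic counts $S_k=\lceil(m+1)A_k\rceil$. Atomlessness under $Q$ gives distinct values almost surely. As in that proof, I would define $P_k$ as the $S_k$ smallest stage-$k$ scores among $P_{k-1}$, starting from all $m+1$ units. This rule is permutation-equivariant. By induction, test survival through stage $k$ is equivalent to $0\in P_k$, with $P_k=D_k\cup\{0\}$. Exchangeability then gives the probability $S_k/(m+1)$, which is (1).

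\textbf{Step 3: prove (2) by averaging over the count and telescoping.} Multiply (1) by $p_{s_l}=\Pbb(E_{s_l}(Z)\mid\mathcal G_{l-1})$. Condition on $M_l\sim\mathrm{Binomial}(N_l,p_{s_l})$; this law follows from the part~(ii) argument of Theorem~\ref{thm:coverage-law}. This yields
\[
\Pbb(E_k\mid\mathcal G_{l-1})=\E\Bigl[p_{s_l}\tfrac{\lceil(M_l+1)A_k\rceil}{M_l+1}\Bigm|\mathcal G_{l-1}\Bigr]\in\Bigl[A_kp_{s_l},\,A_kp_{s_l}+p_{s_l}\E\bigl[(M_l+1)^{-1}\mid\mathcal G_{l-1}\bigr]\Bigr].
\]
The binomial reciprocal identity from the proof of Proposition~\ref{prop:prefix-slack} gives $p\,\E[(M+1)^{-1}]\le 1/(N_l+1)$. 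On the zero-survival histories both sides vanish. Taking expectations gives $A_k\Pbb(E_{s_l})\le\Pbb(E_k)\le A_k\Pbb(E_{s_l})+1/(N_l+1)$. Since $A_ka_{s_l}=a_k$, this becomes $0\le d_k\le A_kd_{s_l}+1/(N_l+1)$, using induction on blocks for $d_{s_l}\ge0$. Iterating at the block endpoints with $A_{e_l}\cdot(a_{e_l}/a_{s_l})$-type products gives $d_K\le\sum_l (a_K/a_{e_l})/(N_l+1)$.

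\textbf{Step 4: prove (3) and the specializations.} For (3), repeat the cumulative-count finiteness computation of Proposition~\ref{prop:reuse-boundary}(d), with $N+1$ replaced by $M_l+1$ and $a$ replaced by $A$. Here $A_{k-1}-A_k=\varepsilon_k/a_{s_l}$, so $(M_l+1)\varepsilon_k/a_{s_l}\ge1$ forces $S_k<S_{k-1}$. The two specializations are immediate: singleton blocks give $S_k=\lceil(M_l+1)\beta_k\rceil$, and $b=1$ with $\mathcal G_0=\mathcal F_0$ recovers Proposition~\ref{prop:reuse-boundary}(b).

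I expect the main obstacle to be Step~1 combined with Step~2's equivalence. One must check that conditioning on $M_l=m$ rather than on the identity of the survivor set still leaves the peeling argument valid. One must also check that using the weak inequality for the test while the strict inequality is used for calibration matches $0\in P_k$ almost surely under $Q$. Both reduce to the no-ties property that the block atomlessness assumption provides.
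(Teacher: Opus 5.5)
Your proposal is correct and follows the paper's proof step for step. You reduce block $l$ to a fixed-score problem through the survivor-exchangeability argument, apply the cumulative-count peeling of Proposition~\ref{prop:reuse-boundary}(b) for (1), average over the binomial count with the reciprocal identity for (2), and reuse the ceiling computation for (3). The only slip is the phrase ``$A_{e_l}\cdot(a_{e_l}/a_{s_l})$-type products'': the block-endpoint recursion is $\tilde d_l\le (a_{e_l}/a_{e_{l-1}})\tilde d_{l-1}+1/(N_l+1)$. Its factor is just $A_{e_l}=a_{e_l}/a_{s_l}$ with $s_l=e_{l-1}$, which telescopes to the weights $a_K/a_{e_l}$.
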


\begin{proof}
Within a block, condition on its incoming history and apply the
fixed-score reuse result. We then average over the incoming survivor count.

\paragraph{Part (1): exact conditional coverage within a block.}
Condition on $\mathcal G_{l-1}$; the block scores and $E_{s_l}$ are then
fixed functions and $\mathcal J_l$ is independent of $(\mathcal G_{l-1},Z)$. The
argument of Lemma~\ref{lem:survivor-exchangeability}, applied to whole units
with the selection set $E_{s_l}$, shows that given $M_l=m$ and $E_{s_l}(Z)$ the
$m$ selected units and the test unit are i.i.d.\ from $P(\cdot\mid E_{s_l})$.
Proposition~\ref{prop:reuse-boundary}(b) with cumulative counts, applied with
$N\to m$, stages $B_l$ and targets $A_k$ (so that $\lceil(m+1)A_k\rceil$ is
nonincreasing in $k$ and equals $m+1$ at $k=s_l$), gives (1). For $m=0$ all
thresholds are $+\infty$ and both sides equal one.

\paragraph{Part (2): propagate slack between blocks.}
Let $\hat p:=\Pbb(E_{s_l}(Z)\mid\mathcal G_{l-1})$. Given $\mathcal G_{l-1}$,
$M_l\sim\operatorname{Binomial}(N_l,\hat p)$ is independent of the test event
$E_{s_l}(Z)$, so by (1)
\[
 \begin{aligned}
 \Pbb(E_k(Z)\mid\mathcal G_{l-1})
 &=\hat p\,\E\Bigl[\frac{\lceil(M_l+1)A_k\rceil}{M_l+1}\Bigm|\mathcal G_{l-1}\Bigr]\\
 &\in\bigl[A_k\hat p,\ A_k\hat p+\hat p\,\E[(M_l+1)^{-1}\mid\mathcal G_{l-1}]\bigr].
 \end{aligned}
\]
The binomial reciprocal-count identity bounds the rounding term:
\[
 \hat p\,\E[(M_l+1)^{-1}\mid\mathcal G_{l-1}]
 =\frac{1-(1-\hat p)^{N_l+1}}{N_l+1}\le\frac1{N_l+1}.
\]
Taking expectations and using $A_ka_{s_l}=a_k$ gives
\[
 A_kd_{s_l}\le d_k\le A_kd_{s_l}+\frac1{N_l+1}.
\]
Since $d_0=0$, induction over blocks gives $d_k\ge0$. At block endpoints,
write $\tilde d_l:=d_{e_l}$ and $e_0:=0$. Then
\[
 \tilde d_l\le\frac{a_{e_l}}{a_{e_{l-1}}}\tilde d_{l-1}
 +\frac1{N_l+1}.
\]
Iterating this recursion gives the bound on $d_K$.

\paragraph{Part (3): ensure that every rank is finite.}
Given $M_l=m$, $|D_{k-1}|=S_{k-1}-1$ as in the proof of
Proposition~\ref{prop:reuse-boundary}(b), so threshold $k$ is finite iff
$S_k<S_{k-1}$. Since $(m+1)(A_{k-1}-A_k)=(m+1)\varepsilon_k/a_{s_l}\ge1$, the
ceiling argument of Proposition~\ref{prop:reuse-boundary}(d) gives $S_k<S_{k-1}$.
If $\hat p=0$, then $M_l=0$ and all thresholds of the block are $+\infty$,
consistently with (1).
\end{proof}

With the $\beta$-recursion $S_k=\lceil S_{k-1}\beta_k\rceil$ inside a block
instead of cumulative counts, (1) becomes $S_k(m)/(m+1)\ge A_k$ and the
within-block slack becomes $\sum_{j\in B_l,\,j\le k}(A_k/A_j)/(m+1)$; we use
cumulative counts because they give the per-block slack $1/(N_l+1)$.

\subsection{Group-conditional prefix profile}

Let $\mathcal P$ be an $\mathcal F_0$-measurable finite partition of
$\mathcal X$ into groups ($\mathcal P=\{\mathcal X\}$ is allowed), and write
$\pi_G=\Pbb(X\in G\mid\mathcal F_0)$. A stage-$k$ calibrator is an
$\mathcal F_k$-measurable function $\hat t_k(\cdot)$ of the context computed
from the fold-$k$ survivors, and $E_k=E_{k-1}\cap\{z:R_k(z)\le\hat t_k(x)\}$.
Conditioning on $\mathcal F_{k-1}$ and an event of the test unit means the
ratio of conditional probabilities on the set where the denominator is
positive.

\begin{proposition}[Group-conditional prefix profile]
\label{prop:group-profile}
\leanref{group_conditional_prefix_profile}
(i) \emph{Composition.} Suppose that for each $k$ and each $G\in\mathcal P$,
\[
 \Pbb\bigl(R_k(Z)\le\hat t_k(X)\mid\mathcal F_{k-1},E_{k-1}(Z),X\in G\bigr)\ge\beta_k
 \quad\text{on }\{\Pbb(E_{k-1}(Z),X\in G\mid\mathcal F_{k-1})>0\}.
\]
Then $\Pbb(E_k(Z)\mid X\in G,\mathcal F_0)\ge a_k$ for every $k$ and every $G$
with $\pi_G>0$.
(ii) \emph{Mondrian instance.} Let $\mathcal A_k$ be an
$\mathcal F_{k-1}$-measurable finite partition refining $\mathcal P$, let
$M_{k,A}=\#\{i\in\mathcal K_k:X_i\in A\}$, and for $x\in A\in\mathcal A_k$ let
$\hat t_k(x)=\ord_{\lceil(M_{k,A}+1)\beta_k\rceil}(\{R_k(Z_i):i\in\mathcal K_k,\ X_i\in A\}\cup\{+\infty\})$.
This calibrator satisfies (i), and if the score law given $\mathcal F_{k-1}$,
$E_{k-1}(Z)$ and $X\in A$ is atomless,
\[
 \Pbb\bigl(R_k(Z)\le\hat t_k(X)\mid\mathcal F_{k-1},E_{k-1}(Z),X\in A,M_{k,A}=m\bigr)
 =\frac{\lceil(m+1)\beta_k\rceil}{m+1}.
\]
\end{proposition}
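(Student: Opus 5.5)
The plan for part (i) is to adapt the tower-property recursion from the proof of Theorem~\ref{thm:coverage}, restricted to the test event $\{X\in G\}$. Because $\mathcal P$ is $\mathcal F_0$-measurable and $Z$ is independent of all calibration folds, conditioning on $X\in G$ only changes the test law from $P$ to $P(\cdot\mid X\in G)$. Define
\[
 p_k^G:=\Pbb(E_k(Z),X\in G\mid\mathcal F_k),
 \qquad p_0^G=\pi_G .
\]
By freezing, $p_k^G=p_{k-1}^G V_k^G$. Here
\[
 V_k^G:=\Pbb\bigl(R_k(Z)\le\hat t_k(X)\mid\mathcal F_k,E_{k-1}(Z),X\in G\bigr)
\]
on $\{p_{k-1}^G>0\}$, and $V_k^G:=1$ otherwise. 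We will check that conditioning on $\mathcal F_k$ instead of $\mathcal F_{k-1}$ is harmless for the test event, since the test unit is independent of fold $k$.

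Next, average over fold $k$. Since $E_{k-1}$ and $G$ are fixed given $\mathcal F_{k-1}$ and fold $k$ is independent of the test unit, the tower property gives
\[
 \E[V_k^G\mid\mathcal F_{k-1}]
 =\Pbb\bigl(R_k(Z)\le\hat t_k(X)\mid\mathcal F_{k-1},E_{k-1}(Z),X\in G\bigr)\ge\beta_k .
\]
This holds on histories with positive group-survival probability, and trivially where the probability is zero. As in the proof of Theorem~\ref{thm:coverage},
\[
 \E[p_k^G\mid\mathcal F_0]\ge\beta_k\,\E[p_{k-1}^G\mid\mathcal F_0].
\]
Iterating gives $\E[p_k^G\mid\mathcal F_0]\ge a_k\pi_G$. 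Dividing by $\pi_G>0$ yields the claim.

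The delicate point is the measurability bookkeeping. The calibrator $\hat t_k(\cdot)$ depends on $x$, so $V_k^G$ is an integral over the group of a random function. I would express $V_k^G$ via freezing as
\[
 V_k^G=\frac{P\bigl(E_{k-1}\cap\{x\in G,\ R_k\le\hat t_k(x)\}\bigr)}{P\bigl(E_{k-1}\cap\{x\in G\}\bigr)},
\]
evaluated at the realized $\omega$. This ratio is $\mathcal F_k$-measurable.

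For part (ii), fix a cell $A\in\mathcal A_k$ with $A\subseteq G$. Apply Lemma~\ref{lem:survivor-exchangeability} with the selection set $E_{k-1}\cap\{x\in A\}$, which is $\mathcal F_{k-1}$-measurable because $\mathcal A_k$ is. This shows that, given $\mathcal F_{k-1}$, $M_{k,A}=m$ and $E_{k-1}(Z)\cap\{X\in A\}$, the $m$ in-cell survivor scores and the test score are i.i.d.

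The tie-robust counting argument from the proof of Theorem~\ref{thm:coverage} then gives in-cell coverage at least $\lceil(m+1)\beta_k\rceil/(m+1)\ge\beta_k$. Under atomlessness, the uniform-rank argument gives equality. Averaging over $m$ preserves the lower bound.

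Finally, mix over the cells $A\subseteq G$, using weights $\Pbb(X\in A\mid\mathcal F_{k-1},E_{k-1}(Z),X\in G)$. This gives the group-level hypothesis of (i). Cells with zero conditional probability contribute nothing. If $M_{k,A}=0$, the threshold is $+\infty$, which again gives coverage one.
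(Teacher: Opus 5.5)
Your proposal is correct and matches the paper's proof. Part (ii) is the same argument: the survivor-exchangeability lemma applied cellwise with selection set $E_{k-1}\cap\{x\in A\}$, the tie-robust rank bound with equality in the atomless case, and summation over the cells partitioning $G$. Part (i) is the same tower-property recursion $\Pbb(E_k(Z),X\in G\mid\mathcal F_0)\ge\beta_k\Pbb(E_{k-1}(Z),X\in G\mid\mathcal F_0)$ started from $\pi_G$. The paper reads $\Pbb(E_k(Z),X\in G\mid\mathcal F_{k-1})\ge\beta_k\Pbb(E_{k-1}(Z),X\in G\mid\mathcal F_{k-1})$ directly off the hypothesis, so your $\mathcal F_k$-level factorization $p_k^G=p_{k-1}^GV_k^G$ and its measurability bookkeeping are correct but not needed.
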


\begin{proof}
We first compose the assumed guarantees within a group. We then verify
that the Mondrian construction supplies those guarantees cell by cell.

\paragraph{Part (i): compose the groupwise probabilities.}
Write $\mathsf E_k^G:=E_k(Z)\cap\{X\in G\}$. Since
$\mathsf E_k^G=\mathsf E_{k-1}^G\cap\{R_k(Z)\le\hat t_k(X)\}$, the hypothesis
gives
\[
 \Pbb(\mathsf E_k^G\mid\mathcal F_{k-1})
 \ge\beta_k\Pbb(\mathsf E_{k-1}^G\mid\mathcal F_{k-1})
\]
on the positivity set, and both sides vanish elsewhere because
$\mathsf E_k^G\subseteq\mathsf E_{k-1}^G$. Taking $\E[\cdot\mid\mathcal F_0]$
with $\mathcal F_0\subseteq\mathcal F_{k-1}$ gives
\[
 y_k\ge\beta_ky_{k-1},\qquad
 y_k:=\Pbb(\mathsf E_k^G\mid\mathcal F_0),\qquad y_0=\pi_G.
\]
Induction yields $y_k\ge a_k\pi_G$. Dividing by $\pi_G>0$ proves (i).

\paragraph{Part (ii): verify the rank guarantee in each cell.}
Fix a cell $A$. Given $\mathcal F_{k-1}$, the fold-$k$ units fall in
$E_{k-1}\cap\{x\in A\}$ independently, and this selection set is
$\mathcal F_{k-1}$-measurable. The proof of
Lemma~\ref{lem:survivor-exchangeability} applies verbatim with this selection
set: given $\mathcal F_{k-1}$, $M_{k,A}=m$ and $E_{k-1}(Z)\cap\{X\in A\}$, the $m$
cell scores and $R_k(Z)$ are i.i.d. The exchangeable rank bound of
Appendix~\ref{app:proof-reuse-boundary} gives conditional coverage at least
$\lceil(m+1)\beta_k\rceil/(m+1)\ge\beta_k$, with equality in the atomless case
as in Theorem~\ref{thm:coverage}. Given $\mathcal F_{k-1}$, the cell count is
independent of the test event, so averaging over $M_{k,A}$ gives
\[
 \Pbb(R_k(Z)\le\hat t_k(X),E_{k-1}(Z),X\in A\mid\mathcal F_{k-1})
 \ge\beta_k\Pbb(E_{k-1}(Z),X\in A\mid\mathcal F_{k-1}).
\]
The cells $A\subseteq G$
partition $G$ because $\mathcal A_k$ refines $\mathcal P$; summing over them
gives the hypothesis of (i) for $G$.
\end{proof}

\begin{remark}[Group slack]
\label{rem:group-slack}
Assume atomless cell laws, let $p_{j,A}=\Pbb(E_{j-1}(Z),X\in A\mid\mathcal F_{j-1})$,
and let $J_{j,G}$ be the number of stage-$j$ cells inside $G$. Then
\[
 0\le\Pbb(E_k\mid G,\mathcal F_0)-a_k
 \le\frac1{\pi_G}\sum_{j\le k}\frac{a_k}{a_j}\,
 \E\Bigl[\sum_{A\subseteq G}\min\Bigl\{p_{j,A},\frac1{N_j+1}\Bigr\}\Bigm|\mathcal F_0\Bigr]
 \le\frac1{\pi_G}\sum_{j\le k}\frac{a_k}{a_j}\frac{\E[J_{j,G}\mid\mathcal F_0]}{N_j+1}.
\]
\emph{Proof.} First bound the slack contributed by one cell; then sum
within a group and iterate across stages. Given $\mathcal F_{j-1}$, $M_{j,A}\sim\operatorname{Binomial}(N_j,p_{j,A})$,
and by (ii),
\[
 \begin{aligned}
 \Pbb(E_j(Z),X\in A\mid\mathcal F_{j-1})
 &=p_{j,A}\E[g_j(M_{j,A})\mid\mathcal F_{j-1}]\\
 &\le\beta_jp_{j,A}+\frac{1-(1-p_{j,A})^{N_j+1}}{N_j+1}.
 \end{aligned}
\]
Bernoulli's inequality
$(1-p)^{N+1}\ge1-(N+1)p$ bounds the last term by $\min\{p_{j,A},1/(N_j+1)\}$.
Summing over $A\subseteq G$ and taking $\E[\cdot\mid\mathcal F_0]$ gives
$y_j\le\beta_jy_{j-1}+c_j$, where
\[
 y_j:=\Pbb(E_j(Z),X\in G\mid\mathcal F_0),\qquad
 c_j:=\E\Bigl[\sum_{A\subseteq G}
 \min\{p_{j,A},1/(N_j+1)\}\Bigm|\mathcal F_0\Bigr].
\]
Subtracting $a_j\pi_G=\beta_ja_{j-1}\pi_G$ gives
\[
 \tilde y_j:=y_j-a_j\pi_G,\qquad
 \tilde y_j\le\beta_j\tilde y_{j-1}+c_j,\qquad \tilde y_0=0.
\]
Iterating yields $\tilde y_k\le\sum_{j\le k}(a_k/a_j)c_j$.
Divide by $\pi_G>0$ to obtain the first upper bound. Each of the
$J_{j,G}$ summands in $c_j$ is at most $1/(N_j+1)$, giving the second.
The lower bound is part (i). The
price $J_{j,G}/(N_j\pi_G)$ grows as $\pi_G\to0$, in line with
\citet{barber2021limits}. \qed
\end{remark}

\begin{remark}[Cell budget and design caveats]
\label{rem:cell-budget}
A cell has a finite threshold only if it holds at least $m_k^*$ survivors, by
(C3) applied within the cell. Cells are disjoint, so at most
$\lfloor M_k/m_k^*\rfloor$ stage-$k$ cells can have finite thresholds. With the
median survivor counts $(334,302,289)$ of the main design
(Appendix~\ref{app:learner-benchmark}) and $m^*=(10,19,18)$, this is at most
$(33,15,16)$ cells. The pools must be $\mathcal F_{k-1}$-measurable
partitions: calibration pools centred at the test context, such as
$k$-nearest-neighbour pools, do not satisfy the hypothesis of (i) without a
correction. The composition principle is Mondrian calibration composed
through outcome-dependent survival; its novelty is modest.
\end{remark}

\section{Formal Verification}
\label{app:lean-verification}

Parts of the theory are checked in Lean~4 with Mathlib. This appendix lists
which claim is linked to which Lean declaration and what exactly is checked.
\emph{The paper is not fully formally verified.} Three statuses are used.
\begin{itemize}
\item \textbf{Kernel-checked}: the displayed statement, as formalized, is
proved in Lean from Mathlib alone, with no additional hypotheses beyond those
stated in the claim.
\item \textbf{Core with named hypotheses}: the deterministic or algebraic core
of the claim is proved in Lean; the probabilistic steps it relies on enter the
Lean statement as explicitly named hypotheses and are proved only on paper.
\item \textbf{Cited}: a classical result that we cite and do not formalize.
\end{itemize}
The measure-theoretic steps that enter as named hypotheses include the
exchangeability of survivors after predictable selection
(Lemma~\ref{lem:survivor-exchangeability}), the kernel composition with regular
conditional laws behind the universality in
Theorem~\ref{thm:coverage-law}, the Dvoretzky--Kiefer--Wolfowitz and Hoeffding
inequalities, stochastic monotonicity of the binomial law, differentiation of
expected costs under the integral sign, and the Clark--Scarf optimality
equations of Theorem~\ref{thm:newsvendor}.

\begin{center}
\resizebox{\linewidth}{!}{%
\begin{minipage}{1.25\linewidth}
\begin{appendixtabular}{p{0.21\linewidth}p{0.57\linewidth}p{0.18\linewidth}}
\toprule
Claim & Lean declaration (namespace \texttt{ConformalInventory}) & Status\\
\midrule
Prop.~\ref{prop:event-characterization} (event characterization)
 & \path{event_characterization_backlog_max},
   \path{event_characterization_no_backlog_iff}
 & kernel-checked\\
Lem.~\ref{lem:survivor-exchangeability} (exchangeability after selection)
 & not formalized
 & named hypothesis\\
Thm.~\ref{thm:coverage} (sequential coverage)
 & \path{sequential_coverage_prefix_lower},
   \path{sequential_coverage_stage_rank}
 & core; hypotheses: chain rule, exchangeability\\
Prop.~\ref{prop:prefix-slack} (finite-rank slack)
 & \path{finite_rank_slack_prefix}
 & core; hypotheses: stagewise bounds\\
Thm.~\ref{thm:coverage-law} (universal coverage law)
 & \path{universal_coverage_law_slack_identity}
 & core; hypothesis: stage identity\\
 & \path{universal_coverage_law_abstention_iff},
   \path{universal_coverage_law_m_star}
 & kernel-checked: finiteness iff $M_k\ge m_k^*$ and the closed form of
   $m_k^*$ in (C3); universality of the abstention law on paper\\
Prop.~\ref{prop:reuse-boundary} (reuse boundary)
 & \path{reuse_boundary_weak_reuse_exact},
   \path{reuse_boundary_weak_reuse_ten_over_forty_one}
 & core; hypotheses: exchangeable, distinct scores\\
 & \path{reuse_boundary_strict_reuse_count},
   \path{reuse_boundary_rank_domination},
   \path{reuse_boundary_finite_budget}
 & kernel-checked (deterministic parts)\\
Prop.~\ref{prop:inventory-coverage} (inventory service profile)
 & \path{inventory_service_profile}
 & core; hypotheses: chain rule, stage rank bound\\
Prop.~\ref{prop:common-shift} (common-shift identity)
 & \path{common_shift_identity_pathwise}
 & core; pathwise derivative only (integration under $\E$ on paper)\\
 & \path{common_shift_identity_nested_fractiles_stationary}
 & core; hypotheses: differentiation under $\E$, factorization\\
Prop.~\ref{prop:bonferroni-inconsistency} (Bonferroni inconsistency)
 & \path{bonferroni_cost_inconsistency_split},
   \path{bonferroni_cost_inconsistency_shift_lower}
 & core; hypotheses: fractile equations, Lipschitz residual cdf\\
Thm.~\ref{thm:unified-cost} (unified excess cost)
 & \path{unified_excess_cost_stage_error},
   \path{unified_excess_cost_forward_substitution}
 & core; hypotheses: DKW event, margins, oscillation (Part~I only)\\
Prop.~\ref{prop:calibration-budget} (calibration budget)
 & \path{calibration_budget_necessary}
 & core; hypotheses: Markov step, per-stage finiteness requirement\\
Prop.~\ref{prop:blocked-reuse} (blocked predictable reuse)
 & \path{blocked_predictable_reuse_slack}
 & core; hypothesis: block-level slack step (derived on paper from block
   exchangeability)\\
Prop.~\ref{prop:group-profile} (group-conditional profile)
 & \path{group_conditional_prefix_profile}
 & core; hypothesis: stagewise group validity\\
Thm.~\ref{thm:newsvendor} (Clark--Scarf equations)
 & \path{newsvendor_fractile_telescoping} (ratio identity only)
 & cited \citep{vanhoutum2006multiechelon}\\
Props.~\ref{prop:cost-rates} and~\ref{prop:derivative-feasibility},
Remarks~\ref{rem:certified-ranks}, \ref{rem:group-slack} and~\ref{rem:cell-budget}
 & not formalized
 & paper proof only\\
\bottomrule
\end{appendixtabular}
\end{minipage}}
\end{center}

Supporting lemmas that are not paper claims are also kernel-checked. They
include:
\begin{itemize}
\item the backlog recursion (\path{customerBacklog_eq_max});
\item the ceiling-rank sandwich (\path{rank_level_sandwich});
\item the binomial reciprocal moment (\path{binomial_inv_succ_mean});
\item the conditional-expectation composition step
  (\path{integral_prod_ge_of_condExp_ge}); and
\item the order-statistic counting lemma (\path{le_orderStat_iff}).
\end{itemize}
These lemmas are listed in the axiom check below.

In total, $78$ declarations are listed in \texttt{AxiomCheck.lean}: every declaration
named in the table above, together with supporting lemmas.
Each paper-facing Lean declaration carries a docstring that begins with the
bracketed title and label of the claim it formalizes, and the claims carry the
corresponding declaration names, so the link can be checked mechanically in
both directions.

\paragraph{Versions and reproduction.}
The Lean project in \texttt{lean/} uses Lean~4.34.1 and Mathlib~v4.34.1. From
that directory, \texttt{lake build} compiles every file, and
\texttt{lake env lean AxiomCheck.lean} prints the axioms used by each listed
declaration. The check reports only Lean's standard axioms
\texttt{propext}, \texttt{Classical.choice} and \texttt{Quot.sound}; in
particular no declaration depends on \texttt{sorryAx}. The absence of
\texttt{sorryAx} certifies that the Lean proofs are complete, not that the
Lean statements match the paper; the match is what the status column and the
named hypotheses describe.

\section{Declarations}
\label{app:declarations}

\paragraph{Data and code availability.}
The experiments use script-generated synthetic demand and the public UCI
Bike Sharing dataset under CC BY 4.0 (Appendix~\ref{app:real-data}).
The implementation, result artifacts, and theory-verification scripts are
included in the project repository, together with the source archive and
its checksum, feature and split definitions, fixed hyperparameters, and
replicate results. The sensitivity and real-demand outputs are
\texttt{sensitivity\_revision.json} and \texttt{bikeshare\_inventory.json};
the freshness ablation is stored in \texttt{freshness\_ablation.json}; the
fixed-budget mechanism study is \texttt{conformal\_mechanisms.json}.
Synthetic coverage evaluates the unmodified method unless projection is
explicitly stated.

\paragraph{Ethics declaration.}
The study uses no human participants, personal data, or animal subjects.

\ifdefined\arxivversion\else
\paragraph{Author contributions.}
Author-level contribution details are withheld during double-blind review and
will be restored in the camera-ready version.
\fi

\ifdefined\arxivversion
\paragraph{Competing interests.}
The authors declare no competing interests.
\else
\paragraph{Competing interests and funding.}
The authors declare no competing interests.
Identifying funding information, if applicable, is withheld during
double-blind review.
\fi

\paragraph{AI-tool disclosure.}
The required AI Use Statement appears before the references in the main paper.

\end{document}